\newtheorem{theorem}{Theorem}[section]
\newtheorem{lemma}[theorem]{Lemma}
\newtheorem{observation}[theorem]{Observation}
\newtheorem{corollary}[theorem]{Corollary}
\newtheorem{proposition}[theorem]{Proposition}
\newtheorem{construction}{Construction}
\newtheorem{conjecture}{Conjecture}
\newcommand{\pname}[1]{\textnormal{\textsc{#1}}}
\newcommand{\cclass}[1]{\textnormal{\textsf{#1}}}
\newcommand{\nog}{nine} % no of members in the gang!
\newcommand{\nogd}{nineteen} % no of members in the gang - for deletion/completion
\newcommand{\simulates}{simulates} % 
\newcommand{\baseset}{base} % 
\newcommand{\issimulatedby}{is simulated by} % 
\newcommand{\HED}{\pname{${H}$-free Edge Deletion}}
\newcommand{\HEE}{\pname{${H}$-free Edge Editing}}
\newcommand{\HEC}{\pname{${H}$-free Edge Completion}}
\newcommand{\HDEE}{\pname{${H'}$-free Edge Editing}}
\newcommand{\HDDEE}{\pname{${H''}$-free Edge Editing}}
\newcommand{\HDED}{\pname{${H'}$-free Edge Deletion}}
\newcommand{\HDEC}{\pname{${H'}$-free Edge Completion}}
\newcommand{\HBEE}{\pname{${\overline{H}}$-free Edge Editing}}
\newcommand{\HBED}{\pname{${\overline{H}}$-free Edge Deletion}}
\newcommand{\HBEC}{\pname{${\overline{H}}$-free Edge Completion}}
\newcommand{\PFS}{\pname{Propagational-$f$ Satisfiability}}
\newcommand{\RHED}{\pname{Restricted ${H}$-free Edge Deletion}}
\newcommand{\RHEC}{\pname{Restricted ${H}$-free Edge Completion}}
\newcommand{\RHDED}{\pname{Restricted ${H'}$-free Edge Deletion}}
\newcommand{\RHDEC}{\pname{Restricted ${H'}$-free Edge Completion}}
\newcommand{\NOPH}{$\cclass{NP} \not\subseteq \cclass{coNP/poly}$}
\newcommand{\LG}{\mathcal{W}}
\newcommand{\LGD}{\mathcal{W}'}
\renewcommand\vee{\boxtimes}
\newcounter{rowcntr}[table]
\renewcommand{\therowcntr}{\thetable.\arabic{rowcntr}}
\newcolumntype{N}{>{\refstepcounter{rowcntr}\therowcntr}c}
\newcounter{rowcntra}[table]
\renewcommand{\therowcntra}{\arabic{rowcntra}}
\newcolumntype{M}{>{\refstepcounter{rowcntra}\therowcntra}c}
\journal{Journal of Computer and System Sciences}
\begin{document}

\begin{frontmatter}

%% Harvard
%\title{Elsevier \LaTeX\ template\tnoteref{mytitlenote}}

\title{Incompressibility of $H$-free edge modification problems: \\Towards a dichotomy\tnoteref{mytitlenote}}
\tnotetext[mytitlenote]{A preliminary version of this paper has appeared in the proceedings of European Symposium on Algorithms 2020.}

%% Group authors per affiliation:
%\author{Elsevier\fnref{myfootnote}}
%\address{Radarweg 29, Amsterdam}
%\fntext[myfootnote]{Since 1880.}
\author{D\'{a}niel Marx\fnref{myfootnote1}\corref{mycorrespondingauthor}}
\address{CISPA Helmholtz Center for Information Security, Germany}
\ead{marx@cispa.saarland}

\author{R.\,B.\, Sandeep\fnref{myfootnote1,myfootnote2}\corref{mycorrespondingauthor}}
\cortext[mycorrespondingauthor]{Corresponding author}
\address{Department of Computer Science and Engineering, Indian Institute of Technology Dharwad, India}
\ead{sandeeprb@iitdh.ac.in}
\fntext[myfootnote1]{Supported by the European Research Council (ERC) grant SYSTEMATICGRAPH: ``Systematic mapping of the complexity landscape of hard algorithmic graph problems", reference 725978}

\fntext[myfootnote2]{Supported by Science and Engineering Research Board (SERB) India grant SRG/2019/002276: ``Complexity dichotomies for graph modification problems"}
%\author{R.\,B.\, Sandeep}{Department of Computer Science and Engineering, Indian Institute of Technology Dharwad, India}{sandeeprb@iitdh.ac.in}{}{Partially supported by SERB Grant SRG/2019/002276: ``Complexity Dichotomies for Graph Modification Problems"}
%% or include affiliations in footnotes:
%\author[mymainaddress,mysecondaryaddress]{Elsevier Inc}
%\ead[url]{www.elsevier.com}

%\author[mysecondaryaddress]{Global Customer Service\corref{mycorrespondingauthor}}
%\cortext[mycorrespondingauthor]{Corresponding author}
%\ead{support@elsevier.com}

%\address[mymainaddress]{1600 John F Kennedy Boulevard, Philadelphia}
%\address[mysecondaryaddress]{360 Park Avenue South, New York}

\begin{abstract}
 Given a graph $G$ and an integer $k$, the \HEE\ problem is to find whether there exist at most $k$ pairs 
  of vertices in $G$ such that changing the adjacency of the pairs in~$G$ results in a graph without any
  induced copy of $H$. The existence of polynomial kernels for \HEE\ (that is, whether it is possible to reduce the size of the instance  to $k^{O(1)}$ in polynomial time) received significant attention in the parameterized complexity literature. Nontrivial polynomial kernels are known to exist for some graphs~$H$ with at most 4 vertices (e.g., path on 3 or 4 vertices, diamond, paw), but starting from 5 vertices, polynomial kernels are known only if $H$ is either complete or empty. This suggests the conjecture that there is no other $H$ with at least 5 vertices where \HEE\ admits a polynomial kernel. Towards this goal, 
  we obtain a set $\mathcal{H}$
  of \nog\ 5-vertex graphs 
  such that if for every $H\in\mathcal{H}$, \HEE\ is incompressible and the complexity assumption \NOPH\ holds, then \HEE\ is incompressible for every graph $H$
  with at least five vertices that is neither complete nor empty. That is, proving incompressibility for these \nog\ graphs would give a complete classification of the kernelization complexity of \HEE\ for every $H$ with at least 5 vertices.
  
  We obtain similar result also for \HED. Here the picture is more complicated due to the existence of another infinite family of graphs $H$ where the problem is trivial (graphs with exactly one edge). We obtain a larger set $\mathcal{H}$ of \nogd\ graphs whose incompressibility would give a complete classification of the kernelization complexity of \HED\ for every graph $H$ with at least 5 vertices. Analogous results follow also for the \HEC\ problem by simple complementation.
\end{abstract}

\begin{comment}
  Given a graph $G$ and an integer $k$, the \HEE\ problem is to find whether there exists at most $k$ pairs 
  of vertices in $G$ such that changing the adjacency of the pairs in $G$ results in a graph without any
  induced copy of $H$. The problems \HED\ and \HEC\ are defined similarly, where only deletion of edges are allowed in the 
  former and only completion of edges are allowed in the latter. Building upon the dichotomy results characterizing the polynomial-time solvable
  and NP-hard cases 
  of these problems (Aravind et al., SIAM J. Discrete Math., 2017), we obtain dichotomy results on the 
  incompressibility of these problems for regular graphs $H$. We prove that for regular graphs $H$, these problems are 
  incompressible if and only if $H$ is neither complete nor empty, where the incompressibility assumes \NOPH. 
  Further, for \HEE\, we obtain a set $\mathcal{H}$
  of \nog\ 5-vertex graphs 
  such that, if for every $H\in\mathcal{H}$, \HEE\ is incompressible, then \HEE\ is incompressible for every graph $H$
  with at least five vertices but is neither complete nor empty, assuming \NOPH.  
\end{comment}

%\begin{abstract}
%This template helps you to create a properly formatted \LaTeX\ manuscript.
%\end{abstract}
%\keywords{incompressibility, edge modification problems, H-free graphs}
\begin{keyword}
incompressibility\sep edge modification problems\sep H-free graphs
%\MSC[2010] 00-01\sep  99-00
\end{keyword}

\end{frontmatter}

%\linenumbers

%\newpage
\section{Introduction}
\label{sec:intro}

In a typical graph modification problem, the input is a graph $G$ and an integer $k$, and the task is to perform at most $k$ allowed editing operations on $G$ to make it belong to a certain graph class or satisfy a certain property. 
For example, \pname{Vertex Cover} (remove~$k$ vertices to make the graph edgeless), 
\pname{Feedback Vertex Set} (remove~$k$ vertices to make the graph acyclic), 
\pname{Odd Cycle Transversal} (remove $k$ edges/vertices to make the graph bipartite), %and 
\pname{Minimum Fill-in} (add $k$ edges to make the graph chordal), and
\pname{Cluster Editing} (add/remove $k$ edges to make the graph a disjoint union of cliques)
are particularly well-studied members of this problem family. 
Most natural graph modification problems are known to be NP-hard, 
%in fact, there are general hardness results proving hardness for many problems
in fact, there are general complexity results showing that large families of problems are hard
\cite{DBLP:journals/jcss/LewisY80,DBLP:journals/siamcomp/Yannakakis81,DBLP:journals/siamcomp/Yannakakis81a}. On the other hand, most of these problems are fixed-parameter tractable (FPT) parameterized by $k$: it can be solved in time $f(k)n^{O(1)}$, where $f$ is a computable function depending only on $k$ \cite{DBLP:journals/ipl/Cai96,DBLP:journals/algorithmica/CaoM16,DBLP:journals/jcss/GuoGHNW06,DBLP:conf/stoc/KawarabayashiR07}. 
Looking at the parameterized complexity literature, one can observe that, even though there are certain recurring approaches and techniques, these FPT results are highly problem specific, and often rely on a very detailed understanding of the graph classes at hand.   

A class of problems that can be treated somewhat more uniformly is \HEE. This is a separate problem for every fixed graph~$H$: given a graph~$G$ and an integer $k$, the task is to find whether there exist at most $k$ pairs of vertices in $G$ such that changing the adjacency of the pairs in $G$ results in a graph without any induced copy of~$H$. Aravind et al.~\cite{AravindSS17} proved that \HEE\ is NP-hard for every graph $H$ with at least 3 vertices. However, a simple application of the technique of bounded-depth search trees shows that \HEE\ is FPT parameterized by $k$ for every fixed~$H$ \cite{DBLP:journals/ipl/Cai96}. 

Graph modification problems were explored also from the viewpoint of polynomial kernelization: is there a polynomial-time preprocessing algorithm that does not necessarily solve the problem, but at least reduces the size of the instance to be bounded by a polynomial of $k$? The existence of a polynomial kernelization immediately implies that the problem is FPT (after the preprocessing, one can solve the reduced instance by brute force or any exact method). Therefore, one can view polynomial kernelization as a special type of FPT result that tries to formalize the question whether the problem can be efficiently preprocessed in a way that helps exhaustive search methods.
There is a wide literature on algorithms for kernelization (see, e.g., \cite{fomin_lokshtanov_saurabh_zehavi_2019}). Conversely, incompressibility results can show, typically under the complexity assumption \NOPH, that a parameterized problem has no polynomial kernelization. 

Most of the highly nontrivial FPT algorithms for graph modification problems do not give kernelization results and, in many cases, it required significant amount of additional work to obtain kernelization algorithms. In particular, the FPT algorithm for \HEE\ based on the technique of bounded-depth search trees does not give polynomial kernels. For the specific case when $H=K_r$ is a complete graph, it is easy to see that there is a solution using only deletions. Now the problem essentially becomes a \pname{Hitting Set} problem with sets of bounded size: we have to select at least one edge from the edge set of each copy of $K_r$. Therefore, known kernelization results for \pname{Hitting Set} can be used to show that \pname{$K_r$-free Edge Editing} has a polynomial kernel for every fixed~$r$. A similar argument works if $H$ is an empty graph on $r$ vertices.

Besides cliques and empty graphs, it is known for certain graphs $H$ of at most 4 vertices (diamond \cite{cai2012polynomial, CRSY18polynomial}, path \cite{CaoC12, GrammGHN03,GuillemotHPP13}, paw \cite{CaoKY20, EibenLS15}, and their complements) that \HEE\ has a polynomial kernel, but these algorithms use very specific arguments exploiting the structure of $H$-free graphs. As there is a very deep known structure theory of claw-free (i.e., $K_{1,3}$-free) graphs~\cite{DBLP:journals/jct/ChudnovskyS07a,DBLP:journals/jct/ChudnovskyS08a,DBLP:journals/jct/ChudnovskyS08b,DBLP:journals/jct/ChudnovskyS08c,DBLP:journals/jct/ChudnovskyS08e,DBLP:journals/jct/ChudnovskyS10,DBLP:journals/jct/ChudnovskyS12a}, it might be possible to obtain a polynomial kernel for \pname{Claw-free Edge Editing}, but this is currently a major open question \cite{CaiC15incompressibility,crespelle2020survey,CyganPPLW17}. However, besides cliques and empty graphs, no $H$ with at least 5 vertices is known where \HEE\ has a polynomial kernel and there is no obvious candidate $H$ for which one would expect a kernel. This suggests the following conjecture:
\begin{conjecture}
\label{conj:editing}
If $H$ is a graph with at least 5 vertices, then \HEE\ has a polynomial kernel if and only if $H$ is a complete or empty graph.
\end{conjecture}
We are not able to resolve this conjecture, but make substantial progress towards it by showing that only a finite number of key cases needs to be 
understood. Our main result for \HEE\ is the following.

\begin{theorem}
\label{thm:main-editing}
There exists a set $\mathcal{H}^E$ of \nog\ graphs, each with five vertices, such that if \HEE\ is incompressible for every $H\in \mathcal{H}^E$, then for a graph $H$ with at least five vertices \HEE\ is incompressible if and only if $H$ is neither complete nor empty, where the incompressibility assumes \NOPH. 
\end{theorem}
The set $\mathcal{H}^E$ of \nog\ graphs are shown in Figure~\ref{table:tough-gang}. Note that a simple reduction by complementation shows that \HEE\ and \HBEE\ have the same complexity. Therefore, for each of these \nog\ graphs, we could put either it or its complement into the set $\mathcal{H}^E$. As it will be apparent later, we made significant efforts to reduce the size of $\mathcal{H}^E$ as much as possible. However, the known techniques for proving incompressibility do not seem to work for these graphs. Let us observe that most of these graphs are very close to the known cases that admit a polynomial kernel: for example, they can be seen as a path, paw, or diamond with an extra isolated vertex or with an extra degree-1 vertex attached. Thus resolving the kernelization complexity of \HEE\ for any of these remaining graphs seems to be a particularly good research question: either one needs to extend in a nontrivial way the known kernelization results, or significant new ideas are needed for proving hardness.

The reader might not be convinced of the validity of Conjecture~\ref{conj:editing} and may wonder about the value of Theorem~\ref{thm:main-editing} when the conjecture is false. However, we can argue that Theorem~\ref{thm:main-editing} is meaningful even in this case. It shows that if there is any $H$ violating Conjecture~\ref{conj:editing}, then one of the 9 graphs in $\mathcal{H}^E$ also violates it. That is, if we believe that there are kernelization results violating the conjecture, then we should focus on the 9 graphs in $\mathcal{H}^E$, as these are the easiest cases where we may have a kernelization result. In other words, Theorem~\ref{thm:main-editing} precisely shows the frontier where new algorithmic results are most likely to exist.

%\begin{table}
\begin{figure}
\begin{center}
\scalebox{0.8}{
\begin{tabular}{| c | c | c | c | c | c | c | c | c | c | c |}
\cline{1-3} \cline{5-7} \cline{9-11}
\# & \textbf{$H$} & \fbox{\textbf{$\overline{H}$}} & & \# & \textbf{$H$} & \textbf{$\overline{H}$} & & \# & \textbf{$H$} & \textbf{$\overline{H}$}\\ \cline{1-3} \cline{5-7} \cline{9-11}
1 & \input{figs/gang/521} & \input{figs/gang/c521} & & 
  4 & \input{figs/gang/534} & \input{figs/gang/c534} & & 
    7 & \input{figs/gang/543} & \input{figs/gang/c543} \\ \cline{1-3} \cline{5-7} \cline{9-11}
2 & \input{figs/gang/531} & \input{figs/gang/c531} & &
  5 & \input{figs/gang/541} & \input{figs/gang/c541} & & 
    8 & \input{figs/gang/551} & \input{figs/gang/c551} \\ \cline{1-3} \cline{5-7} \cline{9-11}
3 & \input{figs/gang/533} & \input{figs/gang/c533} & & 
  6 & \input{figs/gang/542} & \input{figs/gang/c542} & & 
    9 & \input{figs/gang/552} & same\\ \cline{1-3} \cline{5-7} \cline{9-11}
\end{tabular}}
\end{center}
\caption{The set $\mathcal{H}$ of graphs}
\label{table:tough-gang}
\end{figure}

\begin{figure}
\begin{center}
\scalebox{0.8}{
\begin{tabular}{| c | c | c | c | c | c | c | c | c | c | c |}
\cline{1-3} \cline{5-7} \cline{9-11}
\# & \textbf{$A$} & \fbox{\textbf{$\overline{A}$}} & & \# & \textbf{$A$} & \textbf{$\overline{A}$} & & \# & \textbf{$A$} & \textbf{$\overline{A}$}\\ \cline{1-3} \cline{5-7} \cline{9-11}
1 & \input{figs/gang/532} & \input{figs/gang/c532} & & 
  4 & \input{figs/gang/8141} & same & & 
    7 & \input{figs/gang/663} & \input{figs/gang/c663} \\ \cline{1-3} \cline{5-7} \cline{9-11}
2 & \input{figs/gang/661} & \input{figs/gang/c661} & &
  5 & \input{figs/gang/9181} & same & & 
    8 & \input{figs/gang/672} & \input{figs/gang/c672} \\ \cline{1-3} \cline{5-7} \cline{9-11}
3 & \input{figs/gang/673} & \input{figs/gang/c673} & & 
  6 & \input{figs/gang/662} & \input{figs/gang/c662} & & 
    9 & \input{figs/gang/791} & \input{figs/gang/c791}\\ \cline{1-3} \cline{5-7} \cline{9-11}
\end{tabular}}
\end{center}
\caption{The set $\mathcal{A}$ of graphs}
\label{table:soft-gang}
\end{figure}

\begin{figure}
\begin{center}
\scalebox{0.8}{
\begin{tabular}{| c | c | c | c | c | c | c | c | c | c | c |}
\cline{1-3} \cline{5-7} \cline{9-11}
\# & \textbf{$D$} & \fbox{\textbf{$\overline{D}$}} & & \# & \textbf{$B$} & \textbf{$\overline{B}$} & & \# & \textbf{$B$} & \textbf{$\overline{B}$}\\ \cline{1-3} \cline{5-7} \cline{9-11}
1 & \input{figs/gang/s31} & \input{figs/gang/s31c} & & 
  1 & \input{figs/gang/clawuk2} & \input{figs/gang/clawuk2c} & & 
    3 & \input{figs/gang/z8135} & \input{figs/gang/z8135c} \\ \cline{1-3} \cline{5-7} \cline{9-11}
2 & \input{figs/gang/664} & \input{figs/gang/c664} & &
  2 & \input{figs/gang/782} & \input{figs/gang/c782} & & 
     &  &  \\ \cline{1-3} \cline{5-7} \cline{9-11}
\end{tabular}}
\end{center}
\caption{The sets $\mathcal{D}$ and $\mathcal{B}$ of graphs}
\label{table:deletion-gang}
%\end{table}
\end{figure}

\HED\ is the variant of \HEE\ where only edge removal is allowed. For the same fixed graph $H$, it seems that \HED\ should be a simpler problem than \HEE, but we want to emphasize that \HED\ is \textit{not} a special case of \HEE. There is no known general reduction from the former to the latter, although the technique of completion enforcers (see Section~\ref{sec:cai} and \cite{CaiC15incompressibility}) can be used for many specific graphs $H$.\footnote{Interestingly, going beyond H-free graphs we have evidence that such reductions are impossible: \pname{Split Editing} (Split graphs are the class of $\{2K_2,C_4,C_5\}$-free graphs) is known to be polynomial-time solvable~\cite{DBLP:journals/combinatorica/HammerS81}, while \pname{Split Deletion} is NP-hard~\cite{DBLP:journals/dam/NatanzonSS01}.} 
There is a known case where \HED\ seems to be strictly easier: if $H$ has at most one edge, then there is only one way of destroying a copy of an induced $H$ by edge removal, making the problem polynomial-time solvable. Aravind et al.~\cite{AravindSS17} showed that having at most one edge is the only condition that makes \HED\ polynomial-time solvable: if $H$ has at least two edges, then the problem is NP-hard. Therefore, the counterpart of Conjecture~\ref{conj:editing} for \HED\ should take this case also into account.

\begin{conjecture}
\label{conj:deletion}
If $H$ is a graph with at least 5 vertices, then \HED\ has a polynomial kernel if and only if $H$ is a complete graph or has at most one edge.
\end{conjecture}
Working toward this conjecture, we show that only a finite number of cases needs to be shown incompressible. 
\begin{theorem}
\label{thm:main-deletion}
There exists a set $\mathcal{H}^D$ of \nogd\ graphs, each with either five or six vertices such that if \HED\ is incompressible for every $H\in \mathcal{H}^D$ then for a graph $H$ with at least five vertices, \HED\ is incompressible if and only if $H$ is a graph with at least two edges but not complete, where the incompressibility assumes \NOPH. 
\end{theorem}
The set $\mathcal{H}^D$ contains the graphs in set $\mathcal{H}^E$, as well as their complements. This seems reasonable and hard to avoid: if we do not have an incompressibility result for \HEE\ for some $H\in \mathcal{H}^E$, then it is unlikely that we can find such a result for \HED\ (even though, as discussed above, there is no formal justification for this). Together with these 17 graphs (note that $H_9$ is the same as its complement), we need to include into $\mathcal{H}^D$ the two graphs $D_1$ and $D_2$ shown in Figure~\ref{table:deletion-gang}. 
%\daniel{I think there is not much difference between figures and tables. I would make every table a figure to make the numbering less confusing.} \sandeep{done} 
In the case of editing, we can prove incompressibility for these two graphs by a reduction from \HEE\ where $H$ is the graph with 5 vertices and one edge. However, \HED\ for this $H$ is polynomial-time solvable.

Finally, let us consider the \HEC\ problem, where we have to make $G$ induced $H$-free by adding at most $k$ edges. As \HEC\ is essentially the same problem as \HBED, we can obtain a counterpart of Theorem~\ref{thm:main-deletion} by simple complementation:
\begin{theorem}
\label{thm:main-completion}
    There exists a set $\mathcal{H}^C$ of \nogd\ graphs, each with either five or six vertices such that if \HEC\ is incompressible for every $H\in \mathcal{H}^C$ then for a graph $H$ with at least five vertices, \HEC\ is incompressible if and only if $H$ is a graph with at least two nonedges but not empty, where the incompressibility assumes \NOPH. 
\end{theorem}

\textbf{Our techniques.}
We crucially use two earlier results. First, Cai and Cai \cite{CaiC15incompressibility} proved that \HEE\ is incompressible (assuming \NOPH) when $H$ or~$\overline{H}$ is a cycle or a path of length at least 4, or 3-connected but not complete. While these result handle many graphs and prove to be very useful for our proofs, they do not come close to a complete classification. Second, we use a key tool in the polynomial-time dichotomy result of Aravind et al.~\cite{AravindSS17}: if $V_\ell$ is the set of lowest degree vertices of $H$, then \pname{$(H-V_\ell)$-free Edge Editing} can be reduced to \HEE. The same statement holds for the set $V_h$ of highest degree vertices.

Our proofs of Theorems~\ref{thm:main-editing}--\ref{thm:main-completion} introduce new incompressibility results and new reductions, which we put together to obtain an almost complete classification by a graph-theoretic analysis.  Additionally, to make the arguments simpler, we handle small graphs by an exhaustive computer search. In the following, we highlight some of the main ideas that appear in the paper.

\begin{itemize}
\item \textbf{Analysis of graphs.} Our goal is to prove Theorem~\ref{thm:main-editing} by induction on the size of~$H$. 
First we handle the case when $H$ is regular: 
we show that this typically implies that either $H$ or $\overline{H}$ is 3-connected, 
and the result of Cai and Cai \cite{CaiC15incompressibility} can be used. 
If $H$ is not regular, then the graphs $H-V_\ell$ and $H-V_h$ are nonempty and have stricly fewer vertices than $H$. 
If one of them, say $H-V_\ell$, has at least 5 vertices and is neither complete nor empty, 
then the induction hypothesis gives an incompressibility result for \pname{$(H-V_\ell)$-free Edge Editing}, 
which gives an incompressibility result for \HEE\ by the reduction of Aravind et al.~\cite{AravindSS17}. 
Therefore, we only need to handle those graphs $H$ where it is true for both $H-V_\ell$ and $H-V_h$ that they are either 
small, complete, or empty. But we can obtain a good structural understanding of $H$ in each of these cases, 
which allows us to show that either $H$ or~$\overline{H}$ is 3-connected, or $H$ has some very well 
defined structure. With these arguments, we can reduce the problem to the incompressibility of 
\HEE\ for a few dozen specific graphs $H$ and for a few well-structured infinite families (such as $K_{2,t}$).

For \HED, we have the additional complication that one or both of $H-V_\ell$ and $H-V_h$ can be near-empty (i.e., has exactly one edge), which is not an incompressible case for this problem. We need additional case analysis to cover such graphs, but the spirit of the proof remains the same.

\item\textbf{Computer search.} Our analysis of graphs becomes considerably simpler if we assume that $H$ is not too small. In this case, we can assume that at least one of $H-V_\ell$ and $H-V_h$ is a complete or empty graph of certain minimum size, which is a very helpful starting point for proving the 3-connectivity of $H$ or $\bar H$, respectively. Therefore, we handle every graph with at most 11 vertices using an exhaustive computer search and assume in the proof that $H$ has at least 12 vertices. The list provided by McKay~\cite{mckayGraphs} shows that there are 1031291298 different graphs with at most 11 vertices. Verifying all these graphs with a computer search was a nontrivial, but doable task. 
We ran the search parallelly on 45 threads of a computing server (2.5 GHz CPUs). It took around 24 hours to complete the search.
%[more details. which is feasible for a few hours of search on a computing server. ]
We remark that there is trade off between the amount of computer search done and the complexity of the proof. For example, in earlier versions of the paper, the exhaustive search was performed up to a smaller number of vertices, which meant that several small graphs needed to be treated as special cases in the proofs. In principle, it would be possible to extend our case analysis to completely avoid computer search, but it would significantly complicate the proof and is not clear what additional insight it would give.

\item\textbf{Reductions.}  We investigate different reductions that allow us to reduce \pname{$H'$-free Edge Editing} to \HEE\ when $H'$ is an induced subgraph of $H$ satisfying certain conditions. With extensive use of such reductions, we can reduce the remaining cases of \HEE\ that needs to be handled to a smaller finite set.

\item\textbf{Incompressibility results.} We carefully revisit the proof of Cai and Cai \cite{CaiC15incompressibility} showing the incompressibility of \HEE\ when $H$ is 3-connected, and observe that, with additional ideas, it can be made to work also for certain 2-connected graphs that are not 3-connected (the set $\mathcal{A}$ of graphs shown in Figure~\ref{table:soft-gang} and the set $\mathcal{B}$ of graphs shown in Figure~\ref{table:deletion-gang}). This allows us to handle every graph, except those finite sets that are mentioned in Theorems~\ref{thm:main-editing}--\ref{thm:main-completion}. A key step in many of these incompressibility results is to establish first incompressibility for the 
\RHED\ problem, which is the generalization of \HED\ where some of the edges of $G$ are marked as forbidden in the input, and the solution is not allowed to delete forbidden edges. Then we use deletion and completion enforcer gadgets specific to $H$ to reduce \RHED\ to \HEE.
\end{itemize}

The paper is organized as follows. Preliminaries are in Section~\ref{sec:prelim}. Section~\ref{sec:larger-gang} presents the churning procedure, our main technical tool in the analysis of graphs, and shows that it reduces the problem to a finite number of graphs, plus a few well-defined infinite families. Section~\ref{sec:smaller-gang} presents reductions (old and new) that allow us to further reduce the number of graphs we need to handle. Finally, in Setion~\ref{sec:cai}, we give new incompressibility results, showing that only the cases stated in Theorems~\ref{thm:main-editing}--\ref{thm:main-completion} need to be proved incompressible to complete the exploration of the complexity landscape of the problems.
%All proofs have been moved to a full version of the paper (which is appended) due to space constraints.

%These theorems are constructive, i.e., we give such sets $\mathcal{H}^E, \mathcal{H}^D$, and $\mathcal{H}^C$ of graphs, which are given in Table~\ref{table:tough-gang}, \ref{table:soft-gang}, and \ref{table:deletion-gang}. Specifically, we prove that $\mathcal{H}^E=\mathcal{H}$, $\mathcal{H}^D = \mathcal{H}\cup \overline{\mathcal{H}}\cup \mathcal{D}$, and $\mathcal{H}^C = \overline{\mathcal{H}^D}$.

%As a first step to obtain the main results, we obtain a complete dichotomy on the incompressibility of $H$-free edge modification problems for regular graphs $H$ in Section~\ref{sec:regular}. In Section~\ref{sec:gang}, we obtain a larger set with \nogl\ graphs such that proving the incompressibility for each of them implies a complete dichotomy for graphs with at least five vertices for the editing version of the problem. Then in Section~\ref{sec:cai}, we prove that \HEE\ is incompressible for \diffnog\ graphs in the set so that we obtain a reduced set $\mathcal{H}$ of \nog\ graphs.
\section{Preliminaries}
\label{sec:prelim}
\textbf{Graph-theoretic notation and terminology.}
For a graph $G$, $V(G)$ and $E(G)$ denote the set of vertices and the set of edges of $G$ respectively. 
For a set $V'\subseteq V(G)$, $G-V'$ denotes the graph obtained by removing all vertices in $V'$ and their incident
edges from $G$. For a set $F$ of pairs of vertices and a graph $G$, $G\triangle F$ denotes the graph $G'$ such that
$V(G') = V(G)$ and $E(G') = \{(u, v)\, |\, ((u,v)\in E(G)\, \text{and}\, (u,v) \notin F)\, \text{or}\, (u,v \in V(G),\, (u,v)\notin E(G),\, \text{and}\, (u,v) \in F) \}$. 
Similarly, $G-F$ denotes the graph $G'$ such that $V(G') = V(G)$ and $E(G') = E(G)\setminus F$, and $G+F$ denotes the graph $G'$ such that
$V(G') = V(G)$ and $E(G') = E(G)\cup \{(u,v)\, |\, u,v \in V(G)\, \text{and}\, (u,v)\in F\}$.
Whenever we say that a set of (non)edges $F$ is a solution of an instance $(G,k)$
of a problem, we refer to a subset of $F$ containing all (non)edges where both the end vertices are in $V(G)$. 
A graph is \textit{empty} if it does not have any edges. A graph is \textit{near-empty} if it has exactly one edge. A graph is \textit{complete} if it has no nonedges. 
%A component of a graph is a maximal connected subgraph of it. A component is called \textit{non-trivial}
%if it has at least two vertices, and \textit{trivial} otherwise. 
A component of a graph is a \textit{largest component} if it has maximum number of vertices among all components of the graph.
Similarly, a component of a graph is a \textit{smallest component} if it has minimum number of vertices among all components of the graph.
For a graph $H$ which is not complete, the \textit{vertex connectivity} of $H$ is the minimum integer $c$ such that there exists a set $S\subseteq V(H)$ such that $|S|=c$ and $H-S$ is disconnected.
For a complete graph on $n$ vertices, the vertex connectivity is defined to be $n-1$.
For a graph $H$ with vertex connectivity 1, a vertex $v$ in $H$ is known as a \textit{cut vertex} if $H-v$ is disconnected.
A graph is \textit{$k$-connected}, if its vertex connectivity is at least $k$.
An induced subgraph $H'$ of~$H$ is known as a \textit{2-connected component} if $H'$ is a maximal 2-connected induced subgraph of~$H$.
The adjectives `largest' and `smallest' can be applied to 2-connected components as done for components. 
A \textit{twin-star} graph $T_{\ell_1,\ell_2}$ for $\ell_1,\ell_2\geq 0$ is defined as the tree with two adjacent vertices $u$ and $v$ such that $|N(u)\setminus \{v\}|=\ell_1, |N(v)\setminus \{u\}|=\ell_2$, and 
every vertex in $N(u)\cup N(v)\setminus \{u,v\}$ has degree 1.
%A Hamiltonian cycle of a graph $G$ is a cycle containing all the vertices of $G$.
A graph $G$ is \textit{$H$-free} if $G$ does not contain any induced subgraph isomorphic to $H$. 
For two graphs $G_1$ and $G_2$, where $V(G_1)$ and $V(G_2)$ are disjoint, the \textit{disjoint union} of $G_1$ and $G_2$ denoted by $G_1\cup G_2$ (or $G_2\cup G_1$) is the graph $G$
such that $V(G) = V(G_1)\cup V(G_2)$ and $E(G) = E(G_1)\cup E(G_2)$.
For two graphs $G_1$ and $G_2$, the \textit{join} of $G_1$ and $G_2$ denoted by $G_1\vee G_2$ (or $G_2\vee G_1$), is the graph $G$ such that
$V(G) = V(G_1)\cup V(G_2)$ and $E(G) = E(G_1)\cup E(G_2)\cup \{(x,y)\, |\, x\in V(G_1),\, y\in V(G_2) \}$.
%For a rooted tree $T$, the height of the tree is the length of the longest path from the root to the leaves.
A complete graph, a cycle, and a path with $t$ vertices are denoted by $K_t, C_t,$ and $P_t$ respectively.
By $K_t-e$, we denote the graph obtained by deleting an edge from a complete graph on $t$ vertices. 
A graph is $r$-regular if the degree of each vertex is $r$. A graph is regular if it is $r$-regular for some integer $r\geq 0$.
We call a graph \textit{non-regular} if it is not regular.
A \textit{modular decomposition} $\mathcal{M}$ of a graph $G$ is a partitioning of its vertices into maximal sets, known as \textit{modules}, such that for every set $M\in \mathcal{M}$,
every vertex in $M$ has the same neighborhood outside $M$. 
Let $\mathcal{M'}\subseteq \mathcal{M}$. Let $V'=\bigcup_{M\in\mathcal{M'}}M$. Then we say that $\mathcal{M'}$ corresponds to $V'$. For a set $\mathcal{S}$ of graphs, by $\overline{\mathcal{S}}$ we denote the set of complements of graphs in $\mathcal{S}$. 
%We refer to the book \cite{book:intro} for the definition of Depth First Search (DFS) Trees and related terminology. 
Figure~\ref{fig:important-graphs}
shows all graphs with at most four vertices which are neither empty nor complete. %Our proofs make use of the fact that these graphs $H$ admit polynomial kernels for the problems (except the case of claw, for which no results are known). \daniel{Something is wrong with this sentence, not every graph in Fig 1 has a poly kernel.} \sandeep{commented the sentence.}

For $t\geq 3$, let $J_t$ be the graph obtained from $K_2\vee tK_1$ and $C_4$ by identifying an edge of $C_4$ with the edge between the highest degree vertices in $K_2\vee tK_1$. Let $Q_t$ be the graph graph obtained from $K_{2,t}$, for some $t\geq 3$, by adding a path of length three between the highest degree vertices in $K_{2,t}$.
Let $\mathcal{H}$, $\mathcal{A}$, $\mathcal{D}$, $\mathcal{B}$, $\mathcal{S}$ denote the graphs ($H$, $A$, $D$, $B$, $S$ respectively) shown in Figures~\ref{table:tough-gang}, \ref{table:soft-gang}, \ref{table:deletion-gang}, and \ref{table:shortlist-smallbig}. Let $\mathcal{F}$ be the union of graphs in the classes of graphs shown in column $\mathcal{F}$ of Figure~\ref{table:shortlist-class}. 
The graphs in $\mathcal{S}$ and $\mathcal{F}$ are handled in Section~\ref{sec:smaller-gang} and the graphs in $\mathcal{A}$ and $\mathcal{B}$ are handled in Section~\ref{sec:cai}.
For all these classes of graphs, we use subscripts to identify each graph/graph class. For example $H_1$ is $P_3\cup 2K_1$ and $\mathcal{F}_2$ is the class of graphs $K_{1,t}$. Let $\mathcal{W}$ be the set  $\mathcal{H}\cup \overline{\mathcal{H}}\cup \mathcal{A}\cup \overline{\mathcal{A}}\cup \mathcal{D}\cup\overline{\mathcal{D}}\cup \mathcal{B}\cup\overline{\mathcal{B}}\cup \mathcal{S}\cup\overline{\mathcal{S}}\cup \mathcal{F}\cup\overline{\mathcal{F}}$. We observe that $\overline{\LG} = \LG$.

\begin{figure}
  \centering
  \begin{subfigure}[b]{0.15\textwidth}
    \centering
    \input{figs/tiny/p3}
    \caption{$P_3$}
    \label{fig:p3}
  \end{subfigure}%
  \begin{subfigure}[b]{0.15\textwidth}
    \centering
    \input{figs/tiny/p3c}
    \caption{$\overline{P_3}$}
    \label{fig:p3c}
  \end{subfigure}%
  \begin{subfigure}[b]{0.15\textwidth}
    \centering
    \input{figs/tiny/p4}
    \caption{$P_4$}
    \label{fig:p4}
  \end{subfigure}%
  \begin{subfigure}[b]{0.15\textwidth}
    \centering
    \input{figs/tiny/claw}
    \caption{claw}
    \label{fig:claw}
  \end{subfigure}%
  \begin{subfigure}[b]{0.15\textwidth}
    \centering
    \input{figs/tiny/clawc}
    \caption{$\overline{\text{claw}}$}
    \label{fig:clawc}
  \end{subfigure}%
  
  \begin{subfigure}[b]{0.15\textwidth}
    \centering
    \input{figs/tiny/paw}
    \caption{paw}
    \label{fig:paw}
  \end{subfigure}
  \begin{subfigure}[b]{0.15\textwidth}
    \centering
    \input{figs/tiny/pawc}
    \caption{$\overline{\text{paw}}$}
    \label{fig:pawc}
  \end{subfigure}
  \begin{subfigure}[b]{0.15\textwidth}
    \centering
    \input{figs/tiny/diamond}
    \caption{diamond}
    \label{fig:diamond}
  \end{subfigure}
  \begin{subfigure}[b]{0.15\textwidth}
    \centering
    \input{figs/tiny/diamondc}
    \caption{$\overline{\text{diamond}}$}
    \label{fig:diamondc}
  \end{subfigure}
  \begin{subfigure}[b]{0.15\textwidth}
    \centering
    \input{figs/tiny/2k2}
    \caption{$2K_2$}
    \label{fig:2k2}
  \end{subfigure}
  \begin{subfigure}[b]{0.15\textwidth}
    \centering
    \input{figs/tiny/c4}
    \caption{$C_4$}
    \label{fig:c4}
  \end{subfigure}
  \caption{All non-empty and non-complete graphs with at most four vertices}
  \label{fig:important-graphs}
\end{figure}

\begin{figure}
\begin{center}
\scalebox{0.7}{
\begin{tabular}{| c | c | c | c | c | c | c | c | c | c | c | c | c | c | c |}
\cline{1-3} \cline{5-7} \cline{9-11} \cline{13-15}
\# & \textbf{$S$} & \fbox{\textbf{$\overline{S}$}} & & \# & \textbf{$S$} & \textbf{$\overline{S}$} & & \# & \textbf{$S$} & \textbf{$\overline{S}$} & & \# & \textbf{$S$} & \textbf{$\overline{S}$}\\ \cline{1-3} \cline{5-7} \cline{9-11} \cline{13-15}
1 & \input{figs/small/k23} & \input{figs/small/k23c} & &
  10 & \input{figs/small/z7101} & \input{figs/small/z7101c} & &
    19 & \input{figs/small/z891} & \input{figs/small/z891c} & &
      28 & \input{figs/large/z10181} & \input{figs/large/z10181c} \\
            \cline{1-3} \cline{5-7} \cline{9-11} \cline{13-15}
2 & \input{figs/small/c651} & \input{figs/small/651} & &
  11 & \input{figs/small/z781} & \input{figs/small/z781c} & &
    20 & \input{figs/small/z8171c} & \input{figs/small/z8171} & &
      29 & \input{figs/large/z11221} & \input{figs/large/z11221c} \\
            \cline{1-3} \cline{5-7} \cline{9-11} \cline{13-15}
3 & \input{figs/small/c671} & \input{figs/small/671} & &
  12 & \input{figs/small/z8141} & same & &
    21 & \input{figs/small/z8134} & \input{figs/small/z8134c} & &
      30 & \input{figs/large/z10231} & \input{figs/large/z10231c} \\
            \cline{1-3} \cline{5-7} \cline{9-11} \cline{13-15}
4 & \input{figs/small/claw-2k1} & \input{figs/small/claw-2k1c} & &
  13 & \input{figs/small/z8101} & \input{figs/small/z8101c} & &
      22 & \input{figs/small/z8121} & \input{figs/small/z8121c} & &
        31 & \input{figs/large/z10232} & \input{figs/large/z10232c} \\
              \cline{1-3} \cline{5-7} \cline{9-11} \cline{13-15}
5 & \input{figs/small/z671} & \input{figs/small/z671c} & &
  14 & \input{figs/small/z8151c} & \input{figs/small/z8151} & &
      23 & \input{figs/small/z9211c} & \input{figs/small/z9211} & &
        32 & \input{figs/large/z11341} & \input{figs/large/z11341c} \\
              \cline{1-3} \cline{5-7} \cline{9-11} \cline{13-15}
6 & \input{figs/small/paw-2k1} & \input{figs/small/paw-2k1c} & &
  15 & \input{figs/small/z8132} & \input{figs/small/z8132c} & &
    24 & \input{figs/small/z9181} & same & &
      33 & \input{figs/large/z10221c} & \input{figs/large/z10221} \\
            \cline{1-3} \cline{5-7} \cline{9-11} \cline{13-15}
7 & \input{figs/small/z771} & \input{figs/small/z771c} & &
  16 & \input{figs/small/z8133} & \input{figs/small/z8133c} & &
    25 & \input{figs/small/z9161} & \input{figs/small/z9161c} & &
      34 & \input{figs/large/z11281c} & \input{figs/large/z11281} \\
            \cline{1-3} \cline{5-7} \cline{9-11} \cline{13-15}
8 & \input{figs/small/z7111} & \input{figs/small/z7111c} & &
  17 & \input{figs/small/z8122c} & \input{figs/small/z8122} & &
    26 & \input{figs/small/z9151} & \input{figs/small/z9151c} & &
      35 & \input{figs/large/z12281} & \input{figs/large/z12281c} \\
            \cline{1-3} \cline{5-7} \cline{9-11} \cline{13-15}
9 & \input{figs/small/z7102} & \input{figs/small/z7102c} & &
  18 & \input{figs/small/z8131} & \input{figs/small/z8131c} & &
    27 & \input{figs/small/z9141} & \input{figs/small/z9141c} & &
       36 & \input{figs/large/z10261} & \input{figs/large/z10261c}\\
      \cline{1-3} \cline{5-7} \cline{9-11} \cline{13-15}
\end{tabular}}
\end{center}
\caption{The set $\mathcal{S}$ of graphs}
\label{table:shortlist-smallbig}
\end{figure}

\begin{figure}
\begin{center}
\scalebox{0.515}{
\begin{tabular}{| c | c | c | c | c | c | c | c | c | c | c | c | c |}
\cline{1-6} \cline{8-13} 
\# & \textbf{$\mathcal{F}$} & \fbox{\textbf{$\overline{\mathcal{F}}$}} & Constraint & $F$ when $t=6$ & $\overline{F}$ when $t=6$ & & \# & \textbf{$\mathcal{F}$} & \fbox{\textbf{$\overline{\mathcal{F}}$}} & Constraint & $F$ when $t=6$ & $\overline{F}$ when $t=6$ \\ \cline{1-6} \cline{8-13} 
1 & $K_{2,t}$ & $K_t\cup K_2$ & $4\leq t$ & \input{figs/small/f8121} & \input{figs/small/f8121c} & &
  6 & $\overline{(K_t-e)\cup K_2}$ & $(K_t-e)\cup K_2$ & $4\leq t$ & \input{figs/small/f8132} & \input{figs/small/f8132c}\\
  \cline{1-6} \cline{8-13}
2 & $K_{1,t}$ & $K_t\cup K_1$ & $5\leq t$ & \input{figs/small/f761} & \input{figs/small/f761c} & &
  7 & $K_{1,t}\cup K_2$ & $\overline{K_{1,t}\cup K_2}$ & $4\leq t$ & \input{figs/small/f971} & \input{figs/small/f971c}\\
  \cline{1-6} \cline{8-13}
3 & $K_2\vee tK_1$ & $K_t\cup 2K_1$ & $4\leq t$ & \input{figs/small/f8131} & \input{figs/small/f8131c} & & 
  8 & $\overline{(K_t-e)\cup K_1}$ & $(K_t-e)\cup K_1$ & $6\leq t$ & \input{figs/small/f771} & \input{figs/small/f771c}\\
  \cline{1-6} \cline{8-13}
4 & $T_{t,1}$ & $\overline{T_{t,1}}$ & $4\leq t$ & \input{figs/small/f981} & \input{figs/small/f981c} & &
  9 & $J_t$ & $\overline{J_t}$ & $3\leq t$ & \input{figs/small/f10161} &\input{figs/small/f10161c} \\
  \cline{1-6} \cline{8-13}
5 & $\overline{(K_t-e)\cup 2K_1}$ & $(K_t-e)\cup 2K_1$ & $4\leq t$ & \input{figs/small/f8141} & \input{figs/small/f8141c} & &
  10 & $Q_t$ & $\overline{Q_t}$ & $3\leq t$ & \input{figs/small/f10151} & \input{figs/small/f10151c}\\
  \cline{1-6} \cline{8-13}
\end{tabular}}
\end{center}
\caption{The set $\mathcal{F}$ of infinite sets of graphs}
\label{table:shortlist-class}
\label{table:shortlist}
\end{figure}

\textbf{Parameterized problems and transformations.}
Here, we very briefly recall the definitions related to parameterized algorithms and complexity that are required in this paper. 
We refer to the book~\cite{book:pa} for a detailed exposition of the field.
A parameterized problem is a classical problem with an additional integer input known as the parameter. 
A parameterized problem admits a polynomial kernel if there is a polynomial-time algorithm which takes as input an instance $(I,k)$ of the problem and outputs an
instance $(I',k')$ of the same problem, where $|I'|, k'\leq p(k)$, where $p(k)$ is a polynomial in $k$, such that $(I,k)$ is a yes-instance if and only if $(I',k')$ is a yes-instance.
A parameterized problem is incompressible if it does not admit a polynomial kernel. 
A \textit{Polynomial Parameter Transformation} (PPT) from one parameterized problem $Q$ to another parameterized problem $Q'$ is a polynomial-time algorithm which takes as input an instance $(I,k)$ of $Q$ and produces an instance $(I',k')$ of 
$Q'$ such that $(I,k)$ is a yes-instance of $Q$ if and only if $(I',k')$ is a yes-instance of $Q'$, and $k'\leq p(k)$, for some polynomial $p(.)$. 
It is known that if there is a PPT from $Q$ to $Q'$, then if $Q$ is incompressible, then so is $Q'$.
%We refer to the book \cite{book:pa} for various concepts in 
%parameterized algorithms and complexity.

The parameterized problems we deal with in this paper are listed below.

%\begin{mdframed}
%  \textbf{\HEE}: Given a graph $G$ and an integer $k$, do there exist at most $k$ edges
%  such that editing (adding or deleting) them in $G$ results in an $H$-free graph?\\%hfill
%  \textbf{Parameter}: $k$
%\end{mdframed}
\begin{mdframed}
  \textbf{\HEE}: Given a graph $G$ and an integer $k$, does there exist a set $F$ of at most $k$ edges/nonedges
  such that $G\triangle F$ is $H$-free?\\%hfill
  \textbf{Parameter}: $k$
\end{mdframed}

\begin{mdframed}
  \textbf{\HED}: Given a graph $G$ and an integer $k$, does there exist a set $F$ of at most $k$ edges
  such that $G-F$ is $H$-free?\\%hfill
  \textbf{Parameter}: $k$
\end{mdframed}

\begin{mdframed}
  \textbf{\HEC}: Given a graph $G$ and an integer $k$, does there exist a set $F$ of at most $k$ nonedges such that
  $G+F$ is $H$-free?\\%hfill
  \textbf{Parameter}: $k$
\end{mdframed}

\textbf{Basic results.}
Proposition~\ref{pro:folklore} follows from the observations that $(G,k)$ is a yes-instance of \HEE(\textsc{Deletion}) if and only if $(\overline{G}, k)$ is a yes-instance of \HBEE(\textsc{Completion}). It enables us to focus only on \HEE\ and \HED. 

\begin{proposition}[folklore]
  \label{pro:folklore}
  Let $H$ be any graph. Then \HED\ is incompressible if and only if \HBEC\ is incompressible. Similarly,
  \HEE\ is incompressible if and only if \HBEE\ is incompressible.
\end{proposition}

For graphs $H$ and $H'$, by ``\textit{$H$ \simulates\ $H'$}'' and by ``\textit{$H'$ \issimulatedby\ $H$}'', we mean that, there is a PPT from \HDEE\ to \HEE, there is a PPT from \HDED\ to \HED, and there is a PPT from \HDEC\ to \HEC.
We observe that this is transitive, i.e., if $H$ \simulates\ $H'$ and $H'$ \simulates\ $H''$, then $H$ \simulates\ $H''$. 
%\daniel{Do we want to define "simulates" also for completion? Perhaps we could get rid of the completion problem after the proposition above, and say that Thm 3 follows from Thm 2.}
%\sandeep{I kept Completion in the definition of 'simulates' to make sure that Corollary 5  
%follows from Proposition 4. Without that, I don't know how Corollary 5 is true.}
A set of graphs $\mathcal{H}$ is called a \textit{\baseset} for a set $\mathcal{G}$ of graphs if for every graph $H\in \mathcal{G}$ there is a graph $H'\in \mathcal{H}$ such that $H$ \simulates\  $H'$.
The objective of the rest of the paper is to find, for each of the problems, a \baseset\ $\mathcal{H}\cup \mathcal{X}$ for all graphs  with at least five
vertices, except the trivial cases, such that the following conditions are satisfied: (i) $\mathcal{H}$ is finite and the incompressibility is not known for any graph in it; (ii) for every graph in $\mathcal{X}$, the problem is known to be incompressible. 

Proposition~\ref{pro:folklore} implies Corollary~\ref{cor:folklore} and Proposition~\ref{pro:birdeye} can be deduced directly from the definitions.
\begin{corollary}
  \label{cor:folklore}
  Let $H$ and $H'$ be graphs such that $H$ \simulates\ $H'$. Then $\overline{H}$ \simulates\ $\overline{H'}$.
\end{corollary}

\begin{proposition}
  \label{pro:birdeye}
  Let $\mathcal{H}$ be a \baseset\ for a set $\mathcal{G}$ of graphs. Assume that for every graph $H'\in \mathcal{H}$, \HDEE\ (\textsc{Deletion}) is incompressible. Then for every graph $H\in \mathcal{G}$, \HEE\ (\textsc{Deletion}) is incompressible.
\end{proposition}

Intuitively, if $H'$ is an induced subgraph of $H$, then \HEE\ (\textsc{Deletion}) seems harder than \HDEE\ (\textsc{Deletion}). However, there is no general argument why this should be true: there does not seem to be a completely general reduction that would reduce \HDEE\ (\textsc{Deletion}) to \HEE\ (\textsc{Deletion}). There is, however, a fairly natural idea for trying to do such a reduction: we extend the graph by attaching copies of $H-H'$ at every place where a copy of $H'$ can potentially appear.
The following construction is essentially the same as the main construction used in \cite{AravindSS17}. %to obtain a dichotomy on the NP-hardness of 
%$H$-free edge modification problems.

\begin{construction}[see \cite{AravindSS17}]
  \label{con:main}
  Let $(G',k, H, V')$ be an input to the construction, where $G'$ and $H$ are graphs, $k$
  is a positive integer and $V'$ is a subset of vertices of $H$. We construct a graph $G$
  from $G'$ as follows.
  For every injective function $f: V'\longrightarrow V(G')$, do the following:
  %Consider a complete graph $K'$ on $V(G')$. 
  %For every subgraph (not necessarily induced) $C$ with a vertex set $V(C)$ 
  %and an edge set $E(C)$ in $K'$ such that $C$ is isomorphic to $H[V']$,
  %do the following:
  \begin{itemize}
    \item Introduce $k+1$ sets of vertices $V_1, V_2,\ldots, V_{k+1}$, each of size $|V(H)\setminus V'|$\,, and $k+1$ bijective functions
      $g_i: V(H)\longrightarrow (f(V')\cup V_i)$, for $1\leq i\leq k+1$, such that $g_i(v') = f(v')$ for every $v'\in V'$;
    \item For each set $V_i$, introduce an edge set $E_i = \{(u,v)\, |\, u\in (f(V')\cup V_i),\, v\in V_i,\, (g_i^{-1}(u), g_i^{-1}(v))\in E(H)\}$. 
      % among  $V_i\cup V(C)$
      %such that $(V(C)\cup V_i, E(C)\cup E_i)$ is isomorphic to $H$ 
      %and none of the edges in $E_i$ has both the end points in $V(C)$\,.
  \end{itemize}
  This completes the construction. Let the constructed graph be $G$. 
\end{construction}

For convenience, we call every set $V_i$ of vertices introduced in the construction a \textit{satellite} and the vertices in it \textit{satellite vertices}. This reduction works correctly in one direction: it ensures that the operations that make the new graph $G$ $H$-free should ensure that the copy of $G'$ inside $G$ is $H'$-free. 
\begin{proposition}[see Lemma 2.6 in~\cite{AravindSS17}]
  \label{pro:con:main-backward}
  Let $G$ be obtained by Construction~\ref{con:main} on 
  the input $(G',k,H,V')$, where $G'$ and $H$ are graphs, $k$ is a positive integer and $V'\subseteq V(H)$.
  Then, if $(G,k)$ is a yes-instance of \HEE\ (\textsc{Deletion}), 
  then $(G',k)$ is a yes-instance of \HDEE\ (\textsc{Deletion}), 
  where $H'$ is $H[V']$.
\end{proposition}
However, the other direction of the correctness of the reduction does not hold in general (this is easy to see for example for $H=K_{1,2}$ and $H'=K_2$). As we shall see, there are particular cases where we can prove the converse of Proposition~\ref{pro:con:main-backward}, for example, when $H-H'$ consists of exactly the highest- or lowest-degree vertices.
%, or when $H'$ is a nonseparting $C_4$. \sandeep{this may not be correct always} 
Application of such arguments will be our main tool in reducing the complexity of \HEE\ (\textsc{Deletion}) to simpler cases. 
%Furthermore, we often use the fact that the complexity of \HEE\ does not change by taking the complement of $H$.
The first known incompressible (assuming \NOPH) $H$-free edge modification problems are \HEE\ and \HED\ when $H$ is $K_1\boxtimes (2K_1\cup 2K_2)$~\cite{DBLP:journals/disopt/KratschW13}.
It is known that when $H$ is a star graph on at least 11 vertices, \HED\ is incompressible under the same complexity assumption~\cite{cai2012polynomial}.
Propositions~\ref{pro:base:incompress:all} to \ref{pro:small-kernel} summarize other major results on the incompressibility of $H$-free edge modification problems known so far.
%\daniel{Please change all the citations in theorem statements like this: (using square brackets after proposition command).} \sandeep{done.}
\begin{proposition}[\cite{CaiC15incompressibility}]
  \label{pro:base:incompress:all}
  Assuming \NOPH, \HEE, \HED, and \HEC\ are incompressible if $H$ is either of the following graphs.
  \begin{enumerate}[(i)]
  \item\label{item:incompress:c} $C_\ell$ for any $\ell \geq 4$;
  \item\label{item:incompress:p} $P_\ell$ for any $\ell \geq 5$;
%  \item\label{item:incompress:2k2} $2K_2$.
  \end{enumerate}
\end{proposition}

%We observe that Proposition~\ref{pro:base:incompress:all}(\ref{item:incompress:2k2}) follows from Proposition~\ref{pro:base:incompress:all}(\ref{item:incompress:c}) and Proposition~\ref{pro:folklore}.

%We note that Proposition~\ref{pro:folklore} and \ref{pro:base:incompress:all}(\ref{item:incompress:c}) imply Proposition~\ref{pro:base:incompress:all}(\ref{item:incompress:2k2}).

\begin{proposition}[\cite{CaiC15incompressibility}]
  \label{pro:base:incompress:3conn}
  Assuming \NOPH, for 3-connected graphs $H$, \HEE\ and \HED\ are incompressible if $H$ is not complete and \HEC\ is incompressible
  if $H$ has at least two nonedges. 
\end{proposition}  

\begin{proposition}[\cite{CaiC15incompressibility}, folklore]
  \label{pro:trivial-kernel}
  If $H$ is a complete or empty graph, then \HEE\ admits polynomial kernelization. 
  If $H$ is complete or has at most one edge then \HED\ admits polynomial kernelization.
  If $H$ is an empty graph or has at most one nonedge then \HEC\ admits polynomial kernelization.
  %Let $H$ be $K_t$ or $\overline{K_t}$ for any $t\geq 1$. Then \HED, \HEC\ and
  %\HEE\ have polynomial kernels. 
\end{proposition}

\begin{proposition}
  \label{pro:small-kernel}
  \HEE, \HED, and \HEC\ admit polynomial kernels when $H$ is a $P_3$~\cite{CaoC12, GrammGHN03}, $P_4$~\cite{GuillemotHPP13}, paw~\cite{CaoKY20, EibenLS15}, or a diamond~\cite{cai2012polynomial, CRSY18polynomial}.
\end{proposition}

We end this section by proving the incompressibility of the problems for regular nontrivial graphs $H$.
%%
%\documentclass[12pt]{article}
%\input{preamble}
%
%\title{$H$-free edge modification: Dichotomy on the incompressibility for regular $H$}
%\date{}

%\begin{document}
%\maketitle

%%%\subsection{Regular graphs}
%%%\label{sec:regular}

%%%In this section, we handle the case when $H$ is regular.

%In this section, we obtain dichotomy results for the incompressibility of the problems for regular graphs $H$. 
 
% \daniel{As the proof is not very complicated now, I think if we move it to a subcase of churning. But then the statement should be changed a bit.}
\begin{theorem}
  \label{thm:regular}
  Let $H$ be a regular graph. Then \HED, \HEC, and \HEE\ are incompressible
  if and only if $H$ is neither complete nor empty, where the incompressibility assumes \NOPH.
\end{theorem}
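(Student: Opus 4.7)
The plan is to handle the two directions of the biconditional separately. For the direction that complete or empty $H$ yields a compressible problem, the argument is short: if $H=K_r$, then adding edges cannot destroy existing induced copies of $K_r$, so \HEC\ reduces to polynomial-time $K_r$-detection in $G$ and admits a trivial polynomial kernel; \HED\ for $K_r$ and the analogous cases for $H=\overline{K_r}$ follow from standard polynomial kernels for $K_r$-free edge modification, together with the complementation duality used in the hard direction below.

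For the incompressibility direction, fix a regular $H$ of degree $r$ with $1 \le r \le |V(H)|-2$, so that $|V(H)|\ge 4$ and $H$ contains both an edge $ab$ and a non-edge $cd$. I would first exploit the complementation duality: $(G,k)$ is a yes-instance of \HED\ with pattern $H$ iff $(\overline{G},k)$ is a yes-instance of \HEC\ with pattern $\overline{H}$, and complementation preserves the class ``regular, neither complete nor empty.'' This yields a polynomial parameter transformation from \HEC\ to \HED, so it suffices to prove \HED\ incompressible and \HEC\ follows.

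The core argument is an OR-cross-composition (equivalently, a polynomial parameter transformation from an incompressible source such as \textsc{CNF-SAT} parameterized by the number of variables). Given $t$ equal-sized source instances $I_1,\ldots,I_t$, the composition builds $(G^\star,k)$ with $k = \mathrm{poly}(n,\log t)$ such that $G^\star$ admits an edge deletion of size at most $k$ yielding an $H$-free graph iff some $I_i$ is a yes-instance. The gadget uses the pair $(ab,cd)$ in $H$: copies of $H$ are planted so that an input bit controls whether a designated edge inside one copy is present, and deleting that edge destroys exactly one planted induced copy of $H$ (and creates no new one), giving a clean binary selector that can be chained to select one $I_i$ out of $t$.

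The main obstacle is uniformity: the construction and its correctness proof must apply to every regular $H$ that is neither complete nor empty. Regularity enters crucially in the ``no accidental copy of $H$'' check, which is the heart of the argument: any induced copy of $H$ in $G^\star$ must have every vertex of degree exactly $r$ within that copy, and this rigid degree constraint, combined with the degree contributed by each gadget component, is what rules out unintended copies. Once uniqueness is established, the cross-composition yields incompressibility of \HED\ (and hence \HEC) assuming \NOPH. For \HEE\ I would augment $G^\star$ with rigid ``locking'' sub-gadgets so that any single edge insertion creates a new induced $H$, pushing the editing cost past $k$; this forces any optimal editing to be a pure deletion and reduces \HEE\ to the \HED\ instance already handled.
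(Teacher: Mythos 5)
Your proposal identifies the right high-level shape (trivial direction for complete or empty $H$, complementation duality, a cross-composition or PPT from a \NOPH-hard source), but the hard direction is not actually carried out, and the part you defer is precisely the heart of the theorem. The paper does not attempt a direct cross-composition for arbitrary regular $H$. Instead, it reduces to previously established base cases: Propositions~\ref{pro:base:incompress:all} and \ref{pro:base:incompress:3conn} give incompressibility when $H$ or $\overline{H}$ is 3-connected or a cycle of length at least four, so the whole content of Theorem~\ref{thm:regular} is the structural fact that every regular $H$ that is neither complete nor empty has $H$ or $\overline{H}$ 3-connected or a $C_{\ge 4}$. That structural fact is proved by a case analysis on $r$ versus $|V(H)|$ (for $4 \le r \le |V(H)|-5$, remove any separator $S$ of size at most two and argue each side is large, so $\overline{H}$ is 3-connected; for $r > |V(H)|-5$ with $|V(H)| \ge 9$, any two non-adjacent vertices share three common neighbors, so $H$ itself is 3-connected; the finitely many remaining small cases are checked by computer). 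Your proposal never touches 3-connectivity, which is the structural property that makes the rigidity argument work.

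The concrete gap in your proposal is the claim that regularity alone rules out accidental induced copies of $H$ in the composed instance. It is true that any induced copy must be $r$-regular internally, but that degree constraint by itself does not pin down the copy: a gadget graph built from many overlapping copies of $H$ can easily contain unintended $r$-regular induced subgraphs isomorphic to $H$ that straddle gadgets in unforeseen ways. Controlling this is exactly why the underlying framework in the paper is built around 3-connected patterns and long cycles, where one has enough rigidity to argue copies cannot cross gadget boundaries. Asserting the selector is ``clean'' without constructing the gadget or proving the no-accidental-copy lemma leaves the core of the proof missing. The same concern applies, amplified, to your \HEE\ ``locking'' sketch: ensuring that every single edge insertion creates a new induced $H$ while simultaneously not creating accidental copies elsewhere is nontrivial and is again handled in the paper by reducing to the 3-connected / cycle base propositions rather than by ad hoc locking gadgets.
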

%\begin{comment}
\begin{proof}
  Let $H$ be an $r$-regular graph. 
  If $H$ is either empty or complete, then by Proposition~\ref{pro:trivial-kernel}, the problems admit polynomial kernels.
  To prove the other direction, assume that $H$ is an $r$-regular graph, which is neither complete nor empty. 
%  Proposition~\ref{pro:folklore} allows us to assume without loss of generality that $H$ is connected. 
  It can be easily verified that if $H$ has exactly one nonedge then $H$ must be a $2K_1$, an empty graph. 
  Similarly, if $H$ has exactly one edge then $H$ is a $K_2$, a complete graph.
  Therefore, assume that both $H$ and $\overline{H}$ has at least two edges and two nonedges.
  Now, it is sufficient to prove that either $H$ or $\overline{H}$ is 3-connected or a cycle with at least four vertices (see Propositions~\ref{pro:base:incompress:all} and \ref{pro:base:incompress:3conn}). 
  
  Suppose that $4\leq r\leq |V(H)|-5$.  
  Assume that $H$ is not 3-connected.
  Then there exists a set $S\subseteq V(H)$ such that $H-S$ is disconnected and $|S|\leq 2$. Let $A$ be the set of vertices of any component in $H-S$. 
  %which is an $(A,B)$-separator for two nonempty sets $A,B\subseteq V(H)$ such that $|S|\leq 2$.
  Since $r\geq 4$, we obtain that $|A|\geq 3$. Therefore $\overline{H-S}$ is a 3-connected graph. % in $\overline{H}$. 
  Since $r\leq |V(H)|-5$, every vertex in $S$ has at least three
  neighbors outside $S$ in $\overline{H}$. Therefore, $\overline{H}$ is 3-connected. 
  Suppose that $r > |V(H)|-5$ (the case $r < 4$ can be handled by considering $\overline{H}$). Let $|V(H)|\geq 9$. Then every pair of non-adjacent vertices has at least three common
  neighbors. Therefore, $H$ is 3-connected. By using a computer search, we verified that if $r > |V(H)|-5$ and $|V(H)|\leq 8$, then $H$ or $\overline{H}$ is either 3-connected, or $H$ is a $2K_2$, or a  $C_4$, or a $C_5$. 
\end{proof}
%\end{comment}

%Some of the results in this paper use a computer search on graphs with at most nine vertices (there are 288266 graphs with at most nine vertices~\cite{mckayGraphs}). We uses the exhaustive list of graphs provided by McKay~\cite{mckayGraphs} for the same. 

%We observe that some of the problems mentioned in Proposition~\ref{pro:kernel1}
%are polynomial-time solvable.

%\input{regular}

\section{Churning}
\label{sec:larger-gang}

In this section, we introduce and analyze the churning procedure. The main result of the section is that incompressibility for the class $\mathcal{W}$ of graphs defined in the previous section implies incompressibility for every graph with at least five vertices, except the trivial cases. Recall that $\mathcal{W}$ is not finite, as it contains the infinite families shown in Figure~\ref{table:shortlist-class}. In Sections~\ref{sec:smaller-gang} and \ref{sec:cai}, we will further reduce $\mathcal{W}$ to a finite set. We formally state below the main results proved in this section.

\begin{lemma}
  \label{lem:editing}
  If \HEE\ is incompressible for every $H\in \LG$, 
  then \HEE\ is incompressible for every $H$ having at least five vertices but is neither
  complete nor empty, where the incompressibility assumes \NOPH.
\end{lemma}

\begin{lemma}
  \label{lem:deletion}
  If \HED\ is incompressible for every $H\in \LG$, 
  then \HED\ is incompressible for every $H$ having at least five vertices and at least two edges but not complete, where the incompressibility assumes \NOPH.
\end{lemma}

Corollary~\ref{cor:completion} follows from Lemma~\ref{lem:deletion}, Proposition~\ref{pro:folklore} and from the fact that $\overline{\LG}=\LG$.

\begin{corollary}
  \label{cor:completion}
  If \HEC\ is incompressible for every $H\in \LG$, 
  then \HEC\ is incompressible for every $H$ having at least five vertices and at least two nonedges but not empty, where the incompressibility assumes \NOPH.
\end{corollary}

%The sets of graphs mentioned in these results are shown in Table~\ref{table:tough-gang},  \ref{table:soft-gang}, and \ref{table:deletion-gang}. We use subscripts to refer to the graphs in the tables. For example, $H_1$ refers to $P_3\cup 2K_2$.
%In Section~\ref{sec:cai}, we get rid of graphs in $\mathcal{A}\cup \mathcal{B}$ and their complements by proving the incompressibility of the problems for these graphs. 

%\input{tables/gang-table.tex}

%\begin{proposition}
%  \label{pro:kernel}
%  \HEE\ admits polynomial kernel if $H$ is complete or empty~\cite{CaiC15incompressibility}
%  or $H$ is a $P_3$~\cite{Guo09more} or a complement of a $P_3$, or a $P_4$~\cite{guillemot2013non} or a diamond ($2K_1\times K_2$)~\cite{CRSY18polynomial}.
%\end{proposition}

By $\mathcal{X}_E$ we denote the set of all graphs (and their complements) listed in Proposition~\ref{pro:base:incompress:all},
Proposition~\ref{pro:base:incompress:3conn}, and Theorem~\ref{thm:regular} for which the incompressibility is known (assuming \NOPH) for \HEE.
By $\mathcal{Y}_E$, we denote the set of all graphs (and their complements) listed in 
Proposition~\ref{pro:trivial-kernel} and \ref{pro:small-kernel} for which there exist polynomial kernels for \HEE; additionally, we include into $\mathcal{Y}_E$ the claw and its complement (as we do not want to conjecture the incompressibility for these cases). Similarly, we define the set $\mathcal{X}_D$ of ``hard'' and the set $\mathcal{Y}_D$ of ``nonhard'' cases for \HED. More formally,

\begin{equation*}
\begin{split}
\mathcal{X}_D &= \{ C_{\ell}, \overline{C_{\ell}}  \text{ for all } \ell\geq 4,\\
                & P_{\ell}, \overline{P_{\ell}} \text{ for all } \ell\geq 5,\\ 
                & H \text{ such that } H \text{ is regular but is neither complete nor empty,}\\
                & H \text{ such that either } H \text{ is } 3\text{-connected but not complete} \\&\text{or } \overline{H} \text{ is 3-connected with at least two nonedges} \}\\
\mathcal{X}_E &= \mathcal{X}_D\cup \{H \text{ such that } H \text{ has exactly one edge and at least five vertices}\}\\
\mathcal{Y}_E &= \{ K_t, \overline{K_t} \text{ for all } t\geq 1, \\
                & P_3, \overline{P_3}, P_4,\\
                & \text{diamond}, \overline{\text{diamond}}, \text{paw}, \overline{\text{paw}}, \text{claw}, \overline{\text{claw}}\}\\
\mathcal{Y}_D &= \mathcal{Y}_E\cup \{H \text{ such that } H \text{ has exactly one edge and at least five vertices}\}
\end{split}
\end{equation*}

Additionally we define $\mathcal{Y}'=\{P_3,$ $\overline{P_3},$ $P_4,$ $\text{claw},$ $\overline{\text{claw}},$ $\text{paw},$ $\overline{\text{paw}},$ $\text{diamond},$ $\overline{\text{diamond}}\}$.
We observe that $\mathcal{Y}'\subseteq \mathcal{Y}_E\cap \mathcal{Y}_D$ and the set of graphs with at most four vertices is a subset of $\mathcal{X}_E\cup\mathcal{Y}_E$ and $\mathcal{X}_D\cup\mathcal{Y}_D$. Further, we observe that near-empty graphs with at least five vertices are in $\mathcal{Y}_D$ but their complements are 3-connected and are in $\mathcal{X}_D$. We also note that both these graphs and their complements are in $\mathcal{X}_E$.
%If the assumption on $\mathcal{H}$ in Lemma~\ref{lem:to} is true, then it is easy to observe that the only remaining case
%to obtain a dichotomy on the incompressibility of \HEE\ would be $H$ being a claw.

The main technical result of the section is the following lemma. 
It states that if a graph is not in the set $\mathcal{Y}_D$ of ``easy'' graphs, then it \simulates\ a ``hard'' graph in $
\mathcal{X_D}$ or $\mathcal{W}$, and there is a similar result for $\mathcal{X}_E$ and $\mathcal{Y}_E$. The two 
statements in the lemma are not comparable: the latter has a weaker assumption and a weaker consequence compared to the former.
%\daniel{This lemma is the main result of the section.}
\begin{lemma}
  \label{lem:churn}
  If $H\notin \mathcal{Y}_D$, then $H$ \simulates\ a graph in $\mathcal{X}_D\cup \LG$. If $H\notin \mathcal{Y}_E$, then $H$ \simulates\ a graph in $\mathcal{X}_E\cup \LG$.
\end{lemma}

In the rest of the paper, integer $\ell$ and set $V_\ell$ denote the lowest degree and the set of lowest degree vertices in $H$
respectively; integer $h$ and set $V_h$ denote the highest degree and the set of highest degree vertices in $H$ respectively; and set $V_m$
denotes the set $V(H) \setminus (V_\ell \cup V_h)$. By $h^*$ we denote the degree of vertices of $V_h$ in $\overline{H}$, i.e., $h^*=|V(H)|-h-1$.

Now we introduce a procedure (Churn) which is similar to the one used to obtain dichotomy results on the polynomial-time solvable and NP-hard cases of these problems (see Section 5 in \cite{AravindSS17}). The basic observation is that $H$ can simulate the graphs $H-V_\ell$ and $H-V_h$. This follows from proving that Construction~\ref{con:main} gives a PPT in these cases. %\daniel{Say that this follows from proving that Constr. 1 works in these cases?}\sandeep{done. please check.}
%The procedure stems from Proposition~\ref{pro:high-low}.
\begin{proposition}[Corollary~2.9 in \cite{AravindSS17}]
  \label{pro:high-low}
  Let $H'$ be $H-V_\ell$ or $H-V_h$. Then $H$ \simulates\ $H'$.
\end{proposition}

To deal with both \HEE\ and \HED\ in a uniform way, we define $\mathcal{X}= \mathcal{X}_E$ and $\mathcal{Y} = \mathcal{Y}_D$. We observe that $\overline{\mathcal{X}\cup \mathcal{Y}}=\mathcal{X}\cup\mathcal{Y}$ and
$\mathcal{X}\cup\mathcal{Y}= \mathcal{X}_E\cup \mathcal{Y}_E=\mathcal{X}_D\cup\mathcal{Y}_D$.

\begin{mdframed}
  \textbf{Churn($H$)}: %// $H\notin \mathcal{X}\cup \mathcal{Y}$%\\
  \begin{description}
  \item[Step 1:] If $H$ is regular, then return $H$. %\in \mathcal{X}_D$, then return $H$.
  \item[Step 2:] If $H-V_\ell \notin \mathcal{Y}$, then return Churn($H-V_\ell)$. %$H=H-V_\ell$ and goto Step 1.
  \item[Step 3:] If $H-V_h\notin \mathcal{Y}$, then return Churn($H-V_h$). %H=H-V_h$ and goto Step 1.
  \item[Step 4:] Return $H$.  
  \end{description}
\end{mdframed}
Proposition~\ref{pro:high-low} implies Corollary~\ref{cor:high-low}.

\begin{corollary}
\label{cor:high-low}
  %Let $H\notin \mathcal{X}\cup \mathcal{Y}$. \daniel{Do we need $\notin \mathcal{X}\cup \mathcal{Y}$?} \sandeep{we don't}
  Let $H'$ be the output of Churn($H$). Then $H$ \simulates\ $H'$.
\end{corollary}

%Thus we are done if one of these two graphs ($H-V_\ell$, $H-V_h$) is known to be hard (that is, it is in $\mathcal{X}_E$ in the case of \HEE). % or supposed to be hard (that is, it is in $\LG$).
We prove Lemma~\ref{lem:churn} by analyzing Churn() and showing that the graph returned by it always satisfies the requirements of the lemma. The procedure first handles the case when $H$ is regular. 
%In Section~\ref{sec:prelim}, we proved that if $H$ is regular and nontrivial, then it is unlikely that the problems admit polynomial kernels. 
If $H$ is regular and not in $\mathcal{Y}_E\subseteq \mathcal{Y}_D$, then $H$ is in $\mathcal{X}_D\subseteq \mathcal{X}_E$. 
Therefore, it is safe to return $H$. %, as it is in $\mathcal{X}_D\subseteq \mathcal{X}_E$. 
If $H$ is not regular, then $H-V_\ell$ and $H-V_h$ are both defined. 
If one of these two graphs is not in $\mathcal{Y}$, then Proposition~\ref{pro:high-low} allows us to proceed by recursion on that graph. 
Step 4 is reached when both $H-V_\ell$ and $H-V_h$ are in $\mathcal{Y}$. 
However, at this point the conditions on $H-V_\ell$ and $H-V_h$ give us important structural information about the graph $H$, which can be exploited to show that it is in $\mathcal{X}_D\cup \LG$. Recall that $\mathcal{Y}$ is the union of complete, empty, near-empty, and the finite graphs in $\mathcal{Y}'$. 
This means we can split the problem into $4\cdot 4$ different cases, with very strict structural restrictions on $H$ in each case. These cases are analysed in a sequence of lemmas/corollaries (Lemma~\ref{lem:small-small} to Lemma~\ref{lem:small-ne} in Sections~\ref{sec:smallsmall}--\ref{sec:nearempty}). This is summarized in Figure~\ref{table:gist}.

Some of these proofs require a case analysis based on e.g., $\ell$ and $h$. As mentioned earlier, we used a systematic search of all graphs up to 11 vertices to reduce the number of corner cases that need to be handled in the proof. It may be possible to further simplify the proof by the exhaustive search of even larger graphs. But we want to point out that it would not be possible to eliminate all the case distinctions. Even if we perform a systematic search of all graphs up to a larger number of vertices, the infinite families in $\mathcal{F}$ (see Figure~\ref{table:shortlist-class}) would need to be recognized as separate cases in the proof.

\begin{figure}
\begin{center}
\scalebox{1}{
%\begin{tabular}{|l||*{5}{m{2.3cm}|}|}\hline
\begin{tabular}{|l||l|l|l|l|l|}\hline
\backslashbox{$H-V_\ell$}{$H-V_h$}
%&\makebox[2.8cm]{Complete}&\makebox[2.8cm]{Empty}&\makebox[2.8cm]{Near-empty}&\makebox[2.8cm]{$\mathcal{Y'}$}\\\hline\hline
&Complete&Empty&Near-empty&$\mathcal{Y'}$\\\hline\hline
%&\makebox[3em]{6/3}&\makebox[3em]{6/4}\\\hline\hline
Complete & Lemma~\ref{lem:c-c} & Lemma~\ref{lem:c-is} & Lemma~\ref{lem:c-ne} & Corollary~\ref{cor:c-small}\\\hline
Empty & Lemma~\ref{lem:is-c} & Corollary~\ref{cor:is-is} & Lemma~\ref{lem:is-ne} & Corollary~\ref{cor:is-small} \\\hline
Near-empty & Lemma~\ref{lem:ne-c} & Lemma \ref{lem:ne-is} & Lemma \ref{lem:ne-ne} & Lemma \ref{lem:ne-small}\\\hline
$\mathcal{Y'}$ & Lemma~\ref{lem:small-c} & Lemma~\ref{lem:small-is} & Lemma \ref{lem:small-ne} & Lemma \ref{lem:small-small}\\\hline
\end{tabular}}
\end{center}
\caption{Gist of results in Sections~\ref{sec:smallsmall} to \ref{sec:nearempty}}
\label{table:gist}
\end{figure}

\subsection{Small graphs}
\label{sec:smallsmall}
%\daniel{It seems to be an overkill to make this a separate subsection, but I think it is good for didactic purposes.} 
If both $H-V_\ell$ and $H-V_h$ are in the finite set $\mathcal{Y'}$ of graphs, then $H$ has bounded size. An exhaustive computer search showed the correctness of the procedure in this case.

\begin{lemma}
  \label{lem:small-small}
  Let $H\notin \mathcal{X}\cup \mathcal{Y}$ be such that both 
  $H-V_\ell$ and $H-V_h$ are in $\mathcal{Y'}$. Then $H\in \LG$.
\end{lemma}
%\begin{comment}
\begin{proof}
  Since every graph in $\mathcal{Y}'$ has only at most four vertices, $H$ has only at most eight vertices. By a computer search we found that $H\in \LG$. 
\end{proof}
%\end{comment}

\subsection{Cliques and empty graphs}
\label{sec:c-is}
In this section, we consider the cases when both $H-V_\ell$ and $H-V_h$ are cliques or empty graphs. In this case, the structure of $H$ is very limited. In principle, we need to consider four cases separately depending on the type of $H-V_\ell$ and $H-V_h$. However, a simple complementation argument shows that the case when both of them are cliques is equivalent to the case when both of them are empty.
\begin{lemma}
  \label{lem:c-c}
  Let $H\notin \mathcal{X}\cup \mathcal{Y}$ be such that both $H - V_\ell$ and $H - V_h$ are complete graphs. Then $H\in \LG$. %there exists an $H'\in \{2K_2, \overline{H_5}\}$ such that $H$ \simulates\ $H'$.
\end{lemma}
%\begin{comment}
\begin{proof}
  %Assume that $H\notin \mathcal{F}$. 
  If $H$ has only at most eleven vertices, by a computer search we found that $H\in \LG$. 
  Assume that $H$ has at least twelve vertices and $H\notin \LG$. We claim that either $H$ or $\overline{H}$ is 3-connected, which is a contradiction.
  
  Since $H-V_\ell$ and $H-V_h$ are complete graphs,
  both $V_m\cup V_\ell$ and $V_m\cup V_h$ induce complete graphs. This implies that every vertex 
  in $V_m$ is universal and hence is having the highest degree, which is a contradiction. Therefore, $V_m=\emptyset$.
  Assume that there exists at least one edge between $V_\ell$ and $V_h$.
  Then every vertex in $V_\ell$ has a neighbor in $V_h$ and every vertex in $V_h$ has a neighbor in $V_\ell$. 
  Then it can be easily verified that that $H$ is 3-connected.
  Therefore, assume that there is no edge between $V_\ell$ and $V_h$. Then $H$ is $K_t\cup K_s$ where $t>s$. 
  If $s=1$, then $H\in \overline{\mathcal{F}_2}$, a contradiction. 
  If $s=2$, then $H\in \overline{\mathcal{F}_1}$, a contradiction. 
  Therefore, $s\geq 3$. Then $\overline{H}$ is 3-connected.  
\end{proof}
%\end{comment}
Corollaries in this section and in Sections~\ref{sec:c-is-small} and \ref{sec:nearempty} use the facts that various sets we consider are self-complementary, i.e., $\mathcal{X}\cup \mathcal{Y}=\overline{\mathcal{X}\cup \mathcal{Y}}, \LG=\overline{\LG}, \mathcal{Y}'=\overline{\mathcal{Y}'}$.

\begin{corollary}
  \label{cor:is-is}
  Let $H\notin \mathcal{X}\cup \mathcal{Y}$ be such that both $H - V_\ell$ and $H - V_h$ are empty graphs. Then $H\in \LG$.
\end{corollary}

\begin{lemma}
  \label{lem:c-is}
  Let $H\notin \mathcal{X}\cup \mathcal{Y}$ be such that $H-V_\ell$ is a complete graph and
  $H-V_h$ is an empty graph. Then $H\in \LG$.
  %Then there exists an $H'\in \{H_3, \overline{H_3}, H_5, \overline{H_5}, H_7, \overline{H_7}, H_{9}, A_{4}, A_{5}, A_{7}, \overline{A_7}, A_{9}, \overline{A_{9}}\}$ 
  %such that $H$ \simulates\ $H'$.
\end{lemma}
%\begin{comment}
\begin{proof}
  If $H$ has at most eleven vertices, by a computer search we verified that $H\in \LG$. 
  Assume that $H$ has at least twelve vertices. For a contradiction, assume that $H\notin \LG$.
  Then we will show that either $H$ or $\overline{H}$ is 3-connected, which is a contradiction. 
  If $\ell\geq 3$, then $H$ is 3-connected.
  Therefore, $\ell\leq 2$.
  If $h^*=|V(H)|-h-1\geq 3$, then $\overline{H}$ is 3-connected. Therefore, $h^*\leq 2$, which implies that $h\geq 12-3=9$, as $H$ has at least twelve vertices. 
  We observe that every vertex in $V_h$ has same number of neighbors, say $t$, in $V_\ell$. 
  Since $V_m\cup V_\ell$ is an independent set and $V_m\cup V_h$ is a clique, we obtain that $|V_m|\leq 1$.
  Further, by symmetry, we can assume that $|V_\ell|\geq |V_h|$ (otherwise we can consider $\overline{H}$). 
  
  Case 1: $\ell = 0$. Then the graph is $K_r\cup sK_1$ (for $s\geq r$) and $V_m=\emptyset$. If $s\geq 3$, then $\overline{H}$ is 3-connected. If $s\leq 2$, then $H$ has only at most four vertices, which is a contradiction. %we get a contradiction as $t\leq s$ and $t+s\geq 10$. %$\overline{H}$ is $K_2\vee tK_1$, a graph in $\mathcal{F}$, a contradiction. 
  %If $s=1$, then $\overline{H}$ is a $K_{1,t}\in \mathcal{F}$, a contradiction.
  
  Case 2: $\ell = 1$. Since $H$ is not $K_{1,t}$ ($t\geq 9$, $\in \mathcal{F}_2$), we obtain that $|V_h|\geq 2$. 
  If $|V_h|=2$ and $t\geq 3$, then $h^*\geq 3$, a contradiction. 
  If $|V_h|=2$ and $t\leq 2$, then $|V(H)|\leq 7$, a contradiction. 
  If $|V_h|=3$ and $t\geq 2$, then $h^*\geq 3$, a contradiction. 
  If $|V_h|=3$ and $t\leq 1$, then $H$ has only at most seven vertices, which is a contradiction.  
  If $|V_h|\geq 4$, then $h^*\geq 3$, a contradiction. 
  
  Case 3: $\ell = 2$. Clearly, $|V_h|\geq 2$. If $|V_h|=2$, then $H$ is $K_2\vee sK_1$ (for $s\geq 8$), a contradiction as it is in $\mathcal{F}_3$. 
  We have that $2|V_\ell|=|V_h|(|V_\ell|-h^*)$. 
  This implies that $|V_\ell|=(h^*|V_h|)/(|V_h|-2)$. 
  Hence if $|V_h|\geq 3$, then $|V_\ell|\leq 6$ as $h^*\leq 2$. 
  Therefore, if $|V_h|\leq 4$, then $|V_\ell|\leq 6$ and $|V(H)\leq 11$, a contradiction. 
  %If $|V_h|=4$, then $|V_\ell|\leq 4$, which contradicts with the assumption that $H$ has at least ten vertices.
  If $|V_h|\geq 5$, then $|V_\ell|<|V_h|$, a contradiction.
  %then $5\leq |V_\ell|\leq 6$. 
  %Since $2|V_\ell|$
  %is a multiple of $|V_h|$, $|V_\ell|=6$. Then $t=3$ and $\overline{H}$ is 3-connected. If $|V_h|=5$, then $|V_\ell|=5$ and $\overline{H}$ is 3-connected. If $|V_h|=6$, then $|V_\ell|=6$ (as $|V_\ell|\geq |V_h|$) and $\overline{H}$ is 3-connected.
\end{proof}
%\end{comment}
%\input{tables/c-is-table}
Our last case is when $H-V_\ell$ is empty and $H-V_h$ is complete. Let us observe that this case \textit{does not} follow from Lemma~\ref{lem:c-is} by complementation. If $\overline{V_\ell}$ and $\overline{V_h}$ are the lowest- and highest-degree vertices in $\overline{H}$, then $\overline{V_\ell}=V_h$, $\overline{V_h}=V_\ell$ and hence $\overline{H}-\overline{V_\ell}$ is empty and $\overline{H}-\overline{V_h}$ is a clique, that is, we have the same condition as for $H$. Fortunately, this last case is very simple to handle.
\begin{lemma}
  \label{lem:is-c}
  There exists no graph $H\notin \mathcal{X}\cup \mathcal{Y}$ such that $H-V_\ell$ is an empty graph and $H-V_h$ is a complete graph.
\end{lemma}
%\begin{comment}
\begin{proof}
  The constraints imply that $H$ is a split graph with a partitioning $\{V_\ell\cup V_m, V_h\}$, where $V_\ell\cup V_m$ forms a clique and $V_h$ forms an 
  independent set. Then we obtain that the degree of a vertex in $V_\ell\cup V_m$ is at least that of a vertex in $V_h$, which is a contradiction.
\end{proof}
%\end{comment}

\subsection{Cliques/empty graphs plus small graphs}
\label{sec:c-is-small}
Next we consider the cases when one of $H-V_\ell$ or $H-V_h$ is a clique or an empty graph, while the other is a graph from the finite set $\mathcal{Y}'$. Assuming that $H$ is not too small, this means that $H$ is essentially a clique or an empty graph, and intuitively it should follow that $H$ or $\overline{H}$ is 3-connected, respectively. However, this requires a detailed proof considering several cases.

\begin{lemma}
  \label{lem:small-c}
  Let $H\notin \mathcal{X}\cup \mathcal{Y}$ be such that $H - V_\ell\in \mathcal{Y'}$ and $H - V_h$ is a complete graph. Then $H\in \LG$.
\end{lemma}
%\begin{comment}
\begin{proof}
  If $H$ has only at most eleven vertices, by using a computer search we verified that $H\in \LG$. 
  Assume that $H$ has at least twelve vertices. 
  For a contradiction, assume that $H\notin \LG$. 
  We will prove that $H$ is 3-connected, which is a contradiction. 
  Since every graph in $\mathcal{Y}'$ has at most four vertices, $H-V_h$, which is a complete graph, has at least eight vertices. 
  This implies that a vertex in $V_\ell\cup V_m$ has degree at least seven and hence $h\geq 8$. 
  Since the maximum degree of every graph in $\mathcal{Y}'$ is at most $3$, every vertex in $V_h$ has at least five neighbors in $V_\ell\cup V_m$. 
  Hence $H$ is a 3-connected graph. 
\end{proof}
%\end{comment}
\begin{corollary}
  \label{cor:is-small}
  Let $H\notin \mathcal{X}\cup \mathcal{Y}$ be such that $H - V_\ell$ is an empty graph and $H - V_h\in \mathcal{Y'}$. Then $H\in \LG$.
\end{corollary}

\begin{lemma}
  \label{lem:small-is}
  Let $H\notin \mathcal{X}\cup \mathcal{Y}$ be such that $H - V_\ell\in \mathcal{Y'}$ and $H - V_h$ is an empty graph. Then $H\in \LG$.
\end{lemma}
%\begin{comment}
\begin{proof}
  %Assume that $H\notin \mathcal{F}$.
  If $H$ has only at most eleven vertices, by using a computer search we verified that $H\in \mathcal{W}$. 
  Let $H$ has at least twelve vertices. 
  For a contradiction, assume that $H\notin \LG$. 
  We will show that either $H$ or $\overline{H}$ is 3-connected, which is a contradiction. 
  We observe that $|V_\ell|\geq 8$. 
  Therefore, if $\ell=0$, then $\overline{H}$ is 3-connected. 
  Therefore, $\ell\geq 1$.  
  If every vertex in $V_\ell$ is adjacent to every vertex in $V_h\cup V_m$, then $H$ is 3-connected. Therefore, assume that $\ell\leq |V_h\cup V_m|-1\leq 3$. 
  Additionally, we note that $|V_m|\leq 3$.

  Case 1: $\ell = 1$. 
  Clearly, every vertex in $V_\ell$ has exactly one neighbor in $V_h$ and has no neighbors in $V_m$ ($V_\ell\cup V_m$ forms an independent set).
  If $|V_m|=3$, then $H-V_\ell$ has an independent set of three vertices such that each has degree at least two. 
  This is not true as every graph in $\mathcal{Y}'$ has only at most four vertices. %, which is a contradiction.
  Therefore, $|V_m|\leq 2$.
  %Since there are no triple of mutually nonadjacent vertices each with degree at least two in the graphs in $\mathcal{Y}'$, we obtain that $|V_m|\leq 2$. 
  Since degrees of vertices in $V_m$ is at least two, we get that $|V_h|\geq 2$ 
  (if $V_m=\emptyset$, then $|V_h|\geq 3$ as graphs in $\mathcal{Y}'$ has at least three vertices).
  Therefore, if every vertex in $V_h$ has at least $5-|V_h|$ neighbors in $V_\ell$ 
  (as $\ell=1$, there are no two vertices in $V_h$ having a common neighbor in $V_\ell$), 
  then every vertex in $V_h$ has at least $(|V_h|-1)(5-|V_h|)\geq 3$ non-neighbors in $V_\ell$. 
  Therefore, $\overline{H}$ is 3-connected. 
  If a vertex in $V_h$ has only at most $4-|V_h|$ neighbors in $V_\ell$, 
  then $h^*\geq |V_\ell|-(4-|V_h|) \geq 8-(4-|V_h|)=|V_h|+4\geq 6$. 
  Since $H-V_\ell$ has only at most four vertices, we obtain that every vertex in $V_h$ has at least three non-neighbors in $V_\ell$.
  %Since the difference in degrees of vertices in graphs in $\mathcal{Y}'$ is at most 2, it follows that
  %every vertex in $V_h$ has at least 3 non-neighbors in $V_\ell$.
  Therefore, $\overline{H}$ is 3-connected.
  %This implies that $\overline{H}$ is 3-connected.
  %If $|V_h|\geq 3$, 
  %then every vertex in $V_h$ has at least three neighbors outside $V_h$ in $\overline{H}$ 
  %(due to the fact that the difference in degrees of vertices in graphs in $\mathcal{Y}'$ is at most two), 
  %Then, since the vertices outside $V_h$ (i.e., $V_\ell\cup V_m$) forms an independent set, $\overline{H}$ is 3-connected. 
  %Let $|V_h|=2$. 
  %Then if the difference in degrees of vertices in $V_h$ is at most one, 
  %then every vertex in $V_h$ has at least three neighbors outside $V_h$ in $\overline{H}$ and hence $\overline{H}$ is 3-connected.  
  %The difference in degrees of vertices in $V_h$ is two in $H[V_h\cup V_m]$ only if one vertex in $V_h$ is adjacent to all vertices in $V_m$ and the other vertex in $V_h$ is nonadjacent to both the vertices in $V_m$. 
  %Then every vertex in $V_h$ has at least three neighbors outside $V_h$ in $\overline{H}$ and hence $\overline{H}$ is 3-connected. 
%  If $|V_h|=1$, then we get a contradiction as some vertices in $V_\ell$ have no neighbors. %clearly, $\overline{H}$ is 3-connected. %in Then every vertex in $V_h$ has at least three neighbors outside $|V_h|$ in $\overline{H}$. Then it can be easily verified that $\overline{H}$ is 3-connected. 
  
  Case 2: $\ell=2$. Since every graph in $\mathcal{Y}'$ has only at most four vertices, 
  there are no two nonadjacent vertices in it having degree at least three. %in the graphs in $\mathcal{Y}'$, 
  Therefore, $|V_m|=0$ or 1. 
  Therefore, $|V_h|=3$ or 4 (the case $|V_h|=2$ and $|V_m|=1$ does not arise as every vertex in $V_m$ should have degree at least 3).
  
  Suppose $|V_h|=3$.
  %Suppose $|V_h\cup V_m|=3$. Since there is no vertex in $H[V_h\cup V_m]$ with degree at least three, we obtain that $V_m=\emptyset$. 
  Since $H-V_\ell$ is not a clique and every vertex in $V_\ell$ is nonadjacent to exactly one vertex in $V_h$, we obtain that the sum of degrees of vertices of $V_h$ in $\overline{H}$ 
  is at least $|V_\ell|+2\geq 10$ (if $V_m$ is nonempty, then the vertex in $V_m$ is adjacent to all the three vertices in $V_h$ and there must be a missing edge among the vertices in $V_h$). This means that there is a vertex in $V_h$ 
  whose degree is at least 4 in $\overline{H}$, which is then true for every vertex 
  in $V_h$. 
  %If every vertex of $V_h$ has degree at least 4 in $\overline{H}$, 
  Then it 
  is easy to see that $\overline{H}$ is 3-connected. 
  %If every vertex in $V_h$ has 
  %degree 3 in $\overline{H}$, then we have to rule out the case that $x,y \in V_h$ can be 
  %separated from the rest by deleting two vertices. However, this means 
  %that $\{x,y\}$ is adjacent to at most two vertices of $V_\ell$ in $\overline{H}$, hence 
  %the third vertex in $V_h$ is adjacent to at least $|V_\ell|-2\ge 4$ vertices 
  %of $V_h$ in $\overline{H}$, contradicting the assumption that it has degree 3 in $\overline{H}$.
  
  Suppose that $|V_h|=4$.
  %Suppose that $|V_h\cup V_m|=4$. Since there are no two non-adjacent vertices in $H[V_h\cup V_m]$ with degree at least three, we obtain that $|V_m|=0$ or $1$.
  %Assume that $|V_m|=0$. 
  Since there are $2|V_\ell|$ edges between $V_\ell$ and $V_h$, at least one vertex in $V_h$ has at most $2|V_\ell|/4$ neighbors in $V_\ell$. Therefore, there is a vertex
  in $V_h$ that is adjacent to at least $|V_\ell|/2\geq 4$ vertices of $V_\ell$ in $\overline{H}$.
  So the degree of every vertex in $V_h$ is at least four in $\overline{H}$. 
  Then $\overline{H}$ is 3-connected 
  (note that it is not possible to separate 3 vertices of $V_h$ from the rest of the graph by deleting at most two vertices, 
  as any set $\{x,y,z\}\subseteq V_h$ is adjacent to every vertex in $V_\ell$).
  %If degree of every vertex in $V_h$ is exactly 3 in $\overline{H}$, then suppose that $\{x,y\}$ is separated from the rest of 
  %the graph by the deletion of two vertices. This means that $\{x,y\}$ see at 
  %most two vertices of $V_\ell$ in $\overline{H}$. The remaining at least 4 vertices of 
  %$V_\ell$ are adjacent to both of the remaining two vertices of $V_h$ in $\overline{H}$, making the degree of those two vertices at least 4.

  Case 3: $\ell = 3$. Since $\ell\leq |V_h\cup V_m|-1$, $|V_h\cup V_m|=4$. 
  Since every graph in $\mathcal{Y}'$ has only at most four vertices, there is no vertex in it with degree at least four.
  Therefore, $V_m=\emptyset$.
  %Since there is no vertex in $\mathcal{Y}'$ with degree more than three, we obtain that $V_m=\emptyset$. 
  Let $V_h=\{w,x,y,z\}$. 
  %Since $|V_\ell|\geq 8$, there are at least $8\cdot 3 = 24$ edges between $V_\ell$ and $V_h$.
  %This implies that there is at least one vertex in $V_h$ with at least 6 neighbors in $V_\ell$. 
  %Since the difference in degrees of vertices in graphs in $\mathcal{Y}'$ is at most 2, every vertex in $V_h$ has at least 4 neighbors in $V_\ell$. 
  We claim that $H$ is 3-connected. For a contradiction, assume that the vertex-connectivity of $H$ is at most two. 
  We partition $V_\ell$ into four sets $V_w, V_x, V_y,$ and $V_z$, where $V_u$ is the set of all vertices in $V_\ell$ not adjacent to $u$, for $u\in \{w,x,y,z\}$.
  If any of these sets is empty, then there is a vertex in $V_h$ adjacent to all vertices in $V_\ell$. Since the difference in degrees of vertices in graphs in 
  $\mathcal{Y}'$ is at most two, we obtain that every other set in the partition has size at most two. 
  %This implies that there is a vertex in $V_\ell$ having degree four (recall that $|V_\ell|\geq 8$), which is a contradiction.
  Then $H$ has only at most ten vertices, which is a contradiction.
  Therefore, every set in the partition is nonempty. Consider an auxiliary graph $J$ such that it is a bipartite graph with partition $V_h$ and  $U_\ell$,
  where $U_\ell$ has a vertex for each set in the partition. A vertex, say $w$ in $V_h$ is adjacent to a vertex $U_u$ in $U_\ell$, if and only if 
  $w$ is adjacent to all vertices in $V_u$ in $H$. It is straight-forward to verify that $J$ is 3-connected. It implies that $H$ is 3-connected.
%  .we obtain that there is at least one vertex in 
  %It can be verified easily that by deleting at most one vertex from $V_\ell$ and at most one vertex from $V_h$ we get a connected graph.
%  .Assume that by deleting at most two vertices in $V_\ell$ from $H$, we get a disconnected graph. 
%  .This implies that every other vertex in $V_\ell$ is adjacent to exactly a set of same three vertices, say $\{x,y,z\}$, in $V_h$. Therefore, $w$ has at most two .neighbors in $V_\ell$, which is a contradiction. 
  %and at least one vertex in $\{x,y,z\}$ has at least six neighbors in $V_\ell$. This is a contradiction as the difference in degrees of vertices in $H[V_h]$ is at most two. 
  %Now, assume that by deleting at most one vertex in $V_\ell$ and at most one vertex, say $w$, in $V_h$ from $H$, we get a disconnected graph. 
  %Then it can be easily verified that the disconnected graph has an isolated vertex (say $x$) which belongs to $V_h$. This implies that $x$ has at most one neighbor in $V_\ell$ and $w$ has at least five neighbors in $V_\ell$, which is a contradiction as in the previous case.
%  .Now, assume that, by deleting at most two vertices in $V_h$ (say $x,y$) from $H$, we get a disconnected graph. 
%  .Then every vertex in $V_\ell$ is adjacent to both $x$ and $y$. Then there is a vertex in $V_h$ which has at most $|V_\ell|/2$ neighbors in $V_\ell$. 
%  .Since the difference in number of neighbors in $V_\ell$ of vertices in $V_h$ cannot exceed two, we get a contradiction as $|V_\ell|\geq 8$.
  %Since $V_\ell\geq 6$, we get a contradiction, as in the previous case.
\end{proof}
%\end{comment}
\begin{corollary}
  \label{cor:c-small}
  Let $H\notin \mathcal{X}\cup \mathcal{Y}$ be such that $H - V_\ell$ is a complete graph and $H - V_h\in \mathcal{Y'}$. Then $H\in \LG$.
\end{corollary}

\subsection{Near-empty graphs}
\label{sec:nearempty}
Finally, we consider the cases when one of $H-V_\ell$ or $H-H_h$ is near empty. These cases are similar to the corresponding ones for empty graphs, but more technical and a higher number of corner cases need to be handled. Let us remark that this part of the proof is needed only for the \HED\ problem: near-empty graphs are not in $\mathcal{Y}_E$, hence if our goal is to prove Theorem~\ref{thm:main-editing} for \HEE, then the churning procedure can recurse on such graphs.

\begin{lemma}
  \label{lem:c-ne}
  Let $H\notin \mathcal{X}\cup \mathcal{Y}$ be such that $H-V_\ell$ is a complete graph and $H-V_h$ is a near-empty graph. Then $H\in \LG$.
\end{lemma}
%\begin{comment}
\begin{proof}
  If $H$ has at most eleven vertices, then by a computer search we found that $H\in \LG$. Therefore, let $H$ has at least twelve vertices. 
  For a contradiction, assume that $H\notin \LG$. We will prove that either $H$ or $\overline{H}$ is 3-connected, a contradiction.
  
  If $h^*=|V(H)|-h-1\geq 3$, then $\overline{H}$ is 3-connected. 
  Therefore, assume that $h^*\leq 2$. 
  If $|V_\ell\cup V_m|\leq 4$, then $V_\ell\cup V_m$ induces either a $K_2$ or a graph in $\mathcal{Y}'$.
  Then the statement follows from Lemma~\ref{lem:c-c} and Corollary~\ref{cor:c-small}.
  Therefore, assume that $|V_\ell\cup V_m|\geq 5$. 
  Since $V_h\cup V_m$ induces a clique and $V_\ell\cup V_m$
  induces a near-empty graph, we obtain that $|V_m|\leq 2$.
  %Therefore, $|V_\ell|\geq 3$.
  Therefore, $|V_\ell|\geq 3$.
  Clearly, every vertex in $V_h$ has the same number $t\geq 0$ of neighbors in $V_\ell$. 
  Since $h^*\leq 2$ and $|V_\ell|\geq 3$, we obtain that $t\geq 1$, which means that there is a vertex in $V_\ell$
  which has at least one neighbor in $V_h$.
  %If $t=0$ then $V_\ell$ induces a $K_2$, a contradiction.
  %Therefore, $t\geq 1$. 
  Hence $\ell\geq 1$.
  If $\ell\geq 4$, then $H$ is 3-connected.
  Therefore, $1\leq \ell\leq 3$.
  Every vertex in $V_\ell$ is adjacent to at least one vertex in $V_h$ except possibly for two vertices (due to the single edge in $H-V_h$).
  Let $uv$ be the edge in $H[V_\ell\cup V_m]$ and let $x=|V_\ell\cap \{u,v\}|$.
  We observe that the number of edges between $V_h$ and $V_\ell$ is $t|V_h| = \ell|V_\ell|-x$. %, where $x$ is the number of end-vertices of the edge in $H[V_\ell\cup V_m]$
  %that are in $V_\ell$.
  Since $h^*\leq 2$, $t\geq |V_\ell|-2$.
  Thus we obtain that 
  \begin{equation}
      \label{eq:c-ne}
      %|V_\ell|\leq \frac{2|V_h|-x}{|V_h|-\ell}
      |V_\ell|\cdot (|V_h|-\ell)\leq 2|V_h|-x
  \end{equation}
  
  %If $H-V_h$ has only at most four vertices, 
  %then $H-V_h$ is a $K_2$ or a graph in $\mathcal{Y}'$. Then the statement follows from Lemma~\ref{lem:c-c} and Corollary~\ref{cor:c-small}. Therefore, assume that $|V_\ell\cup V_m|\geq 5$. For a contradiction, assume that $H\notin\LG$. 
  %Then we will show that either $H$ or $\overline{H}$ is 3-connected, a contradiction. 

  %If $h^*=|V(H)|-h-1\geq 3$, then $\overline{H}$ is 3-connected. Therefore, $h^*\leq 2$. 
  %The conditions imply that $|V_m|\leq 2$. Therefore, $|V_\ell|\geq 3$.
  %If $\ell=0$, then $V_\ell$ is an independent set, $V_m$ induces a $K_2$ and $V_h\cup V_m$ forms a clique. This is a contradiction as vertices in $V_m$ have degree same as those in $V_h$.
  %Therefore, $\ell\geq 1$. 
  %If $\ell\geq 4$, then $H$ is 3-connected. 
  %Therefore, let $1\leq \ell\leq 3$. 
  %We observe that every vertex in $V_h$ has same number $t\geq 1$ of neighbors in $V_\ell$. 
  
  Case 1: $\ell =1$. Let $|V_h|\geq 2$. Then by (\ref{eq:c-ne}), $|V_\ell|\leq 2|V_h|/(|V_h|-1)$. 
  Then $|V_\ell|\geq 3$ is possible only when $|V_h|$ is 2 or 3, but then $|V_m|\leq 2$ implies that $H$
  has less than 12 vertices, a contradiction.
  %Since $H$ has at least twelve vertices and $V_m$ has at most two vertices,
  %we obtain that $|V_h|\geq 8$. Then $|V_\ell|\leq 2$, a contradiction. 
  Therefore, $|V_h|=1$.
  %Since $H$ has at least ten vertices and $V_m$ has at most two vertices, 
  %$|V_h|+|V_\ell| = |V_h|+t|V_h|=(t+1)|V_h|\geq 8$. Clearly, $h^*\geq (|V_h|-1)\cdot t$., then since $|V(H)|\geq 10$, $h^*\geq 3$, a contradiction. Therefore, let $|V_h| = 1$. 
  If $V_m=\emptyset$, then $H$ is $K_{1,t}\cup K_2$ ( $\in\mathcal{F}_7$), a contradiction. 
  If $|V_m|=1$, then $H$ is $T_{t,1}$ ($\in\mathcal{F}_4$), a contradiction. 
  If $|V_m|=2$, then $H$ is $\overline{(K_{t+2}-e)\cup K_1}$ ($\in\mathcal{F}_8$), a contradiction. 
  
  Case 2: $\ell=2$. 
  Clearly, $|V_h|\geq 2$. 
  If $|V_h|\geq 7$, then by (\ref{eq:c-ne}), $|V_\ell|\leq 2$, a contradiction. Therefore, $2\leq |V_h|\leq 6$. 
  
  Let $|V_h|=2$. 
  If $V_m=\emptyset$, then $H$ is $J_t$ ($\in\mathcal{F}_9$), a contradiction. 
  It can be easily verified that there is no $H$ with $|V_m|=1$. If $|V_m|=2$, then $H$ is $\overline{(K_{t+2}-e)\cup 2K_1}$ ($\in\mathcal{F}_5$), a contradiction. 
  
  Therefore, $|V_h|\in \{3, 4, 5, 6\}$. 
  By (\ref{eq:c-ne}), if $|V_h|=6$, then $|V_\ell|\leq 3$. Since $H$ has at least twelve vertices, $|V_m|\geq 3$, a contradiction. 
  Hence assume that $|V_h|\in \{3,4,5\}$. Then by (\ref{eq:c-ne}), $|V_\ell|\leq 6$. Thus $|V_m|\geq 3$, a contradiction.

  Case 3: $\ell=3$.
  Clearly, $|V_h|\geq 3$.
  It is easy to verify that $H$ is 3-connected unless both $u$ and $v$ are in $V_\ell$ (i.e., $x=2$). 
  Therefore, $x=2$ and $|V_m|\leq 1$.
  Then by (\ref{eq:c-ne}), 
  \begin{equation}
  \label{eq:c-ne2}
  |V_\ell|\cdot (|V_h|-3)\leq 2|V_h|-2    
  \end{equation}
  If $|V_h|\geq 8$, we obtain from (\ref{eq:c-ne2}) that $|V_\ell|\leq 2$, a contradiction.
  Therefore, $3\leq |V_h|\leq 7$.
  
  If $|V_h|\in \{4,5,6,7\}$, then using (\ref{eq:c-ne2}), we obtain that $|V_\ell|+|V_h|+|V_m| < 12$, a contradiction.
  Let $|V_h|=3$.
  Then the number of edges between $V_\ell$ and $V_h$ is $3|V_\ell|-2$ which is not a 
  multiple of $|V_h|$, a contradiction.
  
  %Case 3ii: $|V_h|=4$.
  %Then by (\ref{eq:c-ne2}), $|V_\ell|\leq 6$. 
  %Since $|V_m|\leq 1$ and $|V(H)|\geq 10$,  we obtain that $|V_\ell|\geq 5$.
  %We observe that the number of edges between $V_\ell$ and $V_h$ is $3|V_\ell|-2$.
  %Since it must be a multiple of $|V_h|$, we obtain that $|V_\ell|=6$.
  %Then $H$ is $S_{33}$ (when $|V_m|=0$) or $S_{34}$ (when $|V_m|=1$).
  
  %Case 3iii: $|V_h|=5$. 
  %Then by (\ref{eq:c-ne2}), $|V_\ell|\leq 4$.
  %Since $|V(H)|\geq 10$, we obtain that $|V_\ell|=4$ and $|V_m|=1$.
  %Then $H$ is $S_{36}$, a contradiction.
  
  %Case 3iv: $|V_h|=6$. 
  %Then by (\ref{eq:c-ne2}), $|V_\ell|\leq 3$. Then $|V_\ell|=3$ and $|V_m|=1$.
  %We observe that the number of edges between $V_\ell$ and $V_h$ is $7$,
  %which is a contradiction, as it is not a multiple of $|V_h|$.
  
  %Case 3v: $|V_h|=7$. 
  %Then by (\ref{eq:c-ne2}), $|V_\ell|\leq 3$. Then $|V_\ell|=3$ as it cannot be less than 3. 
  %Then the number of edges between $V_\ell$ and $V_h$ is $3\cdot 3-2 = 7$.
  %This implies that $t=1$. This is a contradiction as two vertices in $V_h$ has two neighbors ($u$ and $v$) each in $V_\ell$. 
\end{proof}
%\end{comment}
\begin{lemma}
  \label{lem:ne-c}
  There exists no $H\notin \mathcal{X}\cup \mathcal{Y}$ such that $H-V_\ell$ is a near-empty graph and $H-V_h$ is a complete graph. 
\end{lemma}
%\begin{comment}
\begin{proof}
  By a computer search, we found that there exists no such graph $H$ with at most eleven vertices. 
  For a contradiction, assume that there exists such a graph $H$ with at least twelve vertices. 
  Let $H-V_h$ has $s$ vertices. 
  Since it forms a clique, $\ell\geq s-1$. 
  Therefore, $h\geq s$. 
  This implies that a vertex in $V_h$ is adjacent to at least $s-1$ vertices in $H-V_h$ 
  (recall that $H[V_h]$ has only at most one edge). 
  Therefore, a vertex in $V_m\cup V_\ell$ has degree at least $s$ and hence $h\geq s+1$. 
  Therefore, every vertex in $V_h$ has degree 1 in $H[V_h]$ and is adjacent to every vertex in $V_m\cup V_\ell$. 
  This implies that vertices in $V_\ell$ are universal, which is a contradiction.
  %This implies that $\ell=h-1$ and hence $V_m=\emptyset$. This implies that $H$ is $K_{s+2}$, which is a contradiction.
\end{proof}
%\end{comment}

Next we state and prove an observation which will be used in the proofs of a few lemmas in this section.

\begin{observation}
\label{obs:ene-ene}
Let $H$ be a graph with at least twelve vertices such that its vertices can be partitioned into two sets $A$ and $B$ such that the following conditions are satisfied:
\begin{enumerate}[(i)]
    \item $|A|, |B|\geq 3$;
    \item $H[A]$ and $H[B]$ are empty or near-empty;
    \item every vertex in $A$ has degree $r$ for some $r\geq 1$, and every vertex in $B$ has degree $s$ for some $s\geq 1$;
    \item every vertex in $A$ has a non-neighbor in $B$, and every vertex in $B$ has a non-neighbor in $A$;
\end{enumerate}
Then $\overline{H}$ is 3-connected.
\end{observation}
\begin{proof}
  %For a contradiction assume that $\overline{H}$ has vertex connectivity at most two. 
  Since $|A|, |B|\geq 3$ and $H[A]$ and $H[B]$ are empty or near-empty, we
  obtain that the complement of $H[A]$ and the complement of $H[B]$ are connected. 
  Therefore, due to condition (iv), if we remove at most two vertices from $A$ or at most two vertices from $B$ in $\overline{H}$,
  then the resultant graph remains connected. Now assume that we remove one vertex $a$ from $A$ and one vertex $b$ from $B$ in $\overline{H}$. Let the resultant
  graph be $J$. It is sufficient to prove that $J$ is connected. 
  For a contradiction, assume that $J$ is disconnected.
  If $J[A\setminus \{a\}]$ and $J[B\setminus \{b\}]$ are connected, 
  then $J$ is disconnected only when every vertex in $A$ is adjacent to all vertices except $b$ in $B$, and 
  every vertex in $B$ is adjacent to all vertices except $a$ in $A$ in $H$. This contradicts condition (iii). 
  Therefore, either $J[A\setminus \{a\}]$ or $J[B\setminus \{b\}]$ is disconnected. Without loss of generality, assume that $J[A\setminus \{a\}]$
  is disconnected. Then $|A|=3$ and $H[A]$ induces a $K_2\cup K_1$, where the $K_1$ is formed by the vertex $a$. 
  Let $A=\{a, a_1, a_2\}$. Since $H$ has at least twelve vertices, we obtain that $|B|\geq 9$. Therefore, $J[B\setminus \{b\}]$ is connected. 
  Then $J-\{a,b\}$ is disconnected only when either $a_1$ or $a_2$ is adjacent to only $b$ in $B$ in $\overline{H}$. 
  This implies that the degree of a vertex in $A$ is $2$ in $\overline{H}$. 
  This implies that the vertex $a$ is adjacent to all vertices in $B$ in $H$, which contradicts condition (iv).
\end{proof}

\begin{lemma}
  \label{lem:is-ne}
  Let $H\notin \mathcal{X}\cup \mathcal{Y}$ such that $H - V_\ell$ is an empty graph and $H - V_h$ is a near-empty graph. Then $H\in \LG$.
\end{lemma}
%\begin{comment}
\begin{proof}
  If $H$ has only at most eleven vertices, by a computer search we found that $H\in \LG$. 
  Assume that $H$ has at least twelve vertices. 
  For a contradiction, assume that $H\notin\LG$. 
  Then we will show that either $H$ or $\overline{H}$ is 3-connected, a contradiction.

  If $H-V_h$ is a $K_2$, then by Lemma~\ref{lem:is-c}, $H$ does not exist. Therefore, let $|V_m\cup V_\ell|\geq 3$. 
  If $\ell=0$, then $H[V_m]$ contains an edge, which is a contradiction as $V_h\cup V_m$ is an independent set. 
  Therefore, $\ell\geq 1$. Since $H-V_\ell$ has no edge and $H-V_h$ has exactly one edge, a vertex in $V_m$ has degree at most 1, which is a contradiction as it is not more than $\ell$. Therefore, $V_m=\emptyset$. 
  Hence $|V_\ell|\geq 3$.
  Further, $|V_h|\geq \ell$ and $|V_\ell|\geq h$.
  If a vertex in $V_h$ is adjacent to all vertices in $V_\ell$, then every vertex in $V_h$ is adjacent to all vertices in $V_\ell$. 
  Then there will be a discrepancy in the degrees of vertices in $V_\ell$ (due to the single edge in $H[V_\ell]$ and the fact that $|V_\ell|\geq 3$).
  Therefore, $h<|V_\ell|$. 
  Let $|V_h|\geq 3$.
  If $\ell = |V_h|$, then $H$ is 3-connected.
  Hence $\ell \leq |V_h|-1$.
  %Then it can be easily verified that $\ell<|V_h|$.
  Therefore, every vertex in $V_h$ has a non-neighbor in $V_\ell$ and every vertex in $V_\ell$ has a non-neighbor in $V_h$ in $H$. 
  Hence by Observation~\ref{obs:ene-ene}, $\overline{H}$ is 3-connected.
  Therefore, $|V_h|\leq 2$. Hence $\ell=1$ or $2$.
  Let $\ell=1$. 
  Then $H$ is $sK_{1,h}\cup K_2$. If $s>1$, then $\overline{H}$ is 3-connected. 
  When $s=1$, $H$ is $K_{1,h}\cup K_2$ ($\in\mathcal{F}_7$), a contradiction.
  Let $\ell=2$. Then $|V_h|=2$ and $H$ is $Q_t$ ($\in\mathcal{F}_{10}$), a contradiction. 
\end{proof}
%\end{comment}
\begin{lemma}
  \label{lem:ne-is}
  Let $H\notin \mathcal{X}\cup \mathcal{Y}$ such that $H - V_\ell$ is a near-empty graph and $H - V_h$ is an empty graph. Then $H\in \LG$.
\end{lemma}
%\begin{comment}
\begin{proof} 
  If $H$ has only at most eleven vertices, then by a computer search we found that $H\in \LG$. 
  Let $H$ has at least twelve vertices. 
  For a contradiction, assume that $H\notin\LG$. We will show that $H$ or $\overline{H}$ is 3-connected, a contradiction.

  If $H-V_\ell$ is a $K_2$, then the statement follows from Lemma~\ref{lem:c-is}. Therefore, let $|V_m\cup V_h|\geq 3$. 
  If $\ell=0$, then $H$ is $K_2\cup |V_\ell|K_1$ ($\in \mathcal{X}\cup \mathcal{Y}$), a contradiction. Therefore, $\ell\geq 1$. Since $H-V_h$ has no edge and $H-V_\ell$ has exactly one edge, a vertex in $V_m$ has degree at most 1, which is a contradiction as it is not more than $\ell$. Therefore, $V_m=\emptyset$. 
  Hence $|V_h|\geq 3$.
  Further, $|V_h|\geq \ell$ and $|V_\ell|\geq h$.
  If a vertex in $V_\ell$ is adjacent to all vertices in $V_h$, then every vertex in $V_\ell$ is adjacent to all vertices in $V_h$. 
  Then there will be a discrepancy in the degrees of vertices in $V_h$ (due to the single edge in $H[V_h]$ and the fact that $|V_h|\geq 3$).
  Therefore, $\ell<|V_h|$. 
  Let $|V_\ell|\geq 3$.
  Then it can be easily verified that $h<|V_\ell|$.
  Therefore, every vertex in $V_h$ has a non-neighbor in $V_\ell$ and every vertex in $V_\ell$ has a non-neighbor in 
  $V_h$ in $H$. Hence by Observation~\ref{obs:ene-ene}, $\overline{H}$ is 3-connected.
  Therefore, $|V_\ell|\leq 2$. Hence $h=1$ or $2$. Since $\ell\geq 1$,  $h=2$.
  Then $\ell=1$. Then $H$ is $P_4\cup sP_3$, for $s\geq 3$ (as $|V(H)|\geq 12$).
  Then $\overline{H}$ is 3-connected.
  %Then $H$ is $sK_{1,h}\cup K_2$. If $s>1$, then $\overline{H}$ is 3-connected. 
  %When $s=1$, $H$ is $K_{1,h}\cup K_2$ ($\in\mathcal{F}_7$), a contradiction.
  %Let $\ell=2$. Then $|V_h|=2$ and $H$ is $Q_t$ ($\in\mathcal{F}_{10}$), a contradiction. 
  %
  %Clearly $\ell\geq 1$. Since $H-V_\ell$ has exactly one edge and $H-V_\ell$ has no edge, a vertex in $V_m$ has degree at most 1. Therefore, $V_m=\emptyset$. Further, $|V_h|\geq \ell$. If a vertex in $V_h$ has at least three non-neighbors in $V_\ell$, then $\overline{H}$ is 3-connected. Therefore, $h^*\leq 4$. Since $H$ has at least ten vertices, $h=n-h^*-1\geq 5$.  
  %Let $\ell=1$. Since $h\geq 5$, $\overline{H}$ is 3-connected. 
  %Let $\ell=2$. If $|V_h|=2$, then $H$ is $K_2\vee tK_1$ (a graph in $\mathcal{F}_3$), a contradiction. If $|V_h|\geq 3$, then $\overline{H}$ is 3-connected.
  %Let $\ell\geq 3$. We claim $\overline{H}$ is 3-connected. We note that $V_\ell$ induces a clique and $V_h$ induces $K_{|V_\ell|}-e$ in $\overline{H}$. If every vertex in $V_\ell$ has at least one neighbor in $V_h$ in $\overline{H}$, then $\overline{H}$ is 3-connected. Assume that there is one vertex in $V_\ell$ which has no neighbors in $V_h$ in $\overline{H}$. Then every vertex in $V_\ell$ is adjacent to every vertex in $V_h$ in $H$. This gives a contradiction, as the end-vertices of the edge in $H[V_h]$ has degree more than other vertices in $V_h$.
\end{proof}
%\end{comment}
\begin{lemma}
  \label{lem:ne-ne}
  Let $H\notin \mathcal{X}\cup \mathcal{Y}$ be such that both $H - V_\ell$ and $H-V_h$ are near-empty graphs. Then $H\in \LG$.
\end{lemma}
%\begin{comment}
\begin{proof}
  If $H$ has only at most eleven vertices, by a computer search we found that $H\in \LG$. 
  Assume that $H$ has at least twelve vertices. For a contradiction, assume that $H\notin \LG$. 
  Then we will show that either $H$ or $\overline{H}$ is 3-connected, a contradiction. 
  
  If $H-V_\ell$ is a $K_2$, then the statement follows from Lemma~\ref{lem:c-ne}. Therefore, let $|V_m\cup V_h|\geq 3$.
  If $H-V_h$ is a $K_2$, then by Lemma~\ref{lem:ne-c}, $H$ does not exist.
  Therefore, let $|V_\ell\cup V_m|\geq 3$. 
  Let $\ell = 0$.
  Then $V_\ell$ and $V_h$ are independent sets and $V_m$ induces a $K_2$.
  This is a contradiction as there are no edges incident to vertices in $V_h$.
  Therefore, $\ell\geq 1$.
  Since $H-V_h$ and $H-V_\ell$ have exactly one edge each, a vertex in $V_m$ has degree at most 2. 
  Therefore, if $V_m\neq \emptyset$, then $|V_m|=1$, $\ell=1$, and $h\geq 3$. 
  Then $H$ is $T_{h-1,1}\cup sK_{1,h}$, for $s\geq 1$ (recall that $|V_h\cup V_m|\geq 3$).
  Then $\overline{H}$ is 3-connected. 
  Therefore, $V_m=\emptyset$.
  %Further, $|V_h| > \ell$ and $|V_\ell| > h$.
  %Since $\ell = 1$, $|V_h| > \ell$
  We observe that there is at least one isolated vertex in $H[V_\ell]$ and in $H[V_h]$.
  Let $uv$ be the edge in $H[V_\ell]$ and $xy$ be the edge in $H[V_h]$.
  If a vertex in $V_\ell$ is adjacent to all vertices in $V_h$ or 
  if a vertex in $V_h$ is adjacent to all vertices in $V_\ell$, then
  $H$ has all possible edges between $V_\ell$ and $V_h$ except two. In that case, the two nonedges between 
  $V_\ell$ and $V_h$ are $ux$ and $vy$ or $uy$ and $vx$. Then $H$ is 3-connected.
  Therefore, every vertex in $V_h$ has a non-neighbor in $V_\ell$ and every vertex in $V_\ell$ has a non-neighbor in $V_h$ 
  in $H$. Hence by Observation~\ref{obs:ene-ene}, $\overline{H}$ is 3-connected.
  %Clearly $\ell\geq 1$. Since both $H-V_\ell$ and $H-V_\ell$ have exactly one edge, a vertex in $V_m$ has degree at most two. Since every vertex in $V_m$ has degree more than $\ell$, we obtain that every vertex in $V_m$ has degree exactly two. Therefore, if $V_m\neq \emptyset$, then $\ell=1$. Then $H$ is $T_{t,1}$ (a graph in $\mathcal{F}_4$), a contradiction. Therefore, assume that $V_m=\emptyset$. Then $|V_\ell|\geq |V_h|\geq 2$ and $|V_h|\geq \ell$.  
  %Let $\ell=1$. Then, since $H$ has at least ten vertices, $\overline{H}$ is 3-connected.
  %Let $\ell=2$. Then $H$ is $J_t$ (a graph in $\mathcal{F}_9$), a contradiction. 
  %Let $\ell\geq 3$. Then $|V_\ell|\geq |V_h|\geq 3$. If every vertex in $V_h$ has at least one non-neighbor in $V_\ell$, then $\overline{H}$ is 3-connected. 
  %Therefore, assume that there is a vertex $v$ in $V_h$ adjacent to all vertices in $V_\ell$. Then $v$ must be a vertex with degree 0 in $H[V_h]$, otherwise, an end-vertex of the edge in $H[V_h]$ gets degree more than that of degree-0 vertices in $H[V_h]$ in $H$. Then every vertex with degree 0 in $H[V_h]$ is adjacent to every vertex in $V_\ell$. Let $xy$ be the edge in $H[V_h]$ and $x'y'$ be the edge in $H[V_\ell]$. Since every vertex in $V_\ell$ has degree $\ell$ and every vertex in $V_h$ has degree $h$, we obtain that $x$ is adjacent to all except one end-vertex of $x'y'$, say $x'$, in $V_\ell$. Similarly, $y$ is adjacent to all vertices except $y'$ in $V_\ell$. Then $H$ is 3-connected.
\end{proof}
%\end{comment}
\begin{lemma}
  \label{lem:ne-small}
  Let $H\notin \mathcal{X}\cup \mathcal{Y}$ be such that $H - V_\ell$ is a near-empty graph and $H - V_h\in \mathcal{Y'}$. Then $H\in \LG$. 
\end{lemma}
%\begin{comment}
\begin{proof}
  If $H$ has only at most eleven vertices, by using a computer search we verified that $H\in \LG$. 
  Let $H$ has at least twelve vertices. Since every graph in $\mathcal{Y}'$ has only at most four vertices, $V_h$ has at least eight vertices. 
  Since $H[V_h]$ has at most one edge and $H[V_\ell\cup V_m]$ has at least one edge, a simple degree counting gives us a contradiction to the fact that 
  $\ell < h$.
\end{proof}
%\end{comment}
\begin{lemma}
  \label{lem:small-ne}
  Let $H\notin \mathcal{X}\cup \mathcal{Y}$ be such that $H - V_\ell\in \mathcal{Y'}$ and $H - V_h$ is a near-empty graph. Then $H\in \LG$.
\end{lemma}
%\begin{comment}
\begin{proof}  
  If $H$ has only at most eleven vertices, by using a computer search we verified that $H\in \LG$. 
  Assume that $H$ has at least twelve vertices. 
  For a contradiction, assume that $H\notin \LG$. 
  We will show that either $H$ or $\overline{H}$ is 3-connected, a contradiction. 
  
  We observe that $|V_\ell|\geq 8$ if $|V_h\cup V_m|=4$, 
  and $|V_\ell|\geq 9$ if $|V_h\cup V_m|=3$ 
  (recall that a graph in $\mathcal{Y}'$ has either 3 or 4 vertices). 
  
  Therefore, if $\ell=0$, then $\overline{H}$ is 3-connected. Hence assume that $\ell\geq 1$. 
  If a vertex in $V_\ell$ is adjacent to all vertices in $V_h\cup V_m$, then every degree-0 vertex in $H[V_\ell]$ is adjacent to every vertex in $V_h\cup V_m$. 
  Since there are at least three vertices in $V_\ell$ adjacent to all vertices in $V_h\cup V_m$, $H$ is 3-connected. 
  Therefore, assume that none of the vertices in $V_\ell$ is adjacent to all vertices in $V_h\cup V_m$. 
  Therefore, $\ell\leq |V_h\cup V_m|-1$. 
  %Let $m$ be the number of edges in $H[V_h\cup V_m]$. 
  %We observe that $m\in \{1,2,3,4,5\}$. 
  Let $H[V_\ell\cup V_m]$ be $K_2\cup tK_1$, for $t\geq 6$. 
  If $|V_m|=3$, then the single edge in $H[V_\ell\cup V_m]$ and the edges from the single vertex in $V_h$ cannot give degree at least two for every vertex in $V_m$.
  Therefore, $|V_m|\leq 2$. 
  Further, since $H[V_\ell\cup V_m]$ induces a $K_{t+2}-e$ in $\overline{H}$, 
  if every vertex in $V_h$ has at least three non-neighbors in $V_\ell\cup V_m$ in $H$, then $\overline{H}$ is 3-connected. 
  Therefore, assume that there exists a vertex in $V_h$ with at most two non-neighbors in $V_\ell\cup V_m$.
  Since the difference in degrees of vertices in graphs in $\mathcal{Y}'$ is at most two, we obtain that every vertex in
  $V_h$ has only at most four non-neighbors in $V_\ell\cup V_m$. 
  
  Case 1: $\ell = 1$. Since there is a vertex in $V_h$ with at most two non-neighbors in $V_\ell$ and 
  every vertex in $V_h$ has only at most four non-neighbors in $V_\ell$, 
  we obtain that, if $|V_h|\geq 2$, then there is a vertex in $V_\ell$ with degree at least 2, a contradiction.
  Therefore, $|V_h|\leq 1$. %Hence $|V_m|=1$ or 2. 
  %Now, we have the following cases. 
  Therefore, $|V_m|=2$.
  %Case 1a: $|V_h|=1, |V_m|=2$. 
  Since both the vertices in $V_m$ has degree at least two, both must be mutually adjacent and adjacent to the vertex in $V_h$. Then $H[V_h\cup V_m]$ is a $K_3$, a contradiction. 
  
  %Case 1b: $|V_h|=2$, $|V_m|=1$. Since $H[V_h\cup V_m]$ is a graph in $\mathcal{Y}'$ with three vertices, the difference in degrees of vertices in $V_h$ in $H[V_h\cup V_m]$
  %is at most one. Let $v_1,v_2$ be the vertices in $V_h$. Let $v_1$ be a vertex with at most two non-neighbors in $V_\ell$. Then $v_2$ must be
  %having only at most three non-neighbors in $V_\ell$. Since $\ell=1$, this implies that $|V_\ell|\leq 5$. This contradicts with the fact that $H$ has at least ten %vertices.
  
  %Case 1c: $|V_h|=2, |V_m|=2$. Due to the bounds on the non-neighbors in $V_\ell$ of vertices in $V_h$, we obtain that $|V_\ell|=6$.  
  %Then one vertex in $V_h$ is adjacent to four vertices in $V_\ell$ and non-adjacent to two vertices in $V_m$ and the other vertex in $V_h$ is adjacent to two vertices in $V_\ell$ and two vertices in $V_m$. Then the two vertices in $V_m$ are adjacent. Then $H$ can be one of the two graphs obtained based on whether the two vertices in $V_h$ are adjacent or not. Then both the vertices in $V_h$ are adjacent to four vertices in $V_\ell\cup V_m$ in $\overline{H}$ and hence $\overline{H}$ is 3-connected.
  
  Case 2: $\ell = 3$. Since $\ell\leq |V_h\cup V_m|-1$, we obtain that $|V_h\cup V_m|=4$. Since there is no vertex in $\mathcal{Y}'$ with degree more than three, we obtain that either $V_m=\emptyset$ or $|V_m|=1$ and the vertex in $V_m$ has exactly three neighbors in $V_h$ and one neighbor in $V_\ell$. 
  
  Case 2a: $V_m=\emptyset$. 
  Let $u,v$ be the two adjacent vertices in $H[V_\ell]$.
  We observe that $V_\ell$ induces $K_{|V_\ell|}-e$ in $\overline{H}$.
  Since $\ell=3$ and $|V_h|=4$, $u$ and $v$ has exactly
  two neighbors each in $V_h$ in $\overline{H}$.
  Further, every other vertex in $V_\ell$ has exactly one neighbor each in $V_h$ in $\overline{H}$.
  Therefore, there are $2\cdot 2 + (|V_\ell|-2) = |V_\ell|+2$ edges between $V_\ell$ and $V_h$ in $\overline{H}$.
  Let $p$ be the number of edges in the graph induced by $V_h$ in $\overline{H}$.
  We observe that $1\leq p\leq 5$.
  Then we have 
  \begin{equation}
      \label{eq:small-ne}
      |V_\ell|+2=4h^*-2p
  \end{equation}
  Recall that $|V_\ell|\geq 8$.
  Therefore, if $h^*\leq 2$, we get a contradiction by (\ref{eq:small-ne}).
  Let $h^*=3$.
  Then by (\ref{eq:small-ne}), $p=1$ and $|V_\ell|=8$. 
  This implies that $H[V_h]$ is a diamond. Since $H$ is not 3-connected, 
  $u$ and $v$ are adjacent to two vertices $\{x,y\}$ in $V_h$.
  This gives rise to three cases - (i) $x$ and $y$ are the two degree-3 vertices in the diamond induced by $H[V_h]$;
  (ii) $x$ and $y$ are the two degree-2 vertices in the diamond;
  (iii) $x$ is a degree-2 vertex and $y$ is a degree-3 vertex in the diamond.
  It can be verified that in case (i), $H$ is $S_{35}$ and in case (ii) and (iii),
  $\overline{H}$ is 3-connected.
  %Since $\overline{H}$ is not 3-connected, we obtain that there are two adjacent (in $\overline{H}$) vertices in $V_h$ adjacent (in $\overline{H}$) to both $u$ and $v$.
  %By (\ref{eq:small-ne}), $V_\ell=6$ or 8.
  %If $V_\ell=6$ then $p=2$.
  %Then $V_h$ induces a $2K_2$, a contradiction.
  %Let $|V_\ell|=8$. 
  %Then $H$ is $S_{35}$, a contradiction.
  If $h^*\geq 4$, it can be easily verified that $\overline{H}$ is 3-connected.
  %Clearly, $h^*=|V(H)|-h-1 \geq 1$. If $h^*\geq 6$, then every vertex
  %in $V_h$ has at least three neighbors in $V_\ell$ in $\overline{H}$ and hence
  %$\overline{H}$ is 3-connected. Therefore, $h^*\leq 5$.
  %Therefore, if $h^*=1$, we get a contradiction as some vertices in $V_h$ has at least two neighbors in $V_\ell$ in $\overline{H}$.
  %Let $h^*=2$. 
  %Let $h^*=1$.
  %Then the neighbors of 
  %Then, $\overline{H}$ is $H'\cup K_2$, where $H'$ is the graph obtained from $K_{|V_\ell|}-e$ 
  %by attaching one pendent vertex each on the two lowest degree vertices. 
  %Then every vertex in $V_\ell$ has at least three neighbors in $V_h$ in $H$ and hence $H$ is 3-connected.
  %Let $h^*=2$. 
  %If $u$ and $v$ (the end-vertices of the edge in $H[V_\ell]$) have only one neighbor in $V_h$ in $\overline{H}$, then $\overline{H}$ is the graph
  %obtained by attaching a path of length five between the lowest degree vertices in $K_{|V_\ell|}-e$.
  %Then every vertex in $V_\ell$ has at least three neighbors in $V_h$ in $H$ and hence $H$ is 3-connected.
  %If $u$ and $v$ have more than one neighbors in $V_h$ in $\overline{H}$, then it can be seen that every vertex in $V_\ell$
  %has at least three non-neighbors in $V_h$ in $\overline{H}$. Therefore, $H$ is 3-connected.
  %Let $h^*=3$. 
  %Then it can be verified that $\overline{H}$ is not 3-connected only if two adjacent (in $\overline{H}$) vertices in $V_h$ are adjacent (in $\overline{H}$) to only the end vertices of the non-edge of the $K_{t+2}-e$ induced by $V_\ell$ in $\overline{H}$. Then $H$ is $S_{35}$, a contradiction.
  
  Case 2b: $|V_m|=1$ and the vertex, say $v_m$, in $V_m$ has exactly three neighbors in $V_h$ and one neighbor, say $w$, in $V_\ell$. 
  Further, $V_\ell$ forms an independent set of at least eight vertices. Since $\ell=3$, every vertex in $V_\ell$, except $w$ is adjacent to all the three vertices in $V_h$. Then $w$ must be adjacent to exactly two vertices in $V_h$. %Then this gives a contradiction as degrees of vertices in $V_h$ are not the same.  
  This gives rise to a 3-connected graph $H$, which is a contradiction.
  
  Case 3: $\ell=2$. 
  
  Case 3a: $|V_h\cup V_m|=3$. Since a vertex in $V_m$ has degree at least three, we obtain that either $V_m=\emptyset$ or $|V_m|=1$ and the vertex in $V_m$ is adjacent to both the vertices in $V_h$ and adjacent to exactly one vertex in $V_\ell$. 
  
  Case 3ai: $V_m=\emptyset$. Clearly $V_\ell$ induces $K_2\cup tK_1$, for $t\geq 7$. 
  As claimed earlier, there is a vertex in $V_h$ non-adjacent to only at most two vertices in $V_\ell$. 
  Since the difference in degrees of vertices in $H[V_h]$ is only at most one, we obtain that the other two vertices in $V_h$ is adjacent to at least $t-1$ vertices in $V_\ell$. 
  Then there are at least $3t-2$ edges between $V_h$ and $V_\ell$. Since every vertex (except for two adjacent vertices) in $V_\ell$
  has exactly two edges to $V_h$, this count must be $2t+2$. 
  This is a contradiction as $3t-2 > 2t+2$ for $t\geq 7$.
  
  Case 3aii: $|V_m|=1$ and the vertex in $V_m$ is adjacent to both the vertices in $V_h$ and adjacent to exactly one vertex, say $v$, in $V_\ell$. Then all the vertices in $V_\ell$, except $v$, is adjacent to both the vertices in $V_h$. Then $v$ must be adjacent to exactly one vertex in $V_h$, which is a contradiction.
  
  Case 3b: $|V_h\cup V_m|=4$. 
  
  Case 3bi: $V_m=\emptyset$. Since $H$ has at least 12 vertices, $V_\ell$ induces $K_2\cup tK_1$ for some $t\geq 6$. 
  Since there is a vertex in $V_h$ adjacent to at least $t$ vertices in $V_\ell$ and the difference in degrees of vertices in every graph in $\mathcal{Y}'$ is at most two, we obtain that there are at least $t+3(t-2)=4t-6$ edges between $V_h$ and $V_\ell$. Since $\ell=2$ and $H[V_\ell]$ induces $K_2\cup tK_1$, we obtain that there are exactly $2t+2$ edges between $V_h$ and $V_\ell$. Therefore, $t\leq 4$, which is a contradiction. 
  %Since $t\geq 4$, we have $t=4$. Then we obtain that a vertex in $V_h$ is adjacent to exactly $t$ vertices in $V_\ell$ and the other three vertices in $V_h$ is adjacent to exactly $t-2$ vertices in $V_\ell$. Therefore, $H[V_h]$ is a $\overline{\text{claw}}$. Then it can be verified that $\overline{H}$ is 3-connected.
  
  Case 3bii: $|V_m|=1$. Let $v_m$ be the vertex in $V_m$. 
  Assume that $v_m$ is not adjacent to $V_\ell$. 
  Then $V_\ell$ induces $K_2\cup tK_1$ for some $t\geq 6$.
  Since $v_m$ has degree at least three, $v_m$ must be adjacent to all the three vertices in $V_h$. 
  Recall that there exists a vertex in $V_h$ which is adjacent to at least $t$ vertices in $V_\ell$. 
  Since the difference in degrees of vertices in $V_h$ in $H[V_h\cup V_m]$ is at most one, 
  there are at least $t+2(t-1)=3t-2$ edges between $V_h$ and $V_\ell$. 
  Since $H[V_\ell]$ induces $K_2\cup tK_1$, there are exactly $2t+2$ edges between $V_\ell$ and $V_h$.
  Therefore, $2t+2\geq 3t-2$. 
  Then $t\leq 4$, which is a contradiction. 
  %Since $t\geq 4$, we obtain that $t=4$. 
  %Then one vertex in $V_h$ is adjacent to exactly four vertices in $V_\ell$ and the other two vertices in $V_h$
  %are adjacent to exactly three vertices each in $V_\ell$. 
  %Therefore, $H[V_h]$ induces $K_1\cup K_2$. 
  %Then it can be verified that $\overline{H}$ is 3-connected.
  Assume that $v_m$ is adjacent to a vertex, say $w$, in $V_\ell$. 
  Then $H[V_\ell]$ is an independent set of at least $t+1\geq 8$ vertices. 
  There are exactly $2(t+1)-1= 2t+1$ edges between $V_\ell$ and $V_h$.
  Since there exists a vertex in $V_h$ with only at most two non-neighbors in $V_\ell$,
  and the degrees of the vertices in $V_h$ in $H[V_h\cup V_m]$ differs only by one,
  we obtain that there are at least $(t-1)+2(t-2) = 3t-5$ edges between $V_\ell$ and $V_h$.
  Therefore, $2t+1 \geq 3t-5$. Then $t\leq 6$, which is a contradiction.
  %Further, $v_m$ is adjacent to at least two vertices in $V_h$. 
  %Assume that $v_m$ is adjacent to all the three vertices in $V_h$. 
  %Then in $\overline{H}$, $w$ is adjacent to exactly two vertices in $V_h$ and every vertex in $V_\ell\setminus \{w\}$ has exactly one neighbor in $V_h$. 
  %Therefore, if $\overline{H}$ is not 3-connected, there is a vertex in $V_h$ adjacent to exactly two vertices in $V_\ell$ and is not adjacent to any other vertex in $V_h$. 
  %Therefore, $h^*=2$ and hence $h=|V_h|-h^*-1=|V_h|-3$. 
  %Then, every vertex in $V_h$ is non-adjacent to only at most two vertices in $V_\ell$ in $H$. 
  %Then there are at least $3(t-1)$ edges between $V_\ell$ and $V_h$. 
  %But there are exactly $2t+1$ edges between $V_h$ and $V_\ell$. 
  %Therefore, $t\leq 4$, a contradiction. 
  %Assume that $v_m$ is adjacent to exactly two vertices in $V_h$. 
  %Then, by a similar argument, we obtain that $h^*=2$ and hence $h=|V(H)|-3$. 
  %Then a degree counting gives us a contradiction. 
  
  Case 3biii: $|V_m|=2$. 
  Let $x,y\in V_m$. 
  Assume that $x$ and $y$ are adjacent. 
  Then $V_\ell$ is an independent set of size at least $t\geq 8$ vertices. 
  Since vertices in $V_m$ has degree at least 3, both $x$ and $y$ are adjacent to both the vertices in $V_h$ and $H[V_h\cup V_m]$ is a diamond. 
  Further, every vertex in $V_\ell$ is adjacent to both the vertices in $V_h$. 
  Then the graph is $\overline{(K_{t+2}-e)\cup K_2}$ ($\in\mathcal{F}_6$), a contradiction. 
  Assume that $x$ and $y$ are nonadjacent. 
  Then either $x$ or $y$ cannot have degree at least three, a contradiction.
\end{proof}
\subsection{Putting it together}
\label{sec:putting}
Now we are able to formally prove the main results of the section.
\begin{proof}[Proof of Lemma~\ref{lem:churn}]
We prove by induction on the number of vertices in $H$.
To prove the first statement, let $H\notin \mathcal{Y}_D$. 
The base case is when $H\in \mathcal{X}_D\cup \LG$. Then the statement is trivial, as $H$ \simulates\ $H$. Assume that $H\notin \mathcal{X}_D\cup \LG$. 
Hence $H\notin \mathcal{Y}_D\cup \mathcal{X}_D = \mathcal{X}\cup \mathcal{Y}$. 
If $H-V_\ell\notin \mathcal{Y}$, then by the inductive hypothesis (as $H-V_\ell\notin \mathcal{Y}_D$), $H-V_\ell$ \simulates\ a graph in $\mathcal{X}_D\cup \LG$. By Proposition~\ref{pro:high-low}, $H$ \simulates\ $H-V_\ell$. Therefore, by trasitivity, $H$ \simulates\ a graph in $\mathcal{X}_D\cup \LG$. The same arguments apply when $H-V_h\notin \mathcal{Y}$. Therefore, assume that both $H-V_\ell$ and $H-V_h$ are in $\mathcal{Y}$. Then by Lemma~\ref{lem:c-c} to Lemma~\ref{lem:small-ne}, we have that $H\in \mathcal{W}$, which is a contradiction.

To prove the second statement, let $H\notin \mathcal{Y}_E$. 
The base case is when $H\in \mathcal{X}_E\cup \LG$, then the statement is trivially true. Assume that $H\notin \mathcal{X}_E\cup \LG$.
Therefore, $H\notin\mathcal{X}_E\cup \mathcal{Y}_E=\mathcal{X}\cup\mathcal{Y}$.
%Hence $H$ is eligible to be fed to Churn(). 
If $H-V_\ell\in \mathcal{Y}\setminus \mathcal{Y}_E$, then $H-V_\ell$ is a graph with exactly one edge and at least five vertices. Therefore, $H-V_\ell\in \mathcal{X}_E$. Then by Proposition~\ref{pro:high-low}, $H$ \simulates\ a graph in $\mathcal{X}_E$. Assume that $H-V_\ell\notin \mathcal{Y}_E$. Then we are done by induction hypothesis and Proposition~\ref{pro:high-low}. Therefore, $H-V_\ell\notin \mathcal{Y}$. Similar arguments apply when $H-V_h\in \mathcal{Y}$. Therefore, assume that both $H-V_\ell$ and $H-V_h$ are not in $\mathcal{Y}$. Then by Lemma~\ref{lem:c-c} to Lemma~\ref{lem:small-ne}, we obtain that $H\in \LG$, which is a contradiction.
\end{proof}

\begin{proof}[Proof of Lemma~\ref{lem:editing}]
  Assume that for every $H\in \LG$, \HEE\ is incompressible. Let $H$ be a graph with at least five vertices but neither complete nor empty.
  That is $H\notin \mathcal{Y}_E$. Then by Lemma~\ref{lem:churn}, $H$ \simulates\ a graph in $\mathcal{X}_E\cup \LG$. Then the statement follows from Proposition~\ref{pro:birdeye}. 
%  If $H\in \mathcal{X}_E=\mathcal{X}$, then we know that \HEE\ is incompressible, assuming \NOPH. If $H\in \mathcal{Y}_D=\mathcal{Y}$, then $H$ must be a graph with exactly one edge. Then \HEE\ is incompressible as $\overline{H}$ is 3-connected. Therefore, assume that $H\notin \mathcal{X}\cup\mathcal{Y}$. Then by Lemma~\ref{lem:churn}, $H'\in \mathcal{X}_D\cup \LG$. Then the statement follows from Corollary~\ref{cor:high-low} and the fact that $\mathcal{X}_D\subseteq \mathcal{X}_E$.
\end{proof}

\begin{proof}[Proof of Lemma~\ref{lem:deletion}]
  Assume that for every $H\in \LG$, \HED\ is incompressible. Let $H$ be a graph with at least five and at least two edges but not complete. That is $H\notin \mathcal{Y}_D$.
  Then by Lemma~\ref{lem:churn}, $H$ \simulates\ a graph in $\mathcal{X}_D\cup \LG$. Then the statement follows from Proposition~\ref{pro:birdeye}.
%  If $H\in \mathcal{X}_E=\mathcal{X}$, then $H$ must be in $\mathcal{X}_D$ and hence \HEE\ is incompressible, assuming \NOPH. Clearly, $H\notin \mathcal{Y}_D=\mathcal{Y}$. Therefore, assume that $H\notin \mathcal{X}\cup\mathcal{Y}$. Then by Lemma~\ref{lem:churn}, $H'\in \mathcal{X}_D\cup \LG$. Then the statement follows from Corollary~\ref{cor:high-low}.
\end{proof}

%\input{tables/gist-small-pn}
%\input{tables/gist-small-np}

%\bibliographystyle{plain}
%\bibliography{tf}
%\end{document}

%%
%% End of file `elsarticle-template-num.tex'.
%\end{comment}
\section{Reductions}
\label{sec:smaller-gang}

Recall that we defined $\mathcal{W}=\mathcal{H}\cup \overline{\mathcal{H}}\cup \mathcal{A}\cup \overline{\mathcal{A}}\cup \mathcal{D}\cup\overline{\mathcal{D}}\cup \mathcal{B}\cup\overline{\mathcal{B}}\cup \mathcal{S}\cup\overline{\mathcal{S}}\cup \mathcal{F}\cup\overline{\mathcal{F}}$ and Section~\ref{sec:larger-gang} reduced our main questions to assuming incompressibility for the set $\mathcal{W}$. In this section, we further refine the result and show that incompressibility needs to be assumed only for the finite set 
$\LGD=\mathcal{H}\cup \overline{\mathcal{H}}\cup \mathcal{A}\cup \overline{\mathcal{A}}\cup \mathcal{B}\cup \mathcal{D}$.
That is, we recall and introduce some further simple reductions and use them to prove that every graph in $\LG\setminus \LGD$ \simulates\ a graph in $\LGD\cup \mathcal{X}_D$. 
Summary of results in this section handling graphs in $\mathcal{S}$ and $\mathcal{F}$ are given in Figure~\ref{table:summary-s} and \ref{table:summary-class} respectively.
\begin{figure}
\begin{center}
\scalebox{0.75}{
\begin{tabular}{| c | c | c | c | c | c | c | c | c | c | c |}
\cline{1-3} \cline{5-7} \cline{9-11}
\textbf{$H$} & \textbf{Simulates} & \textbf{By} & & \textbf{$H$} & \textbf{Simulates} & \textbf{By} & & \textbf{$H$} & \textbf{Simulates} & \textbf{By}\\ \cline{1-3} \cline{5-7}\cline{9-11} 
$S_1$ & $C_4$ & Lemma~\ref{lem:k23} & & 
  $S_{13}$ & $S_3$  & Corollary~\ref{cor:mod} & &
    $S_{25}$ & $\overline{H_1}$ & Corollary~\ref{cor:mod}\\ \cline{1-3} \cline{5-7} \cline{9-11} 
$S_2$ & $\overline{H_7}$ & Corollary~\ref{cor:near-uni} & & 
  $S_{14}$ & $B_1$  & Corollary~\ref{cor:mod} & &
    $S_{26}$ & $S_8$ & Corollary~\ref{cor:mod}\\ \cline{1-3} \cline{5-7} \cline{9-11} 
$S_3$ & $H_6$ & Corollary~\ref{cor:near-uni} & & 
  $S_{15}$ & $H_9$  & Corollary~\ref{cor:path} & &
    $S_{27}$ & a graph in $\mathcal{F}_6$ & Corollary~\ref{cor:cut}\\ \cline{1-3} \cline{5-7} \cline{9-11} 
$S_4$ & $H_2$ & Corollary~\ref{cor:jutk1} & & 
  $S_{16}$ & $S_3$  & Corollary~\ref{cor:near-uni} & &
    $S_{28}$ & $\overline{A_9}$ & Corollary~\ref{cor:mod}\\ \cline{1-3} \cline{5-7} \cline{9-11} 
$S_5$ & $C_4$ & Corollary~\ref{cor:path} & & 
  $S_{17}$ & $\overline{H_1}$  & Corollary~\ref{cor:near-uni} & &
    $S_{29}$ & $S_{17}$ & Corollary~\ref{cor:mod}\\ \cline{1-3} \cline{5-7} \cline{9-11} 
$S_6$ & $H_6$ & Corollary~\ref{cor:jutk1} & & 
  $S_{18}$ & $\overline{A_1}$  & Corollary~\ref{cor:mod} & &
    $S_{30}$ & $\overline{S_{19}}$ & Corollary~\ref{cor:mod}\\ \cline{1-3} \cline{5-7} \cline{9-11} 
$S_7$ & $H_9$ & Corollary~\ref{cor:mod} & & 
  $S_{19}$ & $\overline{S_3}$  & Corollary~\ref{cor:mod} & &
    $S_{31}$ & $S_3$ & Corollary~\ref{cor:near-uni}\\ \cline{1-3} \cline{5-7} \cline{9-11} 
$S_8$ & $\overline{A_1}$ & Corollary~\ref{cor:mod} & & 
  $S_{20}$ & a graph in $\mathcal{F}_1$  & Corollary~\ref{cor:cut} & &
    $S_{32}$ & $S_{16}$ & Corollary~\ref{cor:mod}\\ \cline{1-3} \cline{5-7} \cline{9-11} 
$S_9$ & $\overline{H_4}$ & Corollary~\ref{cor:path} & & 
  $S_{21}$ & $\overline{H_4}$  & Corollary~\ref{cor:mod} & &
    $S_{33}$ & a graph in $\mathcal{F}_7$ & Corollary~\ref{cor:mod}\\ \cline{1-3} \cline{5-7} \cline{9-11} 
$S_{10}$ & $C_4$ & Corollary~\ref{cor:mod} & & 
  $S_{22}$ & a graph in $\overline{\mathcal{F}_6}$  & Corollary~\ref{cor:path} & &
    $S_{34}$ & a graph in $\mathcal{F}_7$ & Corollary~\ref{cor:mod}\\ \cline{1-3} \cline{5-7} \cline{9-11} 
$S_{11}$ & $\overline{H_7}$ & Corollary~\ref{cor:mod} & & 
  $S_{23}$ & $\overline{A_7}$  & Corollary~\ref{cor:mod} & &
    $S_{35}$ & a graph in $\mathcal{X}_D$ & Lemma~\ref{lem:z12281}\\ \cline{1-3} \cline{5-7} \cline{9-11} 
$S_{12}$ & $S_2$ & Corollary~\ref{cor:mod} & & 
  $S_{24}$ & $\overline{S_7}$  & Corollary~\ref{cor:mod} & &
     $S_{36}$ & $S_{14}$ & Corollary~\ref{cor:mod}\\ \cline{1-3} \cline{5-7} \cline{9-11} 
\end{tabular}}
\end{center}
\caption{Summary of results in Section~\ref{sec:smaller-gang} handling graphs in $\mathcal{S}$}
\label{table:summary-s}
\end{figure}
\begin{figure}
\begin{center}
\scalebox{0.85}{
\begin{tabular}{| c | c | c | c | c | c | c |}
\cline{1-3} \cline{5-7}
\textbf{$H\in$} & \textbf{Simulates a graph in} & \textbf{By} & & \textbf{$H\in$} & \textbf{Simulates a graph in} & \textbf{By} \\ \cline{1-3} \cline{5-7}
$\mathcal{F}_1$ & $\{H_5\}\cup \mathcal{F}_2$ & Corollary~\ref{cor:jukt} & & 
  $\mathcal{F}_6$ & $\{H_8, D_2\}\cup \mathcal{F}_8$ & Corollary~\ref{cor:jukt} \\
  \cline{1-3} \cline{5-7}
$\mathcal{F}_2$ & $\{H_5\}$ & Corollary~\ref{cor:k1t} & & 
  $\mathcal{F}_7$ & $\{H_5\}\cup \mathcal{F}_2$ & Corollary~\ref{cor:k1tuk2} \\
  \cline{1-3} \cline{5-7}
$\mathcal{F}_3$ & $\{\overline{H_3}\}$ & Corollary~\ref{cor:k2xsk1} & & 
  $\mathcal{F}_8$ & $\{H_5\}\cup \mathcal{F}_2$ & Lemma~\ref{lem:near-uni-kt-euk1} \\
  \cline{1-3} \cline{5-7}
$\mathcal{F}_4$ & $\{H_5\}\cup \mathcal{F}_2$ & Corollary~\ref{cor:twin-star} & & 
  $\mathcal{F}_9$ & $\{\overline{H_3}\}\cup \mathcal{F}_3$ & Corollary~\ref{cor:path-jt} \\
  \cline{1-3} \cline{5-7}
$\mathcal{F}_5$ & $\{H_8, D_2\}\cup \mathcal{F}_8$ & Corollary~\ref{cor:jutk1} & & 
  $\mathcal{F}_{10}$ & $\{S_1\}\cup \mathcal{F}_1$ & Corollary~\ref{cor:path-qt} \\
  \cline{1-3} \cline{5-7}
\end{tabular}}
\end{center}
\caption{Summary of results in Section~\ref{sec:smaller-gang} handling graphs in $\mathcal{F}$}
\label{table:summary-class}

%\label{table:deletion-gang}
\end{figure}

To begin with, we observe that deleting the lowest degree vertices in the graphs in $\overline{\mathcal{B}}\cup\overline{\mathcal{D}}$ 
results in 3-connected graphs which are not complete. Then by Proposition~\ref{pro:high-low}, we have:

\begin{proposition}
  \label{pro:lgd-direct}
If  $H\in \overline{\mathcal{B}}\cup\overline{\mathcal{D}}$, then \HEE\ and \HED\ are incompressible, assuming \NOPH. 
\end{proposition}

Proofs in the rest of this section are written only for \textsc{Editing}. The proofs can be replicated for \textsc{Deletion} and \textsc{Completion} by replacing `\textsc{Editing}' with `\textsc{Deletion}' and `\textsc{Completion}' respectively. The reductions are based on Construction~\ref{con:main} and a few other similar constructions.

\subsection{Reductions based on Construction~\ref{con:main}}
\label{sec:red-1}
The following lemma can be proved using a straight-forward application of 
Construction~\ref{con:main}. 

\begin{lemma}
  \label{lem:jukt}
  Let $H$ be $J\cup K_t$, for some graph $J$ and integer $t\geq 1$, where the $K_t$ is induced by $V_\ell$.
  Let $V'$ be $V(H)\setminus \{v\}$, where $v$ is any vertex in the $K_t$. Let $H'$ be $H[V']$. 
  Then $H$ \simulates\ $H'$. In particular, $H$ \simulates\ $J\cup K_1$.
\end{lemma}
%\begin{comment}
\begin{proof}
  Let $(G',k)$ be an instance of \HDEE. Apply Construction~\ref{con:main} on $(G',k,H,V')$ to obtain $G$.
  We claim that $(G',k)$ is a yes-instance if and only if $(G,k)$ is a yes-instance of \HEE.
  Proposition~\ref{pro:con:main-backward} proves the backward direction.
  For the forward direction, let $(G',k)$ be a yes-instance and $F'$ be a solution of $(G',k)$ of size at most $k$.
  Since $H[V_\ell]$ is $K_t$ and every satellite vertex has degree $\ell$, it can be part of only an 
  induced $K_t$ (in an induced $H$ in $G\triangle F'$). Then if $G\triangle F'$ has an induced $H$, then $G'\triangle F'$ has an induced 
  $H'$, which is a contradiction.  
\end{proof}
%\end{comment}
\begin{corollary}
  \label{cor:jukt}
  \begin{enumerate}[(i)]
      \item Let $H$ be $K_t\cup K_2$, for $t\geq 4$ ($\in \overline{\mathcal{F}_1}$). Then $H$ \simulates\ $K_t\cup K_1$ ($\in\{\overline{H_5}\}\cup\overline{\mathcal{F}_2}$).
      \item Let $H$ be $(K_t-e)\cup K_2$, for $t\geq 4$ ($\in \overline{\mathcal{F}_6}$).
      Then $H$ \simulates\ $(K_t-e)\cup K_1$ ($\in \{\overline{H_8}, \overline{D_2}\}\cup \overline{\mathcal{F}_8}$).
  \end{enumerate}
\end{corollary}

Next we consider the removal of a path of degree-2 vertices. We can prove the correctness of the reduction only under a certain uniqueness condition on the path.
\begin{lemma}
  \label{lem:path}
  Let $H$ be a graph with minimum degree two and let $p\geq 2$ be an integer such that there is a unique induced path $P$ of length $p$ with the property that all the internal vertices of the path are having degree exactly two in $H$. %\daniel{reworded, please check.}
  Let $H'$ be obtained from $H$ by removing all internal vertices of $P$. Then $H$ \simulates\ $H'$.
\end{lemma}
%\begin{comment}
\begin{proof}
  Let $(G',k)$ be an instance of \HDEE.
  We apply Construction~\ref{con:main} on $(G',k,H,V')$ to obtain $G$, where $V'$ is the set of vertices inducing $H'$ in $H$. We claim that $(G',k)$ is a yes-instance
  of \HDEE\ if and only if $(G,k)$ is a yes-instance of \HEE.

  Before proving the claim, we note that there exist no induced path in $H$ with length more than $p$ such that every internal vertex has degree two.
  Proposition~\ref{pro:con:main-backward} proves the backward direction of the claim. 
  For the forward direction,  
  let $(G',k)$ be a yes-instance and let $F'$ be a solution. For a contradiction, assume that $G\triangle F'$ has an induced $H$ with a vertex set $V''$. 
  If there is no satellite vertex in $V''$, then clearly, $G'\triangle F'$ has an induced $H'$, a contradiction. Therefore, $V''$ contains at least one satellite vertex $v_j\in V_j$
  for some satellite $V_j$.  
  Since the minimum degree of $H$ is two and the path $P$ (with length $p$ and having degree two for all internal vertices)
  is unique in $H$, all vertices in $V_j$ must be in $V''$ and forms the internal vertices of the path $P$ in $H$ induced by $V''$. Then $G'\triangle F'$
  has an induced $H'$, a contradiction. 
\end{proof}
\begin{figure}
  \centering
  \begin{subfigure}[b]{0.15\textwidth}
    \centering
    \input{figs/small/z671}
    \caption{$S_5$}
    \label{fig:z671}%a
  \end{subfigure}%
  \begin{subfigure}[b]{0.15\textwidth}
    \centering
    \input{figs/small/z7102.tex}
    \caption{$S_9$}
    \label{fig:z7102}%c
  \end{subfigure}%
  \begin{subfigure}[b]{0.15\textwidth}
    \centering
    \input{figs/small/z8132}
    \caption{$S_{15}$}
    \label{fig:z8132}%d
  \end{subfigure}%
  \begin{subfigure}[b]{0.15\textwidth}
    \centering
    \input{figs/small/z8121}
    \caption{$S_{22}$}
    \label{fig:z8121}%b
  \end{subfigure}%
  \caption{Graphs handled by Corollary~\ref{cor:path}}
  \label{fig:path-graphs}
\end{figure}

%\end{comment}

\begin{corollary}
  \label{cor:path-jt}
  %Let $t\geq 3$. Let $J_t$ be the graph obtained from $K_2\vee tK_1$ and $C_4$ by identifying an edge of $C_4$ with an edge between the highest degree vertices in $K_2\vee tK_1$ (when $t=3$, the graphs is shown in  Figure~\ref{fig:z7103}). 
  Let $H$ be $J_t$, for some $t\geq 3$ ($\in\mathcal{F}_9$).
  Then $H$ \simulates\ $K_2\vee tK_1$ ($\in\{\overline{H_3}\}\cup\mathcal{F}_3$).
\end{corollary}

\begin{corollary}
  \label{cor:path-qt}
  %Let $Q_t$ be the graph graph obtained from $K_{2,t}$, for some $t\geq 3$, by adding a path of length three between the highest degree vertices in $K_{2,t}$ (when $t=3$, the graph is shown in  Figure~\ref{fig:z792}).
  Let $H$ be $Q_t$, for some $t\geq 3$ ($\in\mathcal{F}_{10}$).
  Then $H$ \simulates\ $K_{2,t}$ ($\in\{S_1\}\cup\mathcal{F}_1$).
\end{corollary}

\begin{corollary}
\label{cor:path}
\begin{enumerate}[(i)]
    \item $S_5$ \simulates\ $C_4$.
    \item $S_9$ \simulates\ $\overline{H_{4}}$.
    \item $S_{15}$ \simulates\ $H_9\vee K_1$. Further $S_{15}$ \simulates\ $H_9$ (Proposition~\ref{pro:high-low}).
    %Then $H$ \simulates\ $\overline{H_9\cup K_1}$. Further $H$ \simulates\ $H_9\vee K_1$ which simulates $H_9$ (Proposition~\ref{pro:high-low}).
    \item $S_{22}$ \simulates\ $\overline{\text{diamond}\cup K_2}$ ($\in\overline{\mathcal{F}_6}$).%the graph shown in Figure~\ref{fig:z691}.
\end{enumerate}
\end{corollary}
%\begin{comment}
Figure~\ref{fig:path-graphs} shows the graphs handled by Corollary~\ref{cor:path}.
%\end{comment}
Lemma~\ref{lem:cut} essentially says the following: 
If $H$ has vertex connectivity 1 and has a unique smallest 2-connected component which is a `leaf' in the tree formed by the 2-connected components, then 
$H$ \simulates\ a graph obtained by removing all vertices in the 2-connected component except the cut vertex.
\begin{lemma}
  \label{lem:cut}
  Let $H$ be a graph with vertex connectivity 1 and be not a complete graph. Let $\mathcal{C}$ be the set of all 2-connected components of $H$ having exactly one cut vertex of $H$. Assume that there exists a unique smallest (among $\mathcal{C}$) 2-connected component $J$ in $\mathcal{C}$. Let $v$ be the cut vertex of $H$ in $J$. Let $H'$ be $H-\{J\setminus \{v\}\}$. Then $H'$ \issimulatedby\ $H$.
\end{lemma}
%\begin{comment}
\begin{proof}
  Let $(G,k)$ be an instance of \HDEE. 
  Let $G$ be obtained by applying Construction~\ref{con:main} on $(G', k, H, V(H'))$. 
  We claim that $(G',k)$ is a yes-instance \HDEE\ if and only if $(G,k)$ is a yes-instance.
  Proposition~\ref{pro:con:main-backward} proves one direction. For the other direction, assume that $(G',k)$ is a yes-instance and let $F'$ be a solution. For a contradiction, assume that $G\triangle F'$ has an $H$ induced by $U$. Since every satellite corresponds to a unique smallest 2-connected component (with exactly one cut vertex of $H$) sans the cut vertex, if a satellite has nonempty intersection with $U$ then every vertex in the satellite is in $U$ and no other satellite vertices can be in $U$. Then $G'-F'$ has an induced $H'$, which is a contradiction. 
\end{proof}
%\end{comment}
\begin{figure}
  \centering
  \begin{subfigure}[b]{0.15\textwidth}
    \centering
    \input{figs/small/z8171c}
    \caption{$S_{20}$}
    \label{fig:z8171c}%a
  \end{subfigure}%
  \begin{subfigure}[b]{0.15\textwidth}
    \centering
    \input{figs/small/z9141}
    \caption{$S_{27}$}
    \label{fig:z9141}%b
  \end{subfigure}%
  \begin{subfigure}[b]{0.15\textwidth}
    \centering
    \input{figs/large/z12281}
    \caption{$S_{35}$}
    \label{fig:z12281}%a
  \end{subfigure}%
  \begin{subfigure}[b]{0.15\textwidth}
    \centering
    \input{figs/large/s10231}
    \caption{}
    \label{fig:s10231}%b
  \end{subfigure}%
  \caption{Graphs handled by Corollary~\ref{cor:cut} (\ref{fig:z8171c}, \ref{fig:z9141}) and Lemma~\ref{lem:z12281} (\ref{fig:z12281}, \ref{fig:s10231})}
  \label{fig:cut-graphs}
\end{figure}

%  \begin{subfigure}[b]{0.1\textwidth}
%    \centering
%    %\includegraphics[width=\textwidth]{gull}
%    \input{figs/small/claw-2k1}
%    \caption{$S_{4}$}
%    \label{fig:claw-2k1}%a
%  \end{subfigure}%
%  \begin{subfigure}[b]{0.1\textwidth}
%    \centering
%    %\includegraphics[width=\textwidth]{gull}
%    \input{figs/small/paw-2k1}
%    \caption{$S_{6}$}
%    \label{fig:paw-2k1}%b
%  \end{subfigure}%

\begin{corollary}
  \label{cor:cut}
  %\begin{multicols}{2}
  \begin{enumerate}[(i)]
      \item $S_{20}$ \simulates\ $K_{2,4}$ ($\in\mathcal{F}_1$).
      \item $S_{27}$ \simulates\ $\overline{(K_5-e)\cup K_2}$ ($\in\mathcal{F}_6$).
  \end{enumerate}
  %\end{multicols}
\end{corollary}

%\begin{corollary}
%  \label{cor:cut-z9141}
%  Let $H$ be the graph shown in Figure~\ref{fig:z9141}. Let $H'$ be $\overline{(K_5-e)\cup K_2}$. Then $H'$ \issimulatedby\ $H$.
%\end{corollary}
\begin{lemma}
  \label{lem:z12281}
  %Let $H$ be $S_{35}$. %Then $H$ \simulates\ $H_7$.
  $S_{35}$ \simulates\ a 3-connected graph, which is not complete $(\in \mathcal{X}_D)$.
\end{lemma}
%\begin{comment}
\begin{proof}
  Let $H$ be $S_{35}$.
  We will show that $H$ \simulates\ $H'$, where $H'$ is the graph shown in Figure~\ref{fig:s10231}. We observe that $H'$ is 3-connected. %Then the statement follows from Proposition~\ref{pro:high-low} and Corollary~\ref{cor:twin-star}, \ref{cor:near-uni}. 
 % \daniel{Please give more hints. E.g., show all the graphs in this sequence in Fig.6 and explicitly say removing this and that.}\sandeep{further simplified, please check.}
  
  Let $(G',k)$ be an instance of \HDEE. Let $G$ be obtained from $(G',k,H,V')$ by applying Construction~\ref{con:main}, 
  where $V'$ is the set all vertices of $H$ except the two adjacent degree-3 vertices. We claim that $(G',k)$ is a yes-instance of \HDEE\ if and only if $(G,k)$ is a yes-instance of \HEE. Proposition~\ref{pro:con:main-backward} proves the backward direction. For the forward direction, let $(G',k)$ be a yes-instance of \HDEE\ and let $F'$ be a solution. For a contradiction, assume that $G\triangle F'$ has an $H$ induced by $U$. Clearly, $U$ contains at least one satellite vertex. Since every satellite has a pair of adjacent degree-3 vertices, if a satellite vertex is in $U$ then the other vertex in the satellite must be in $U$. Since there is a unique pair of adjacent degree-3 vertices in $H$, the rest of the vertices in $U$ must be from the copy of $G'$ in $G$. Then $G'\triangle F'$ has an induced $H'$, a contradiction.
\end{proof}
Figure~\ref{fig:cut-graphs} shows the graphs handled by Corollary~\ref{cor:cut} and Lemma~\ref{lem:z12281}.
%\end{comment}

\subsection{Reductions based on Construction~\ref{con:mod}}
\label{sec:red-2}
The following is a simplified version of Construction~\ref{con:main}.

\begin{construction}
  \label{con:mod}
  Let $(G', k, \ell)$ be an input to the construction, where $G'$ is a graph and $k$ and $\ell$ are positive integers. 
  %Let $V_\ell$ be an independent 
  %set in $H$. 
  For every set $S$ of $\ell$ vertices in $G'$ introduce a clique $C$
  of $k+1$ vertices and make all the vertices in $C$ adjacent to all the vertices in $S$.
\end{construction}

As before, we call every clique $C$ introduced during the construction as a \textit{satellite} and the vertices in it as \textit{satellite vertices}.
Lemma~\ref{lem:mod} can be proved using a straight-forward application of Construction~\ref{con:mod}.
It says that if $H$ satisfies some properties, then $H$ simulates $H'$ where $H'$ is obtained by removing one vertex from 
each module of $H$ contained within $V_\ell$.

\begin{lemma}
  \label{lem:mod}
  Let $H$ be a non-regular graph such that the following conditions hold true:
  \begin{enumerate}[(i)]
  \item $1\leq \ell\leq 2, |V(H)|\geq 5$;
  \item $V_\ell$ is an independent set, $V_h\cup V_m$ induces a connected graph, and every vertex in $V_h$ is adjacent to at least one vertex in $V_\ell$;
  %\item Every vertex in $V_m$ is adjacent to at least $\ell+1$ vertices in $V_h\cup V_\ell$.
  \item Every vertex in $V_m$ has at least $\ell+1$ neighbors outside $V_m$ or 
    there exists no pair $u,v$ of adjacent vertices in $V_m$ such that $N(u)\setminus \{v\} = N(v)\setminus \{u\}$.
  \end{enumerate} 
  %Consider a modular decomposition $\mathcal{M}$ of $H$. 
  Assume that $H$ admits a modular decomposition $\mathcal{M}$ such that no module in $\mathcal{M}$
  contains vertices from both $V_\ell$ and $V_m\cup V_h$.
  Let $\mathcal{M'}\subseteq \mathcal{M}$
  corresponds to $V_\ell$.  
  Let $H'$ be the graph obtained from $H$ by removing one vertex from each module in $\mathcal{M'}$. %\daniel{It is confusing what mean here, please recall in the prelim the notions related to modular decompositions. What is "corresponds to $V_\ell$"? "Partition" is a collection of sets, if you mean "block of a partition", then say that, or "module".} \sandeep{Updated preliminaries. Please check.}
  Then $H$ \simulates\ $H'$.
\end{lemma}
%\begin{comment}
\begin{proof}
  Let $(G',k)$ be an instance of \HDEE. We apply Construction~\ref{con:mod} on $(G',k,\ell)$ to obtain $G$.
  We claim that $(G',k)$ is a yes-instance of \HDEE\ if and only if $(G,k)$ is a yes-instance of \HEE.

  For the backward direction, let $(G,k)$ be a yes-instance of \HEE\ and let $F$ be a solution. We claim that $G'\triangle F$ is $H'$-free.
  For a contradiction, assume that $G'\triangle F'$ has an $H'$ induced by $V'$. Since every set $S$ of $\ell$
  vertices in $G'$ has a corresponding clique $C$ of $k+1$ vertices, every vertex of which is adjacent to every vertex of $S$,
  we obtain that $V'\cup V''$ induces an $H$ in $G\triangle F$, which is a contradiction, where $V''$ is a carefully chosen subset of all satellite vertices.

  %Before proving the forward direction, we note that the assumptions in the statement of lemma imply that $H$ is non-regular. 
  %\daniel{Maybe add nonregular in the lemma statement, otherwise the conditions about $V_h$ and $V_\ell$ are confusing to read.}\sandeep{done.}
  To prove the forward direction, let $(G',k)$ be a yes-instance and let $F'$ be a solution. We claim that $G\triangle F'$ is $H$-free. For a contradiction, assume that $G\triangle F'$ has
  an induced $H$ with a vertex set $V'$. Let $C$ be any satellite. Since $V_\ell$ is an independent set, $|V_\ell\cap C|\leq 1$ for the $H$ induced by $V'$.
  Therefore, for every module $M'\in \mathcal{M'}$ of $V_\ell$, $M'$ contains at most one satellite vertex in a satellite.
  Since every pair of satellite has distinct neighborhood, this implies that $M'$ contains at most one satellite vertex. 
  Therefore, if $(V_h\cup V_m)\cap C= \emptyset$ for every satellite $C$, then $G'\triangle F'$ contains an
  induced $H'$, a contradiction. Therefore, assume that $|(V_h\cup V_m)\cap C|\geq 1$ for at least one satellite $C$.
  
  Case 1: $|V_h\cap C|\geq 1$ for some satellite $C$: Let $v\in V_h$ be in the $H$ induced by $V'$ in $G\triangle F'$, where $v$ is 
  in $C$ which was introduced for a set $S$ of $\ell$ vertices in $G'$. Since every 
  vertex in $C\cap V(H)$ has the same degree in the $H$, $C\cap V(H)\subseteq V_h$. 
  If $\ell = 1$, then $|S| = 1$ and $S\subseteq V_\ell$ as $v$ must be adjacent to some vertex in $V_\ell$. This gives a 
  contradiction as the vertex in $S$ has degree at least that of $v$ in the induced $H$. 
  Let $\ell=2$. Then $h\geq 3$. As $|S|=2$, we note that $v$ cannot be adjacent to more than two vertices in $V_\ell$. If $v$ is adjacent to two vertices in $V_\ell$, then $S$ is exactly the set of
  those two vertices. Then $H$ is $K_2\vee 2K_1$ which is a contradiction, as $H$ has at least five vertices. 
  So, $v$ is adjacent to exactly one vertex in $V_\ell$. Then one vertex, say $s_1\in S$ is in $V_\ell$ 
  and the other vertex $s_2\in S$ is in $V_h\cup V_m$ (if $s_2\notin V_h\cup V_m$ then $s_1$ will have degree at least that of $v$ in the $H$). If $|V_h\cap C|\geq 3$, then $s_1$ has degree at least three in the $H$, which is a contradiction. Therefore, since $h\geq 3$, $|V_h\cap C|=2$. Hence $h=3$. Since $\ell=2$ and $h=3$, we obtain that $V_m=\emptyset$. Therefore, $s_2\in V_h$. 
  %Then there is one more vertex in $C \cap V_h$ (otherwise $v$ has degree at most two in the $H$). 
  We observe that $s_1$ and $s_2$ are nonadjacent in $G\triangle F'$ as otherwise the degree of $s_1$ becomes at least $h$, which is a contradiction.  
  Since every vertex in $V_h$ is adjacent to
  at least one vertex in $V_\ell$, $s_2$ is adjacent to a vertex $u$ (which is not $s_1$) in $V_\ell$. Now, $s_2$ cannot have any other neighbors in $H$ (other than $u$, and two vertices in $V_h\cap C$). Since $V_h\cup V_m$ induces a connected graph, $|V_h\cup V_m|=|V_h|=3$ and $u$ has degree at most one, which is a contradiction.
  
  Case 2: $|V_m\cap C|\geq 1$: 
  Since every vertex in $V'\cap C$ has the same degree in the induced $H$, $V'\cap C\subseteq V_m$. 
  Consider condition (iii). 
  Assume that every vertex in $V_m$ has at least $\ell+1$ neighbors outside $V_m$. 
  Then we get a contradiction, as a vertex in $C$ has only at most $\ell$ neighbors outside $C$. 
  Now assume that there is no pair $u,v$ of adjacent vertices in $V_m$ such that their neighbors outside them are same.
  Then $|V_m\cap C|=1$. 
  Then the vertex in $V_m\cap C$ has degree only at most $\ell$ in the $H$, a contradiction. %ince every vertex in $V'\cap C$ has the same degree in the $H$, $V'\cap C\subseteq V_m$. 
  %Since a vertex in $C$ is adjacent to only $\ell$ vertices outside $C$, we get a contradiction as every vertex in $V_m$ is adjacent to at least $\ell+1$ vertices in $V_h\cup V_\ell$.
\end{proof}
%\end{comment}
The following corollary lists many graphs 
%\begin{comment}
(see Figure~\ref{fig:mod-graphs}) 
%\end{comment}
that can be handled by Lemma~\ref{lem:mod}.
\begin{figure}
  \centering
  \begin{subfigure}[b]{0.143\textwidth}
    \centering
    \input{figs/small/z771}
    \caption{$S_7$}
    \label{fig:z771}%a
  \end{subfigure}%
  \begin{subfigure}[b]{0.143\textwidth}
    \centering
    \input{figs/small/z7111}
    \caption{$S_8$}
    \label{fig:z7111}%b
  \end{subfigure}%
  \begin{subfigure}[b]{0.143\textwidth}
    \centering
    \input{figs/small/z7101}
    \caption{$S_{10}$}
    \label{fig:z7101}%c
  \end{subfigure}%
  \begin{subfigure}[b]{0.143\textwidth}
    \centering
    \input{figs/small/z781.tex}
    \caption{$S_{11}$}
    \label{fig:z781}%d
  \end{subfigure}%
  \begin{subfigure}[b]{0.143\textwidth}
    \centering
    \input{figs/small/z8141}
    \caption{$S_{12}$}
    \label{fig:z8141}%e
  \end{subfigure}%
  \begin{subfigure}[b]{0.143\textwidth}
    \centering
    \input{figs/small/z8101}
    \caption{$S_{13}$}
    \label{fig:z8101}%f
  \end{subfigure}%
  \begin{subfigure}[b]{0.143\textwidth}
    \centering
    \input{figs/small/z8151}
    \caption{$\overline{S_{14}}$}
    \label{fig:z8151}%g
  \end{subfigure}%
  
  \begin{subfigure}[b]{0.143\textwidth}
    \centering
    \input{figs/small/z8131}
    \caption{$S_{18}$}
    \label{fig:z8131}%h
  \end{subfigure}%
  \begin{subfigure}[b]{0.143\textwidth}
    \centering
    \input{figs/small/z891}
    \caption{$S_{19}$}
    \label{fig:z891}%i
  \end{subfigure}%
  \begin{subfigure}[b]{0.143\textwidth}
    \centering
    \input{figs/small/z8134}
    \caption{$S_{21}$}
    \label{fig:z8134}%j
  \end{subfigure}%
  \begin{subfigure}[b]{0.143\textwidth}
    \centering
    \input{figs/small/z9211c}
    \caption{$S_{23}$}
    \label{fig:z9211c}%k
  \end{subfigure}%
  \begin{subfigure}[b]{0.143\textwidth}
    \centering
    \input{figs/small/z9181}
    \caption{$S_{24}$}
    \label{fig:z9181}%l
  \end{subfigure}%
  \begin{subfigure}[b]{0.143\textwidth}
    \centering
    \input{figs/small/z9161}
    \caption{$S_{25}$}
    \label{fig:z9161}%m
  \end{subfigure}%
  \begin{subfigure}[b]{0.143\textwidth}
    \centering
    \input{figs/small/z9151.tex}
    \caption{$S_{26}$}
    \label{fig:z9151}%n
  \end{subfigure}%
  
  \begin{subfigure}[b]{0.143\textwidth}
    \centering
    \input{figs/large/z10181}
    \caption{$S_{28}$}
    \label{fig:z10181}%o
  \end{subfigure}%
  \begin{subfigure}[b]{0.143\textwidth}
    \centering
    \input{figs/large/z11221}
    \caption{$S_{29}$}
    \label{fig:z11221}%p
  \end{subfigure}%
  \begin{subfigure}[b]{0.143\textwidth}
    \centering
    \input{figs/large/z10231}
    \caption{$S_{30}$}
    \label{fig:z10231}%q
  \end{subfigure}%
  \begin{subfigure}[b]{0.143\textwidth}
    \centering
    \input{figs/large/z11341c}
    \caption{$\overline{S_{32}}$}
    \label{fig:z11341c}%r
  \end{subfigure}%
  \begin{subfigure}[b]{0.143\textwidth}
    \centering
    \input{figs/large/z10221}
    \caption{$\overline{S_{33}}$}
    \label{fig:z10221}%s
  \end{subfigure}%
  \begin{subfigure}[b]{0.143\textwidth}
    \centering
    \input{figs/large/z11281}
    \caption{$\overline{S_{34}}$}
    \label{fig:z11281}%t
  \end{subfigure}%
  \begin{subfigure}[b]{0.143\textwidth}
    \centering
    \input{figs/large/z10261c}
    \caption{$\overline{S_{36}}$}
    \label{fig:z10261c}%u
  \end{subfigure}%
  \caption{Graphs handled by Corollary~\ref{cor:mod}}
  \label{fig:mod-graphs}
\end{figure}

\begin{corollary}
\label{cor:mod}
\begin{multicols}{2}
\begin{enumerate}[(i)]
    \item $S_7$ \simulates\ $H_{9}$.    
    \item $S_8$ \simulates\ $\overline{A_1}$.
    \item $S_{10}$ \simulates\ $C_4$.
    \item $S_{11}$ \simulates\ $\overline{H_7}$. 
    \item $S_{12}$ \simulates\ $S_2$.
    \item $S_{13}$ \simulates\ $S_3$. 
    \item $\overline{S_{14}}$ \simulates\ $\overline{B_1}$. 
    \item $S_{18}$ \simulates\ $\overline{A_1}$.
    \item $S_{19}$ \simulates\ $\overline{S_3}$.
    \item $S_{21}$ \simulates\ $\overline{H_4}$.
    \item $S_{23}$ \simulates\ $\overline{A_{7}}$.
    \item $S_{24}$ \simulates\ $\overline{S_7}$.
    \item $S_{25}$ \simulates\ $\overline{H_1}$. 
    \item $S_{26}$ \simulates\ $S_8$. 
    \item $S_{28}$ \simulates\ $\overline{A_{9}}$. 
    \item $S_{29}$ \simulates\ $S_{17}$.
    \item $S_{30}$ \simulates\ $\overline{S_{19}}$.
    \item $\overline{S_{32}}$ \simulates\ $\overline{S_{16}}$.
    \item $\overline{S_{33}}$ \simulates\ $\overline{K_{1,4}\cup K_2}\in \overline{\mathcal{F}_7}$.
    \item $\overline{S_{34}}$ \simulates\ $\overline{K_{1,5}\cup K_2}\in \overline{\mathcal{F}_7}$.
    \item $\overline{S_{36}}$ \simulates\ $\overline{S_{14}}$.
\end{enumerate}
\end{multicols}
\end{corollary}

The following three corollaries are obtained by application of Lemma~\ref{lem:mod}: they show that in certain families of graphs, every member simulates the simplest member.
Corollary~\ref{cor:k1t} deals with star graphs ($K_{1,t}$). 
For every graph $H$ in this class, $V_\ell$ is a single module of the graph and $H$ \simulates\ a graph $H'$, where $H'$ is obtained by removing one vertex from $V_\ell$. Corollary~\ref{cor:k2xsk1} handles $K_2\vee sK_1$, where $V_\ell$ forms a single module of the graph. As in the previous case, $H'$ is obtained by removing one vertex from $V_\ell$.
Corollary~\ref{cor:twin-star} deals with the set of twin-star graphs ($T_{t_1,t_2}$). For every graph $H$ in this class, there are two modules of $H$ in $V_\ell$ : $t_1$ vertices adjacent to one vertex in $H-V_\ell$ and $t_2$ vertices adjacent to the other vertex in $H-V_\ell$. Then $H$ \simulates\ a graph $H'$, where $H'$ is obtained by removing one vertex each from the two modules.
\begin{corollary}[see Lemma~6.4 in \cite{AravindSS17} for a partial result]
  \label{cor:k1t}
  Let $H$ be $K_{1,t}$, for any $t\geq 5$ ($\in\mathcal{F}_2$). Let $H'$ be $K_{1,t-1}$.
  %Then if \HDEE\ is incompressible, then \HEE\ is incompressible.
  Then $H$ \simulates\ $H'$. Furthermore, $H$ \simulates\ $H_5$ ($K_{1,4}$).
\end{corollary}

\begin{corollary}[see Lemma~4.5 in \cite{AravindSS16} for a partial result]
  \label{cor:k2xsk1}
  Let $H$ be $K_2\vee sK_1$, for any $s\geq 4$ ($\in\mathcal{F}_3$) and
  let $H'$ be $K_2\vee (s-1)K_1$. Then $H$ \simulates\ $H'$. Furthermore, $H$ \simulates\ $\overline{H_3}$ ($K_2\vee 3K_1$).
\end{corollary}

\begin{corollary} [see Lemma~6.6 in \cite{AravindSS17} for a partial result]
  \label{cor:twin-star} 
  Let $H$ be a twin-star graph $T_{t_1,t_2}$, such that $t_1,t_2\geq 1$. Let $H'$ be $T_{t_1-1,t_2-1}$. Then $H$ \simulates\ $H'$.
  In particular, if $H$ is $T_{t,1}$, for some $t\geq 4$ ($\in\mathcal{F}_4$), then $H$ \simulates\ $K_{1,t}$ ($\in\{H_5\}\cup\mathcal{F}_2$). 
\end{corollary}

%\daniel{I would move the following corollary before Corollary 46. Also, first please give a few sentences of informal explanation of e.g. how (i) is proved: what is the modular decomposition for this graph, what we remove etc.} \sandeep{done, please check.}
%We can use the reduction in Lemma~\ref{lem:mod} to handle many of the graphs in Figure~\ref{fig:helper}.

%Construction~\ref{con:mod} can be used to handle $K_{2,3}$.
Next we see another application of Construction~\ref{con:mod}.
\begin{lemma}
  \label{lem:k23}
  $K_{2,3}$ ($=S_1$) \simulates\ $C_4$.
\end{lemma}
%\begin{comment}
\begin{proof}
  Let $H$ be $K_{2,3}$. Let $H'$ be $C_4$. Let $(G',k)$ be an instance of \HEE. Let $G$ be constructed from $(G',k,\ell)$ by applying Construction~\ref{con:mod}. We claim that
  $(G',k)$ is a yes-instance of \HDEE\ if and only if $(G,k)$ is a yes-instance of \HEE. 

  Let $(G',k)$ be a yes-instance of \HDEE\ and let $F'$ be a solution. 
  For a contradiction, assume that $G\triangle F'$ has a $K_{2,3}$ induced by $V'$. 
  Clearly, $V'\cap C\neq \emptyset$ for some clique $C$ introduced during the construction for some set $S$ of two vertices in $G'$.
  Since $V'\cap C$ forms a clique and has the same neighborhood outside $V'\cap C$ in the $H$, $|V'\cap C|=1$. 
  Therefore, the vertex in $V'\cap C$ must be a degree-2 vertex in the $H$. Then the two vertices in $S$ act as the two highest degree vertices in the $H$.
  Then the other two degree-2 vertices must be from the copy of $G'$ in $G$ 
  (there is no other constructed clique adjacent to both the vertices in $S$) 
  and hence $G'\triangle F'$ contains an induced $C_4$, a contradiction.

  For the other direction, assume that $(G,k)$ is a yes-instance and let $F$ be a solution. For a contradiction, let $G'\triangle F$ has a $C_4$ induced by $\{v_1,v_2,v_3,v_4\}$, where
  $v_1$ and $v_3$ are nonadjacent. Since there are $k+1$ vertices in a clique $C$ adjacent to both $v_1$ and $v_3$ (due to the construction), there is at least one vertex $v$
  in $C$ adjacent to both $v_1$ and $v_3$ and not adjacent to $v_2$ and $v_4$ in $G\triangle F$. Then $G\triangle F$ has an induced $K_{2,3}$, a contradiction.
\end{proof}
%\end{comment}
\subsection{Reductions based on Construction~\ref{con:near-uni}}
\label{sec:red-3}
Now we give another construction that will be used in a few reductions.

\begin{construction}
  \label{con:near-uni}
  Let $(G', k, t)$ be an input to the construction, where $G'$ is a graph and $k$ and $t$ are  positive integers. 
  %Let $V_\ell$ be an independent 
  %set in $H$. 
  For every set $S$ of $t$ vertices in $G'$ introduce an independent set $I_S$ of $k+2$ vertices such that
  every vertex in $I_S$ is adjacent to every vertex in $G'$ except those in $S$. Let $\bigcup_{S\subseteq V(G'), |S|=t}I_S = I$. Let the resultant graph be $G$.
\end{construction}

\begin{lemma}
  \label{lem:near-uni}
  Let $H$ be a graph such that $V_h$ forms a clique and for every pair of vertices $u,v\in V_h$, $H-u$ is isomorphic to $H-v$. 
  Further assume that there exists no independent set $S$ of size $s\geq 2$ where each vertex in $S$ has degree at least $h-s+1$ in $H$. 
  Then $H$ \simulates\ $H-u$, where $u$ is any vertex in $V_h$.
\end{lemma}
%\begin{comment}
\begin{proof}
  Let $H'$ be obtained from $H$ by deleting a vertex in $V_h$.
  Let $(G',k)$ be an instance of \HDEE. Apply Construction~\ref{con:near-uni} on $(G',k,h^*=|V(H)|-h-1)$ to obtain $G$.
  We claim that $(G',k)$ is a yes-instance of \HDEE\ if and only if $(G,k)$ is a yes-instance of \HEE.
  
  Let $(G',k)$ be a yes-instance of \HDEE\ and let $F'$ be a solution. 
  We claim that $G\triangle F'$ is $H$-free. 
  For a contradiction, assume that $U\subseteq V(G)$ induces $H$ in $G\triangle F'$. 
  Clearly, $I\cap U\neq \emptyset$. 
  Let $|U\cap I|=1$. 
  Since a vertex in $I$ is nonadjacent to only $h^*$ vertices in the copy of $G'$ in $G$, 
  %and $V_h$ induces a clique, 
  we obtain that the vertex in $I\cap U$ must be a vertex in $V_h$ in the $H$. 
  Therefore, $G'\triangle F$ has an induced $H'$, a contradiction. 
  Let $|U\cap I|=s>1$. Since $U\cap I$ is an independent set and each vertex in it is nonadjacent to only at most $h^*$ vertices in the copy of $G'$ in $G$, 
  we get that each vertex in $U\cap I$ has degree at least $|V(H)|-1-(s-1)-h^*$ in the $H$ induced by $U$ in $G\triangle F'$. 
  Since $h^*=|V(H)|-h-1$, we obtain that each vertex in $U\cap I$ has degree at least $h-s+1$ in the $H$, which is a contradiction. 
  
  Let $(G,k)$ be a yes-instance and let $F$ be a solution. 
  For a contradiction, assume that $G'\triangle F$ has an $H'$ induced by $U$. 
  Let $U'\subseteq U$ be such that $|U'|=h^*$ and introducing a new vertex $u$ and making it adjacent to every vertex in $U\setminus U'$ of $(G'\triangle F')[U]$ results in $H$. 
  Since there are at least $k+1$ vertices adjacent to every vertex, except those in $U'$ in the copy of $G'$ in $G$, 
  we obtain that $G\triangle F$ has an induced $H$, a contradiction.
\end{proof}
\begin{figure}
  \centering
  \begin{subfigure}[b]{0.15\textwidth}
    \centering
    \input{figs/small/c651}
    \caption{$S_2$}
    \label{fig:c651}%e
  \end{subfigure}
  \begin{subfigure}[b]{0.15\textwidth}
    \centering
    \input{figs/small/c671.tex}
    \caption{$S_3$}
    \label{fig:c671}%a
  \end{subfigure}%
  \begin{subfigure}[b]{0.15\textwidth}
    \centering
    \input{figs/small/z8133}
    \caption{$S_{16}$}
    \label{fig:z8133}%b
  \end{subfigure}%
  \begin{subfigure}[b]{0.15\textwidth}
    \centering
    \input{figs/small/z8122c}
    \caption{$S_{17}$}
    \label{fig:z8122c}%c
  \end{subfigure}
  \begin{subfigure}[b]{0.15\textwidth}
    \centering
    \input{figs/large/z10232}
    \caption{$S_{31}$}
    \label{fig:z10232}%d
  \end{subfigure}%
  \caption{Graphs handled by Corollary~\ref{cor:near-uni}}
  \label{fig:near-uni-graphs}
\end{figure}

%\end{comment}
Figure~\ref{fig:near-uni-graphs} shows the graphs handled by Corollary~\ref{cor:near-uni}.
Sequences of reductions used by Corollary~\ref{cor:near-uni} are shown in parenthesis, unless the result is obtained by a direct application of Lemma~\ref{lem:near-uni}.

\begin{corollary}
  \label{cor:near-uni}
  %\begin{multicols}{2}
  \begin{enumerate}[(i)]
      \item   $S_2$ \simulates\ $\overline{H_7}$.
      \item $S_3$ \simulates\ $H_6$.
      \item   $S_{16}$ \simulates\ $S_3$ (delete a high degree vertex (Lemma~\ref{lem:near-uni}) \textrightarrow\ delete $V_\ell$ (Proposition~\ref{pro:high-low})).
%      \item   $S_{16}$ \simulates\ $S_3$ (\scalerel*{\input{figs/small/z8133}}{B}).
      \item $S_{17}$ \simulates\ $\overline{H_1}$ (delete a high degree vertex (Lemma~\ref{lem:near-uni}) \textrightarrow\ delete $V_\ell$ (Proposition~\ref{pro:high-low})).
      \item   $S_{31}$ \simulates\ $S_3$ (delete a high degree vertex (Lemma~\ref{lem:near-uni}) \textrightarrow\ delete $V_\ell$ (Proposition~\ref{pro:high-low})\textrightarrow\ delete $V_h$ (Proposition~\ref{pro:high-low}))
  \end{enumerate}
  %\end{multicols}
\end{corollary}

\begin{lemma}
  \label{lem:near-uni-kt-euk1}
  Let $H$ be $(K_t-e)\cup K_1$ for $t\geq 6$ ($\overline{\mathcal{F}_8}$). Let $H'$ be $K_{t-2}\cup K_1$ ($\in\{\overline{H_5}\}\cup \overline{\mathcal{F}_2}$). Then $H$ \simulates\ $H'$.
\end{lemma}
%\begin{comment}
\begin{proof}
  We observe that $H'$ is obtained by removing the two vertices with degree $t-2$ from $H$.
  Let $(G',k)$ be an instance of \HDEE. Let $G$ be obtained by applying Construction~\ref{con:near-uni} on $(G', k, 1)$. We claim that $(G',k)$ is a yes-instance of \HDEE\ if and only if $(G,k)$ is a yes-instance of \HEE.
  
  For the forward direction, let $(G',k)$ be a yes-instance of \HDEE. Let $F'$ be a solution of it. For a contradiction, assume that $G\triangle F'$ has an $H$ induced by $U$. Clearly $U\cap I\neq \emptyset$.
  Since an isolated vertex in $H$ is not adjacent to $t\geq 6$ vertices in $H$, a vertex in $I$ cannot be the isolated vertex in the $H$ induced by $U$. Therefore, $|U\cap I|\leq 2$. Let $|U\cap I=\{u\}|=1$. Since $u$ is adjacent to all except one vertex in the copy of $G'$ in $G$, $u$ must be a vertex with degree $t-1$ in the induced $H$. Then $G'\triangle F'$ contains an induced $(K_{t-1}-e)\cup K_1$ and hence a $K_{t-2}\cup K_1$, which is a contradiction (we note that $K_{t-2}\cup K_1$ is an induced subgraph of $(K_{t-1}-e)\cup K_1$). Let $|U\cap I=\{u,u'\}|=2$. Clearly, $u,u'\in I_{\{v\}}$ for some vertex $v$ in the copy of $G'$ in $G$. Then $u$ and $u'$ are the vertices with degree $t-2$ in the $H$. Therefore, $G'\triangle F'$ has an induced $K_{t-2}\cup K_1$, which is a contradiction.
  
  For the other direction, let $(G,k)$ be a yes-instance of \HEE. Let $F$ be a solution of it. For a contradiction, assume that $G'\triangle F$ contains an $H'$ induced by $U$. Let $v$ be the isolated vertex in the induced $H$. Since there are $k+2$ vertices adjacent to all vertices, except $v$, in the copy of $G'$ in $G$, at least two of them along with $U$ induces $H$ in $G\triangle F$, which is a contradiction. 
\end{proof}
%\end{comment}
%\input{special-graphs}

\subsection{Other reductions}
\label{sec:red-4}

To resolve graphs in $\mathcal{F}_7$ ($=K_{1,t}\cup K_2$), we resort to a known reduction. There is a PPT in \cite{AravindSS17} from \HDEE\ to \HEE, where $H'$ is a largest component in $H$. It is a composition of
two reductions: one from \HDEE\ to \HDDEE\ and another from \HDDEE\ to \HEE, where $H''$ is the union of all components in $H$ isomorphic to $H'$. The first reduction uses a simple construction (take a disjoint union of the input graph and join of $k+1$ copies of $H'$) and the second reduction uses Construction~\ref{con:main}.

\begin{proposition}[see Lemma~3.5 in \cite{AravindSS17}]
\label{pro:disconnected}
  Let $H'$ be a largest component of $H$. Then $H$ \simulates\ $H'$.
\end{proposition}
\begin{corollary}
  \label{cor:k1tuk2}
  Let $H$ be $K_{1,t}\cup K_2$, for $t\geq 4$ ($\in \mathcal{F}_7$). Then $H$ \simulates\ $K_{1,t}$ ($\in\{H_5\}\cup \mathcal{F}_2$). 
\end{corollary}

The following statement consider reduction that involve the removal of independent vertices.

%The following reduction shows that every $K_{1,s}$ with $s\ge 4$ is at least as hard as the 4-claw $K_{1,4}$.

\begin{lemma}
  \label{lem:jutk1}
  Let $H$ be $J\cup tK_1$, for any $t\geq 2$ such that $J$ has no component which is a clique.
  Let $H'$ be $J\cup (t-1)K_1$. Then $H$ \simulates\ $H'$. In particular, $H$ \simulates\ $J\cup K_1$.
\end{lemma}
%\begin{comment}
\begin{proof}
  Let $(G',k)$ be a an instance of \HDEE. Let $G$ be $G'\cup K$, where $K$ is $K_{k+1}$. We claim that $(G',k)$ is a
  yes-instance of \HDEE\ if and only if $(G,k)$ is a yes-instance of \HEE.

  Let $(G',k)$ be a yes-instance. Let $F'$ be a solution of size at most $k$. For a contradiction
  assume that $G\triangle F'$ has an induced $J\cup tK_1$ with a vertex set $V'$. Since $G'\triangle F'$ is $H'$-free, $V'\cap K\neq \emptyset$.
  Since $V'\cap K$ induces a clique component in $G\triangle F'$ and $J$ does not have a clique component, $V'\cap K$ is a singleton
  set and induces $K_1$. Therefore $G'\triangle F'$ induces $J\cup (t-1)K_1$, which is a contradiction. For the other direction,
  let $(G,k)$ be a yes-instance and let $F$ be a solution. We claim that $G'\triangle F$ is $H'$-free. For a contradiction,
  assume that $G'\triangle F$ has an induced $H'$ with a vertex set $V'$. Since $K$ is a clique of $k+1$ vertices, 
  there exists at least one vertex $v$ in $K$ such that $v$ is not adjacent to any vertex in $V'$ in $G\triangle F$. Then $V'\cup \{v\}$ induces $H$ in $G\triangle F$, which  is a contradiction.
\end{proof}
%\end{comment}
\begin{corollary}
  \label{cor:jutk1}
  \begin{enumerate}[(i)]
      \item $S_4$ \simulates\ $H_2$.
      \item $S_6$ \simulates\ $H_6$.
      \item Let $H$ be $(K_{t}-e)\cup 2K_1$, for $t\geq 4$ ($\in\overline{\mathcal{F}_5}$). Then $H$ \simulates\ $(K_t-e)\cup K_1$ ($\in \{\overline{H_8}, \overline{D_2}\}\cup \overline{\mathcal{F}_8}$).
  \end{enumerate} 
\end{corollary}
%\begin{comment}
Some graphs handled by Corollary~\ref{cor:jutk1} is shown in Figure~\ref{fig:jutk1}.
\begin{figure}
  \centering
  \begin{subfigure}[b]{0.15\textwidth}
    \centering
    \input{figs/small/claw-2k1}
    \caption{$S_4$}
    \label{fig:claw-2k1}%b
  \end{subfigure}%
  \begin{subfigure}[b]{0.15\textwidth}
    \centering
    \input{figs/small/paw-2k1}
    \caption{$S_{6}$}
    \label{fig:paw-2k1}%a
  \end{subfigure}%
  \caption{Two of the graphs handled by Corollary~\ref{cor:jutk1}}
  \label{fig:jutk1}
\end{figure}

%\end{comment}

%Summary of results in this section handling graphs in $\mathcal{S}$ and $\mathcal{F}$ are given in Figure~\ref{table:summary-s} and \ref{table:summary-class} respectively.
%\input{tables/summary}

Lemma~\ref{lem:smaller-gang} follows from Corollary~\ref{cor:folklore}, Proposition~\ref{pro:lgd-direct}, the transitivity of PPTs, and other results in this section (see Figures~\ref{table:summary-s} and \ref{table:summary-class}) for details.
\begin{lemma}
  \label{lem:smaller-gang}
  Let $H\in \LG\setminus \LGD$. Then $H$ \simulates\ a graph in $\LGD\cup \mathcal{X}_D$. 
\end{lemma}

\section{Incompressibility results for the  graphs in $\mathcal{A}$ and $\mathcal{B}$}
\label{sec:cai}
In this section, we prove that for every graph $H\in \mathcal{A}\cup \overline{\mathcal{A}}$, all three problems \HEE, \HED, and \HEC\ are incompressible, assuming \NOPH. With the same assumption, we prove that 
\HED\ is incompressible for every graph $H\in\mathcal{B}$; Proposition~\ref{pro:folklore} then implies incompressibility of \HEC\ for every $H\in\overline{\mathcal{B}}$.
%\daniel{Would it make sense to unify the following three statements into a single statement with 3 items?}\sandeep{done, please check.}
\begin{theorem}
  \label{thm:ab}
  Assuming \NOPH:
  \begin{enumerate}[(i)]
      \item\label{thm:ab:editing} Let $H\in \mathcal{A}$. Then \HEE\ is incompressible.
      \item\label{thm:ab:deletion} Let $H\in \mathcal{A}\cup \overline{\mathcal{A}}\cup \mathcal{B}$. Then \HED\ is incompressible.
      \item\label{thm:ab:completion} Let $H\in \mathcal{A}\cup \overline{\mathcal{A}}\cup \overline{\mathcal{B}}$. Then \HEC\ is incompressible.
  \end{enumerate}
\end{theorem}

%\begin{theorem}
%  \label{thm:del-ab}
%  Let $H\in \mathcal{A}\cup \overline{\mathcal{A}}\cup \mathcal{B}$. Then \HED\ is incompressible, assuming \NOPH.
%\end{theorem}

%Then by Proposition~\ref{pro:folklore}, we have Corollary~\ref{cor:com-ab}.
%\begin{corollary}
%  \label{cor:com-ab}
%  Let $H\in \mathcal{A}\cup \overline{\mathcal{A}}\cup \overline{\mathcal{B}}$. Then \HEC\ is incompressible, assuming \NOPH.
%\end{corollary}

We apply the technique used by Cai and Cai~\cite{CaiC15incompressibility} by which they obtained a complete dichotomy on the incompressibility of $H$-free edge modification problems on 3-connected graphs $H$. We will give a self-contained summary of their proof technique, with only a few references to proofs of formal statements. The reader is referred to \cite{CaiC15incompressibility} for a more detailed exposition of terminology and concepts discussed in this section.

The first step in the proof is to establish incompressibility for the {\it restricted} versions of \HED\ and \HEC, where only allowed edges can be deleted/added. Then deletion and completion {\it enforcer gadgets} can be used to reduce the restricted problems to the original versions. Cai and Cai~\cite{CaiC15incompressibility} presented constructions that were proved to work correctly when $H$ is 3-connected. We show, by careful inspection, that the same technique works for certain graphs $H$ that are not 3-connected. For certain graphs $H$, we can prove incompressibility of the restricted problem, but enforcer gadgets of the required form provably do not exist. In these cases, we use ad hoc ideas to reduce the restricted version to the original one. In yet further cases, we need even trickier reductions, where we reduce \pname{$H'$-free Edge Deletion} to \HED\ for some $H'\neq H$.

%The incompressibility of $H$-free edge modification problems has been completely settled for 3-connected graphs $H$ by Cai and Cai \cite{CaiC15incompressibility}. 
%We use the results obtained by them to prove the incompressibility of some small graphs $H$. 

\subsection{Incompressibility results for the restricted problems}

A graph is called \textit{edge-restricted} if a subset of its edges are marked as \textit{forbidden}. All edges other than forbidden are \textit{allowed}. 
A graph is called \textit{nonedge-restricted} if a subset of its nonedges are marked as \textit{forbidden}. All nonedges other than forbidden are \textit{allowed}. 

\begin{mdframed}
  \textbf{\RHED}: Given a graph $G$, an integer $k$, and a set $R$ of edges of $G$, do there exist at most $k$ edges disjoint from $R$
  such that deleting them from $G$ results in an $H$-free graph? \\%\hfill
  \textbf{Parameter}: $k$
\end{mdframed}

\begin{mdframed}
  \textbf{\RHEC}: Given a graph $G$, an integer $k$, and a set $R$ of nonedges of $G$, do there exist at most $k$ nonedges disjoint from $R$
  such that adding them in $G$ results in an $H$-free graph?\\ %\hfill
  \textbf{Parameter}: $k$
\end{mdframed}

\textbf{Propagational formula satisfiability.}
A ternary Boolean function $f(x,y,z)$ (where $x,y,$ and $z$ are either Boolean variables or constants 0 or 1) is \textit{propagational} if $f(1,0,0)=0, f(0,0,0) = f(1,0,1) = f(1,1,0) = f(1,1,1) = 1$. This has the meaning: if 
$x$ is true then either $y$ is true or $z$ is true. 

\PFS: Given a conjunctive formula $\varphi$ of a propagational ternary function $f$ with distinct variables in each clause of $\varphi$, find whether 
there exists a satisfying truth assignment with weight at most $k$. The parameter we consider is $k$. 
\begin{proposition}[Theorem 3.4 in \cite{CaiC15incompressibility}]
  \label{pro:pfs}
  For any propagational ternary Boolean function $f$, \PFS\ on 3-regular conjunctive formulas (every variable appears exactly three times) admits no polynomial kernel, assuming \NOPH.
\end{proposition}
%It is proved \cite{CaiC15incompressibility} that the problem does not admit a polynomial kernel (assuming \NOPH) even if 
%every variable appears exactly three times in $\varphi$. 
%\daniel{perhaps we can make it a formal statement, as perhaps this is the only nontrivial part of the proof where we do not give a proof.}\sandeep{done, please check.}

\textbf{Satisfaction-testing components.}
For \HED, a \textit{satisfaction-testing component} $S_D(x,y,z)$ is a constant-size edge-restricted $H$-free graph with exactly three allowed edges $\{x,y,z\}$
such that
there is a propagational Boolean function $f(x,y,z)$ such that $f(x,y,z)=1$ if and only if the graph obtained from $S_D(x,y,z)$ by deleting 
edges in $\{x,y,z\}$ with value 1 is $H$-free.  
For \HEC, a \textit{satisfaction-testing component} $S_C(x,y,z)$ is a constant-size nonedge-restricted $H$-free graph with exactly three allowed nonedges $\{x,y,z\}$
such that
there is a propagational Boolean function $f(x,y,z)$ such that $f(x,y,z)=1$ if and only if the graph obtained from $S_C(x,y,z)$ by adding 
edges in $\{x,y,z\}$ with value 1 is $H$-free.  

%\daniel{I made the following "easy"}
There is an easy construction (Lemma 4.3 in \cite{CaiC15incompressibility}) showing that $S_D(x,y,z)$ exists for every connected graph $H$ with at least four vertices but not complete and $S_C\{x,y,z\}$ exists for every connected graph with at least four vertices and at least two nonedges.
The construction for this is as follows. $S_D\{x,y,z\}$: Let $x$ be a nonedge, and $y$ and $z$ be two edges in $H$. Then $H+x$ is a $S_D(x,y,z)$ where $x,y,z$ are the only allowed edges. $S_C\{x,y,z\}$: Let $x$ be an edge, and $y$ and $z$ be two nonedges in $H$. Then $H-x$ is a $S_C(x,y,z)$ where $x,y,z$ are the only allowed nonedges.   

\textbf{Truth-setting components.}
For \HED, a \textit{truth-setting component} ($T_D(u)$) is a constant-sized, edge-restricted $H$-free graph such that it contains
at least three allowed edges $x,y,z$ without a common vertex and admits exactly two deletion sets $\emptyset$ and the set of all allowed edges. 
For \HEC, a \textit{truth-setting component} ($T_C(u)$) is a constant-sized, nonedge-restricted $H$-free graph such that it contains
at least three allowed nonedges $x,y,z$ without a common vertex, and admits exactly two completion sets $\emptyset$ and the set of all allowed nonedges. 

There is a construction given in \cite{CaiC15incompressibility} for $T_D(u)$ and $T_C(u)$ when $H$ is 3-connected but not complete. The constructions are given below.
%The same
%construction works for any graph $H$, provided we can show that $H$ contains an edge and a nonedge sharing no vertex, and $T(u)$ is $H$-free.
%The construction is described below.
%\daniel{I would clearly explain here that the reduction works for every $H$ in one direction and Cai+Cai proved that it works for the the other direction whenever $H$ is 3-connected.}
%\sandeep{please see paragraph added after the constructions.}

Construction of $T_D(u)$: Let $e', e$ be a nonedge and an edge sharing no common vertex in $H$. Let the \textit{basic unit} $U=H+e'$ and set all edges except $e$ and $e'$ in $U$
as forbidden. Let $p$ be the number of vertices in $H$. Take $p$ copies $U_1, U_2, \ldots, U_p$ of $U$. Identify the edge $e$ of $U_i$ with the edge
$e'$ of $U_{i+1}$ to form a chain of $U$'s. This is a \textit{basic chain} $B(u)$. Let us call the unidentified edge $e'$ of $U_1$ as the 
\textit{left-most allowed edge} of $B(u)$ and unidentified edge of $U_p$ as the \textit{right-most allowed edge} of $B(u)$. Take three basic chains $B_0, B_1,$ and $B_2$. Attach them
in a cyclic fashion: Identify the right-most allowed edge of $B_i$ with the left-most allowed edge of $B_{i+1}$, where indices are taken mod $3$. This is the claimed
truth-setting component $T_D(u)$. Let us call the allowed edges thus identified as \textit{variable edges}. We note that there are exactly three variable edges in $T_D(u)$.

%\sandeep{new text, please check:} 
It is easy to see that, for every $H$, there are only two possible deletion sets in $T_D(u)$: the empty set and the set of all allowed edges. To see this, observe that if we remove any of the allowed edges, then it creates a copy of $H$ in one of the units, forcing us to remove the next allowed edge as well. However, it is not clear if these two deletion sets really make the graph $H$ free. 
As Cai and Cai \cite{CaiC15incompressibility} showed, this construction for $T_D(u)$ works correctly for 3-connected graphs $H$: Since the ``cycle'' of basic units is long enough, every subgraph having vertices from different basic units and having at most $|V(H)|$ vertices has vertex connectivity at most 2. In general, the construction may not give correct truth-setting components for 2-connected graphs $H$. But, as we shall see later, by carefully choosing $e$ and $e'$ in these constructions, we can obtain truth-setting components for many 2-connected graphs $H$. 
%\highlight{On the positive side, the gadget thus constructed satisfies the property that it has exactly two deletion (completion) sets - $\emptyset$ and the set of all allowed (non)edges.}

Construction of $T_C(u)$: Let $e', e$ be a nonedge and an edge sharing no common vertex in $H$. Let the \textit{basic unit} $U=H-e$ and set all nonedges except $e$ and $e'$ in $U$
as forbidden. Let $p$ be the number of vertices in $H$. Take $p$ copies $U_1, U_2, \ldots, U_p$ of $U$. Identify the nonedge $e$ of $U_i$ with the nonedge
$e'$ of $U_{i+1}$ to form a chain of $U$'s. This is a \textit{basic chain} $B(u)$. Let us call the unidentified nonedge $e'$ of $U_1$ as the 
\textit{left-most allowed nonedge} of $B(u)$ and unidentified nonedge of $U_p$ as the \textit{right-most allowed nonedge} of $B(u)$. Take three basic chains $B_0, B_1,$ and $B_2$. Attach them
in a cyclic fashion: Identify the right-most allowed nonedge of $B_i$ with the left-most allowed nonedge of $B_{i+1}$, where indices are taken mod $3$. This is the claimed
truth-setting component $T_C(u)$. Let us call the allowed nonedges thus identified as \textit{variable nonedges}. We note that there are exactly three variable nonedges in $T_C(u)$. Similarly to $T_D(u)$, we can argue that for any $H$, there are only two potential completion sets (the empty set and the set of all allowed nonedges), and for 3-connected $H$, these two sets are indeed completion sets.

The following is the construction used in the reduction from \PFS\ to \RHED\ (\textsc{Completion}).
\begin{construction}
  \label{con:cai}
%  Let $(\varphi,k,H)$ be an input to the construction, where $\varphi$ is a 3-regular conjunctive formula on a propagational ternary Boolean function $f$, and $k$ is a positive integer. The construction gives a graph $G_\varphi$, an integer $k'$, and a set of restricted (non)edges in $G_\varphi$.
  Let $(\varphi,k,H)$ be an input to the construction, where $\varphi$ is a 3-regular conjunctive formula on a propagational ternary Boolean function $f$, $k$ is a positive integer, and $H$ is a graph such that there exists a satisfaction-testing component for \HED\ (\textsc{Completion}) for $f$ and a truth-setting component for \HED\ (\textsc{Completion}). The construction gives a graph $G_\varphi$, an integer $k'$, and a set of restricted (non)edges in $G_\varphi$.
  
  \begin{itemize}
      \item For every clause in $\varphi$, introduce a satisfaction-testing component $S_D(x,y,z)\ $ $(S_C(x,y,z))$ for \HED\ (\textsc{Completion}). 
      \item If $c\in \{x,y,z\}$ is 1, then the corresponding allowed (non)edge is deleted (added) and if $c=0$ then 
      the corresponding allowed (non)edge is set as forbidden.
      \item For every variable $u$ in $f$, introduce a 
      truth-setting component $T_D(u)\ $ $(T_C(u))$ for \HED\ (\textsc{Completion}).
      \item For every variable $u$, identify each of the variable (non)edges in $T_D(u)\ $ $(T_C(u))$ with an allowed (non)edge in a satisfaction-testing component corresponds to a different clause in which $u$ appears---since $\varphi$ is 3-regular, $u$ appears in exactly three clauses. 
      %\daniel{It is highly unclear what we mean here, especially for the completion problem.} \sandeep{added one sentence, not sure how to write it better}
  \end{itemize}
  Let the graph obtained be $G_\varphi$ and let $k'=3|V(H)|k$. For the deletion problem the set $R$ of forbidden edges is all the edges in $G_\varphi$ except the allowed edges in the units. For the completion problem, the set $R$ of forbidden nonedges contains every nonedge of $G_\varphi$ except the allowed nonedges in the units.  %, where $t$ is the number of allowed (non)edges in a $T_D(u)$ ($T_C(u)$). In the above construction, $t=3p$.
\end{construction}

Let $H$ be a graph and $(\varphi, k)$ be an instance of a \PFS\ problem. Let $(G_\varphi,k',R)$ be the output of the Construction~\ref{con:cai} applied on $(\varphi,k,H)$.
The construction works correctly in one direction: If $(G_\varphi,k',R)$ is a yes-instance of \RHED\ (\textsc{Completion}), then $(\varphi,k)$ is a yes-instance of \PFS. To see this, let $F$ be a solution of $(G_\varphi, k', R)$. By the definition of $T_D(u)$ ($T_C(u)$), if an allowed (non)edge is in $F$ then so is every allowed (non)edge in it. Therefore, since $|F|\leq k'=3k|V(H)|$ and every truth-setting component has exactly $3|V(H)|$ many allowed (non)edges, only at most $k$ (non)edges of satisfaction-testing components are in $F$.  
By the definition of $S_D(x,y,z)$ ($S_C(x,y,z)$), if $x\in F$ then either $y$ or $z$ is in $F$, otherwise there is an induced $H$ in $G_\varphi+F$. 
Therefore, setting the variables to 1 corresponding to the (non)edges, which are part of $F$, in satisfaction-testing components, we obtain that $(\varphi,k)$ is a yes-instance of \PFS. 
Thus we have the following Proposition.

\begin{proposition}[see Lemma 5.1 in \cite{CaiC15incompressibility}]
  \label{pro:cai-easy}
  Let $(\varphi, k)$ be an instance of \PFS. 
  Let $H$ be a graph such that there exists a satisfaction-testing component for \HED\ (\textsc{Completion}) for $f$ and there exists a truth-setting component for \HED\ (\textsc{Completion}).
  Let $(G_\varphi, k',R)$ be obtained by applying Construction~\ref{con:cai} on $(\varphi,k,H)$. Then, if $(G_\varphi, k',R)$ is a yes-instance of \RHED\ (\textsc{Completion}) then $(\varphi, k)$ is a yes-instance of \PFS.
\end{proposition}

We remark that the proof of  Proposition~\ref{pro:cai-easy} works even if we use a gadget for the truth-setting component which satisfies only a weak property: it has at most two deletion (completion) sets, the $\emptyset$ and the set of all allowed (non)edges. As we have seen, the construction of $T_D(u)$ and $T_C(u)$ discussed above satisfies this weak property.

To prove the other direction, one needs to show that there is no induced $H$ in the ``vicinity'' of a satisfaction-testing component after deleting (adding) the (non)edges corresponding to the variables being set to 1 in $\varphi$. This can be done very easily for 3-connected graphs $H$. Proving this direction for 2-connected graphs $H$ (if provable) requires careful structural analysis of the constructed graph $G_\varphi$. 
%\textbf{Main reduction.}
%The reduction given in \cite{CaiC15incompressibility} from \PFS\ to \RHED\ (\textsc{Completion}) is given here: Let $(\varphi, k)$ be an instance of the source
%problem such that every variable appears exactly three times in $\varphi$. 

%\daniel{It would make sense to say give a name (number) to this construction. We should explain (1) with an easy handwaving proof that it works correctly in one direction (2) the challage for the other direction is that the copies of $H$ do no cross the boundaries of the units in the solution.} \sandeep{done, please check.}

In Figure~\ref{table:gadgets}, we give various gadgets required for the proofs of this section.
We use \textit{unit} as a general term to refer to a satisfaction-testing component or a basic unit.

\begin{figure}
\begin{center}
\scalebox{0.95}{
\begin{tabular}{|c|c|c|c|c|c|c|c|}\hline
\multicolumn{2}{|c|}{\multirow{2}{*}{Graph}} & \multicolumn{3}{|c|}{\textsc{Deletion}} & \multicolumn{3}{|c|}{\textsc{Completion}}\\\cline{3-8}
\multicolumn{2}{|l|}{} & $S_D(x,y,z)$ & Basic unit & Enforcer & $S_C(x,y,z)$ & Basic unit & Enforcer \\\hline
$\overline{A_1}$ & \input{figs/gang/c532} & \input{figs/gadgets/c532-s} & \input{figs/gadgets/c532-bu} & \input{figs/gadgets/c532-de} & &  & \input{figs/gadgets/c532-ce}\\\hline
%$\overline{A_2}$ & \input{figs/c651} & \input{figs/c651-s} & \input{figs/c651-bu} & \input{figs/c651-de} & \input{figs/c651-cbu} & \input{figs/c651-cbu} & \input{figs/c651-ce}\\\hline
$\overline{A_2}$ & \input{figs/gang/c661} & \input{figs/gadgets/c661-s} & \input{figs/gadgets/c661-bu} & \input{figs/gadgets/c661-de} & \input{figs/gadgets/c661-cs} & \input{figs/gadgets/c661-cbu} & \input{figs/gadgets/c661-ce}\\\hline
${A_3}$ & \input{figs/gang/673} & \input{figs/gadgets/673-s} & \input{figs/gadgets/673-bu} & \input{figs/gadgets/673-de} &  &  & \input{figs/gadgets/673-ce}\\\hline
$\overline{A_3}$ & \input{figs/gang/c673} & \input{figs/gadgets/c673-s} & \input{figs/gadgets/c673-bu} & \input{figs/gadgets/c673-de} &  &  & \\\hline
${A_4}$ & \input{figs/gang/8141} & \input{figs/gadgets/8141-s} & \input{figs/gadgets/8141-bu} & \input{figs/gadgets/8141-de} & & & \input{figs/gadgets/8141-ce}\\\hline
${A_5}$ & \input{figs/gang/9181} & \input{figs/gadgets/9181-s} & \input{figs/gadgets/9181-bu} & \input{figs/gadgets/9181-de} & & & \input{figs/gadgets/9181-ce}\\\hline
$\overline{A_{6}}$ & \input{figs/gang/c662} & & & & & & \input{figs/gadgets/c662-ce}\\\hline
$\overline{A_{7}}$ & \input{figs/gang/c663} & \input{figs/gadgets/c663-s} & \input{figs/gadgets/c663-bu} & & \input{figs/gadgets/c663-cs} & \input{figs/gadgets/c663-cbu} & \input{figs/gadgets/c663-ce}\\\hline
$\overline{A_8}$ & \input{figs/gang/c672} &  & & & \input{figs/gadgets/c672-cs} & \input{figs/gadgets/c672-cbu} & \input{figs/gadgets/c672-ce}\\\hline
$\overline{A_{9}}$ & \input{figs/gang/c791} & \input{figs/gadgets/c791-s} & \input{figs/gadgets/c791-bu} & & \input{figs/gadgets/c791-cs} & \input{figs/gadgets/c791-cbu} & \input{figs/gadgets/c791-ce}\\\hline
$\overline{B_1}$ & \input{figs/gang/clawuk2c} &  &  & & \input{figs/gadgets/clawuk2c-cs} & \input{figs/gadgets/clawuk2c-cbu} & \input{figs/gadgets/clawuk2c-ce}\\\hline
$\overline{B_2}$ & \input{figs/gang/c782} &  &  & & \input{figs/gadgets/782c-cs} & \input{figs/gadgets/782c-cbu} & \input{figs/gadgets/782c-ce}\\\hline
$\overline{B_3}$ & \input{figs/gang/z8135c} &  &  & & \input{figs/gadgets/z8135c-cs} & \input{figs/gadgets/z8135c-cbu} & \input{figs/gadgets/z8135c-ce}\\\hline
\end{tabular}}
\end{center}
\caption{Various gadgets used in the proofs of this section. In a satisfaction-testing component $S_D(x,y,z)\ (S_C(x,y,z))$, $x$ is the darkened (non)edge added(deleted) in $H$ to obtain the gadget, and $y$ and $z$ are the other two darkened (non)edges. Both the allowed (non)edges in basic units are darkened. The distinguished edge in a deletion enforcer and the distinguished nonedge in a completion enforcer are darkened.}
\label{table:gadgets}
\end{figure}

\begin{lemma}
  \label{lem:r-a-del}
  Let $H\in \{\overline{A_1}, \overline{A_2}, A_3, \overline{A_3}, A_4, A_5, \overline{A_7}, \overline{A_{9}}\}$. Then \RHED\ is incompressible, assuming \NOPH.
\end{lemma}
%\begin{comment}
\begin{proof}
 By definition, the gadget shown in the corresponding cell in the column `$S_D(x,y,z)$' (where $x$ is the edge added to $H$, and $y$ and $z$ are the other two darkened edges) of Figure~\ref{table:gadgets} is a satisfaction-testing component for \HED\ and the gadget shown in the corresponding cell in the column `Basic unit' (under \textsc{Deletion}) (with two distinguished edges which are darkened) is a basic unit for \HED. 
 %It can be easily verified that the 
 Let $T_D(u)$ be obtained by the construction for truth-setting component using the basic unit for \HED\ (given in Figure~\ref{table:gadgets}).
 %has the property that if an allowed edge is deleted then the component has an induced $H$ unless all allowed edges are deleted.  
 We use these $S_D(x,y,z)$ and $T_D(u)$ for the reduction given below.
 
 We give a PPT from \PFS\ to \RHED. Then the statement follows from the incompressibility of the source problem (Proposition~\ref{pro:pfs}).
 Let $(\varphi, k)$ be an instance of \PFS\ such that every variable appears exactly three times in $\varphi$. 
 We apply Construction~\ref{con:cai} on $(\varphi,k,H)$ to obtain $(G_\varphi,k',R)$. 
 We claim that $(\varphi,k)$ is a yes-instance of \PFS\ if and only if $(G_\varphi, k'=3|V(H)|k, R)$ is a yes-instance of \RHED. 
 One direction is proved by Proposition~\ref{pro:cai-easy}. %( Proposition~\ref{pro:cai-easy} requires that the $T_D(u)$ is a correct truth-setting component. The only remaining thing to prove this is that $T_D(u)$ and $T_D(u)-E'$ are $H$-free, where $E'$ is the set of all allowed edges. This is implicitly done below.) 
 For the other direction,
 let $(\varphi, k)$ be a yes-instance with a satisfying truth assignment with weight at most $k$. 
 Let $F$ contain all allowed edges of $T_D(u)$ for every true variable $u$. 
 Clearly, $|F|\leq3|V(H)|k$.
 We claim that $G_\varphi' = G_\varphi-F$ is $H$-free.
 For a contradiction, assume that there is an $H$ induced by $Z$ in $G_\varphi'$.  
 Clearly, the vertices in $Z$ cannot be from a single unit. Since there is a chain of $|V(H)|$ many basic units between every pair of satisfaction-testing components, we obtain
 that the vertices of an allowed edge act as a 2-separator in the $H$ induced by $Z$. %Let $F'$ be any subset of the set of allowed edges.
 %Since a truth-setting component has exactly two deletion sets - $\emptyset$ and the set of all allowed edges, we obtain that every truth-setting component in $G_\varphi'$ is $H$-free. Further, by the definition of satisfaction-testing component, every satisfaction-testing component in $G_\varphi'$ is $H$-free. %By Lemma~\ref{lem:gadget-correct-d}, there is no induced $H$ spanning over a satisfaction-testing component and one or more truth-setting component. 
 %Therefore, $(G_\varphi, k')$ is a yes-instance.
 %For the other direction, let $(G_\varphi, k')$ be a yes-instance and let $F$ be a solution. Clearly, if $F$ contains an allowed edge in a truth-setting component, then it contains all allowed edges in it. We will come up with a truth assignment with weight at most $k$ for $\varphi$. In a  satisfaction-testing component, if a variable edge is in $F$, then set the  corresponding variable to true. Clearly, the weight of this truth-assignment is at most $k$. Further, by the definition of satisfaction-testing component $S_D(x,y,z)$, if $x$ is set to true then so is $y$ or $z$; otherwise, there will be an induced $H$ in the component. Therefore, every clause is satisfied and hence $(\varphi, k)$ is a yes-instance.
 We list down the arguments which lead to contradiction for each graph $H$.
  \begin{description}
  \item[$\overline{A_1}$, $A_3$, $A_4$, $A_5$, $\overline{A_9}$:] The vertices of every 2-separator of $H$ has more number of mutually adjacent common neighbors than that of vertices of every allowed edge in a unit.
  %\daniel{??? You mean the separator has more than $2c$ common neighbors, while every allowed edge has at most $c$ common neighbors, and every edge is in at most 2 units?} \sandeep{changed `less' to `more'. we are talking about mutually adjacent common neighbors. Is it fine now?}
  \item[$\overline{A_2}$:] The vertices $u,v$ of every 2-separator have two common neighbors $x,y$ such that $x$ and $y$ have a common neighbor non-adjacent to both $u$ and $v$. 
  Therefore, the vertices $u,v,x,y$ must be from a unit. 
  That is, the four vertices with degree at least three and form a diamond in the $H$ must be from a single unit, say $U$. 
  Now it can be seen that for every  diamond in $U-F$ neither the middle edge of the diamond nor the nonedge of the diamond is an allowed edge in $U$. %Further, if the nonedge in the diamond is an allowed edge in $U$, then the middle edge of the diamond is neither an allowed edge nor has a common neighbor nonadjacent to both the vertices in the nonedge in $U-F$. 
  
  \item[$\overline{A_3}$:] Clearly, the induced paw in the $H$ must be from a single unit, say $U$ (any pair of nonadjacent vertices in a paw has a common neighbor which is adjacent to the remaining vertex). But none of the induced paw in $U-F$ has a nonedge which is an allowed edge in $U$. 
  
  \item[$\overline{A_7}$:] Every 2-separator in $H$ induces a $K_2$. In a unit, we note that if $uv$ is an allowed edge and $x$ a common neighbor of $u$ and $v$, then neither $ux$ nor $vx$ is an allowed edge. Therefore, if $uv$ acts as a 2-separator in the induced $H$, then there must be vertices $x,y,z$ in a unit, say $U$ containing $u$ and $v$ such that, in $U-F$, it must be the case that $x\in N(u)\cap N(v)$, $y\in (N(u)\cap N(x))\setminus N(v), z\in (N(v)\cap N(x))\setminus N(u)$, and $y,z$ are nonadjacent. This is not the case with vertices of any of the allowed edges in a unit.    
  \end{description}
\end{proof}
%\end{comment}
The following corollary follows from the fact that there is no subgraph isomorphic to a $C_4$ where all edges are allowed in the graph $G_\varphi$ constructed in the proof of Lemma~\ref{lem:r-a-del} for \RHED, when $H$ is a $\overline{A_1}$. We will be using this result later to handle $\overline{A_6}$ (Lemma~\ref{lem:a6c}).

\begin{corollary}
  \label{cor:r-a7c}
  Let $H$ be $\overline{A_1}$. Then, assuming \NOPH, \RHED\ is incompressible even if the input graph does not contain a subgraph (not necessarily induced) 
  %isomorphic to $H$  where all the side-edges of the diamond (a side-edge of a diamond is an edge incident to a degree-2 vertex in the diamond)
  isomorphic to a $C_4$ where all the edges of the $C_4$ are allowed.
  %in the subgraph are allowed.
\end{corollary}

%Lemma~\ref{lem:r-a-com} can be proved in an analogous way by using Corollary~\ref{cor:com-tu} and Lemma~\ref{lem:gadget-correct-c}.
\begin{lemma}
  \label{lem:r-a-com}
  Let $H\in \{\overline{A_2}, \overline{A_7}, \overline{A_8}, \overline{A_9}, \overline{B_1}, \overline{B_2}, \overline{B_3}\}$.
  Then \RHEC\ is incompressible, assuming \NOPH.
\end{lemma}
%\begin{comment}
\begin{proof}
   By definition, the gadget shown in the corresponding cell in the column `$S_C(x,y,z)$' (where $x$ is the nonedge added to $H$, and $y$ and $z$ are the other two dashed nonedges) of Figure~\ref{table:gadgets} is a satisfaction-testing component for \HEC\ and the gadget shown in the corresponding cell in the column `Basic unit' (under \textsc{Completion}) (with two distinguished nonedges which are dashed) is a basic unit for \HEC. 
   %It can be easily verified that the 
   Let $T_C(u)$ be obtained by the construction for truth-setting component using the basic unit for \HEC\ (given in Figure~\ref{table:gadgets}). 
   %has the property that if an allowed nonedge is added then the component has an induced $H$ unless all allowed nonedges are added.  
 We use these $S_C(x,y,z)$ and $T_C(u)$ for the reduction given below.
 
 We give a PPT from \PFS\ to \RHEC. Then the statement follows from the incompressibility of the source problem (Proposition~\ref{pro:pfs}).
 Let $(\varphi, k)$ be an instance of \PFS\ such that every variable appears exactly three times in $\varphi$. 
 We apply Construction~\ref{con:cai} on $(\varphi,k,H)$ to obtain $(G_\varphi,k'=3|V(H)|k,R)$. 
 We claim that $(\varphi,k)$ is a yes-instance of \PFS\ if and only if $(G_\varphi, k', R)$ is a yes-instance of \RHEC. 
 One direction is proved by Proposition~\ref{pro:cai-easy}. %( Proposition~\ref{pro:cai-easy} requires that the $T_C(u)$ is a correct truth-setting component. The only remaining thing to prove this is that $T_C(u)$ and $T_C(u)+E'$ are $H$-free, where $E'$ is the set of all allowed nonedges. This is implicitly done below.) 
 For the other direction,
 let $(\varphi, k)$ be a yes-instance with a satisfying truth assignment with weight at most $k$. Let $F$ contains all allowed nonedges of $T_C(u)$ for every true variable $u$. 
 Clearly, $|F|\leq3|V(H)|k$.
 We claim that $G_\varphi' = G_\varphi+F$ is $H$-free.
 For a contradiction, assume that there is an $H$ induced by $Z$ in $G_\varphi'$.  
 Clearly, the vertices in $Z$ cannot be from a single unit. Since there is a chain of $|V(H)|$ many basic units between every pair of satisfaction-testing components, we obtain
 that the vertices of an allowed edge act as a 2-separator in the $H$ induced by $Z$. %Let $F'$ be any subset of the set of allowed edges.
 We list down the arguments which lead to contradiction for each graph $H$.
 \begin{description}
  \item[$\overline{A_2}$:] The vertices $u,v$ of every 2-separator has two common neighbors $x,y$ such that $x$ and $y$ have a common neighbor non-adjacent to both $u$ and $v$. 
  Therefore, the vertices $u,v,x,y$ must be from a unit. 
  That is, the four vertices with degree at least three and form a diamond in the $H$ must be from a single unit, say $U$. 
  %Now it can be seen that for every  diamond in $U+F$ the nonedge in the diamond is not an allowed edge and if the middle edge of the diamond is an allowed nonedge in $U$ 
  %then the nonedge in the diamond does not have a common vertex nonadjacent to the middle edge. 
  Now it can be seen that the following conditions are satisfied:
  \begin{itemize}
      \item for every diamond in $U+F$ if the nonedge in the diamond is an allowed nonedge, then the middle edge of the diamond is neither 
       allowed nor its vertices have a common neighbor outside the diamond and nonadjacent to the vertices of the nonedge, and
      \item for every diamond in $U+F$ if the middle edge in the diamond is allowed, then the nonedge of the diamond is neither allowed nor
      its vertices have a common neighbor outside the diamond and nonadjacent to the vertices of the middle edge.
  \end{itemize}

%  \item[$\overline{A_6}$:] Let $\{a,b,c,d\}$ induces a diamond in $H$ where $ab$ be the nonedge and $a$ and $d$ have degree 3 and $b$ and $c$ have degree 4 in $H$. 
%  Since there is a common neighbor for $a$ and $b$ in $H$, the diamond must come from a unit, say $U$. 
%  Therefore, it must be the case that at least one of the following conditions is satisfied for at least 
%  one diamond induced by $\{u,v,w,x\}$ (where $uv$ is the nondege) in $U+F$, which is not the case with the units.
%  \begin{itemize}
%      \item The edge $uv$ is an allowed nonedge then there is a side edge of the diamond which is either an allowed nonedge 
%        of $U$ or its vertices has a common neighbor which is nonadjacent to the other vertices of the diamond.
%      \item A side edge of the diamond is an allowed nonedge of $U$ and the nonedge of the diamond is either an allowed nonedge 
%      or its vertices have a common neighbor nonadjacent to the other vertices of the diamond.
%  \end{itemize} %or the vertices of the nonedge 
  %must have a common neighbor in $U+F$ which is nonadjacent to the other two vertices of the diamond, 
  %and a side edge of a diamond must be either an allowed edge or the vertices of the side edge must have a common neighbor 
  %in $U+F$ which is nonadjacent to the other two vertices in the diamond. 
  %This is not the case with the units.  
  
  \item[$\overline{A_7}$:] Every 2-separator in $H$ induces a $K_2$. In a unit, we note that, if $uv$ is an allowed nonedge and $x$ a common neighbor of $u$ and $v$, then neither $ux$ nor $vx$ is an allowed nonedge. Therefore, if $uv$ acts as a 2-separator in the induced $H$ then there must be vertices $x,y,z$ in a unit, say $U$ containing $u$ and $v$ such that, in $U+F$, it must be the case that $x\in N(u)\cap N(v)$, $y\in (N(u)\cap N(x))\setminus N(v), z\in (N(v)\cap N(x))\setminus N(u)$, and $y,z$ are nonadjacent. This is not the case with vertices of any of the allowed nonedges in a unit.   

  \item[$\overline{A_8}$:] It must be the case that at least one of the following cases is true for a unit $U$: 
  \begin{itemize}
      \item the middle edge of an induced diamond in $U+F$ is an allowed nonedge in $U$;
      \item the nonedge between a degree-1 vertex and the degree-3 vertex of an induced $H'$ in $U+F$ 
      is an allowed nonedge in $U$, where $H'$ is obtained by deleting a degree-2 vertex, adjacent to another degree-2 vertex, from $H$;
      \item a side edge and a nonedge of an induced paw in $U+F$ are allowed (a side edge of a paw is an edge connecting a degree two vertex and the degree three vertex of the paw);
      \item the nonedge and one edge of an induced $P_3$ in $U+F$ are allowed;
  \end{itemize}
  The first three cases do not hold true for the units. $S_C(x,y,z)$ satisfies the fourth condition. But, in this case, an induced $H$
  will be formed only if the end points of both the allowed nonedges in the basic unit have common-neighbors, which is not the case.
  
  \item[$\overline{A_9}$, $\overline{B_2}$, $\overline{B_3}$:] The vertices of every 2-separator of $H$ has more number of mutually adjacent common neighbors than that of vertices of every allowed nonedge in a unit, even if all allowed nonedges are added to the unit.
  
  \item[$\overline{B_1}$:] Clearly, the $K_5-e$ in the $H$ must be from a single unit, say $U$. 
  Further the nonedge in the $K_5-e$ must be an allowed nonedge. But for every $K_5-e$ in $U+F$, the nonedge is not an allowed nonedge.
  \end{description}
\end{proof}
%\end{comment}
%\daniel{It would make sense to move this corrolary after Lemma 62.} \sandeep{done}

%It is proved that (Lemma 5.1 in \cite{CaiC15incompressibility}) $\varphi$ is satisfiable if and only if $G_\varphi$ has at most $k'$ allowed edges
%such that deleting them results in a graph with no induced $H$ in any of the satisfaction-testing or truth-setting components. They further
%proved that the resultant graph is $H$-free for 3-connected graphs $H$ by proving that there is no induced $H$ spanning over multiple components. 

\subsection{Using enforcers to reduce to the unrestricted problems}
%\daniel{new text}
If we want to reduce \RHED\ to \HED, then there is a fairly natural idea to try: for each restricted edge $e'=x'y'$, we introduce a copy of $H$ on set $U$ of new vertices and identify $x'y'$ with $xy$, where $x,y\in U$ are nonadjacent vertices. Now $U$ induces a copy of $H$ plus an extra edge, but as soon as $e'$ is deleted, it becomes a copy of $H$, effectively preventing the deletion of $e'$.

There are two problems with this approach. First, the solution could delete other edges from the new copy of $H$, and then it is not necessarily true that the removal of $e'$ automatically creates an induced copy of $H$. However, this problem is easy to avoid by repeating this gadget construction $k+1$ times: a solution of size at most $k$ cannot interfere with all $k+1$ gadgets. The second problem is more serious: it is possible that attaching the new vertices creates a copy of $H$, even when $e$ is not deleted. For certain graphs $H$, with a careful choice of $x$ and $y$ we can ensure that this does not happen: no induced copy of $H$ can go through the separator ${x,y}$.

An \textit{$H$-free deletion enforcer $(X,e)$} consists of an $H$-free graph $X$ and a distinguished edge $e$ in $X$ such that (a) $X-e$ contains an induced $H$, 
and (b) for any graph $G$ vertex disjoint with $X$, and any edge $e'$ of $G$, all induced copies of $H$ in the graph obtained by attaching $X$ to $G$
through identifying $e$ with $e'$ reside entirely inside $G$. 
Similarly, an \textit{$H$-free completion enforcer $(X,e)$} consists of an $H$-free graph $X$ and
a distinguished nonedge $e$ such that (a) $X+e$ contains an induced $H$, and (b) for any graph $G$ vertex disjoint with $X$, and any nonedge $e'$ in $G$, all induced copies of $H$
in the graph obtained by attaching $X$ to $G$ through identifying $e$ with $e'$ reside entirely inside $G$.   
It can be shown that if we can come up with enforcer gadgets satisfying these conditions, then the ideas sketched above can be made to work, and we obtain a reduction from the restricted problem to the unrestricted version. 
\begin{proposition}[See Lemma~6.5 in \cite{CaiC15incompressibility}]
  \label{pro:enforcers}
  For a graph $H$:
  \begin{enumerate}[(i)]
      \item If \RHED\ is incompressible and there exists an $H$-free deletion enforcer, then \HED\ is incompressible.
      \item If \RHEC\ is incompressible and there exists an $H$-free completion enforcer, then \HEC\ is incompressible.
      \item If \HED\ is incompressible and there exists an $H$-free completion enforcer, then \HEE\ is incompressible.
  \end{enumerate} 
\end{proposition}

In the rest of the section, we establish the existence of enforcer gadgets for certain graphs $H$.
\begin{lemma}
  \label{lem:de}
  Let $H\in \{\overline{A_1}, \overline{A_2}, A_3, \overline{A_3}, A_4, A_5\}$. Then the gadget $X$ with a distinguished edge $e$ shown in the corresponding cell in the column `Enforcer' (under \textsc{Deletion}) in Figure~\ref{table:gadgets} is an $H$-free deletion enforcer.  
\end{lemma}
%\begin{comment}
\begin{proof}
  Clearly, $X$ is $H$-free and $X-e$ contains an induced $H$ as required by the definition. Let $G$ be a graph vertex-disjoint with $X$. Let $G'$ be obtained from $G$ and $X$ by identifying $e$ and any edge $e'$ of $G$. Let $u,v$ be the vertices in $G'$ obtained by the identification of $e$ and $e'$. We need to prove that every induced $H$ in $G'$ is induced by a subset of vertices in $G$. For a contradiction, assume that there is an $H$ in $G'$ induced by $Z$ where $Z$ has vertices from $V(X) \setminus \{u, v\}$ and from $V(G)\setminus \{u,v\}$.  Since $H$ is 2-connected, $\{u,v\}\subseteq Z$ and $\{u,v\}$ must act as a 2-separator which induces a $K_2$ in the induced $H$. We list down arguments which lead to contradiction with the assumption for each graph $H$.
  \begin{description}
  \item[$\overline{A_1}$, $\overline{A_3}$:] None of the 2-separators in $H$ induces a $K_2$.
  \item[$\overline{A_2}$, $A_4$, $A_5$:] Every 2-separator $xy$ in $H$ has at least one common neighbor in every component 
  obtained after deleting $x$ and $y$ from $H$. But the vertices of $e$ do not have a common neighbor.
  \item[$A_3$:] Since $H$ is 2-connected, $Z$ induces a graph containing an induced $C_4$ with the vertices in $X$. But $H$ does not have an induced $C_4$.
  \end{description}
\end{proof}
%\end{comment}
\begin{lemma}
  \label{lem:ce}
  Let $H\in \{\overline{A_1}, \overline{A_2}, A_3, A_4, A_5, \overline{A_6}, \overline{A_7}, \overline{A_8}, \overline{A_{9}}, \overline{B_1}, \overline{B_2}, \overline{B_3}\}$. Then the gadget $X$ with a distinguished nonedge $e$ shown in the corresponding cell in the column `Enforcer' (under \textsc{Completion}) in Figure~\ref{table:gadgets} is an $H$-free completion enforcer.  
\end{lemma}
%\begin{comment}
\begin{proof}
  Clearly, $X$ is $H$-free and $X+e$ contains an induced $H$ as required by the definition. Let $G$ be a graph vertex-disjoint with $X$. Let $G'$ be obtained from $G$ and $X$ by identifying $e$ and any nonedge $e'$ of $G$. Let $u,v$ be the vertices in $G'$ obtained by the identification of $e$ and $e'$. We need to prove that every induced $H$ in $G'$ is induced by a subset of vertices in $G$. For a contradiction, assume that there is an $H$ in $G'$ induced by $Z$ where $Z$ has vertices from $V(X) \setminus \{u, v\}$ and from $V(G)\setminus \{u,v\}$.  Since $H$ is 2-connected, $\{u,v\}\subseteq Z$ and $\{u,v\}$ must act as a 2-separator which induces a $2K_1$ in the induced $H$. We list down arguments which lead to contradiction with the assumption for each graph $H$.
  \begin{description}
  \item[$\overline{A_1}, \overline{A_2}, \overline{A_6}, \overline{A_8}$, $\overline{B_1}$:] Every 2-separator $xy$ in $H$ which induces a $2K_1$ 
  has at least one common neighbor in every component obtained after deleting $x$ and $y$ from $H$. But the vertices of $e$ does not have a common neighbor.
  \item[$A_4, A_5, \overline{A_7}, \overline{A_{9}}, \overline{B_2}, \overline{B_3}$:] There is no 2-separator in $H$ inducing a $2K_1$.
  \item[$A_3$:] Since $H$ is 2-connected, $Z$ induces a graph containing an induced $P_5$ with the vertices in $X$. But $H$ does not have an induced $P_5$ between vertices of a 2-separator inducing $2K_1$.
  \end{description}
\end{proof}
%\end{comment}
\begin{lemma}
  \label{lem:a-part1-del}
  Let $H\in \{\overline{A_1}, \overline{A_2}, A_3, \overline{A_3}, A_4, A_5\}$. Then \HED\ and \HEE\ are incompressible, assuming \NOPH.
\end{lemma}
%\begin{comment}
\begin{proof}
  By Lemma~\ref{lem:r-a-del}, \RHED\ is incompressible. 
  By Lemma~\ref{lem:de}, we have $H$-free deletion enforcer. 
  Then by Proposition~\ref{pro:enforcers}, \HED\ incompressible, assuming \NOPH.
  By Lemma~\ref{lem:ce}, we have $H$-free completion enforcer, except when $H=\overline{A_3}$. Then by Proposition~\ref{pro:enforcers}, \HEE\ is incompressible,
  except when $H=\overline{A_3}$, assuming \NOPH. Incompressibility of \HEE\ when $H=\overline{A_3}$ follows from that when $H=A_3$ and Proposition~\ref{pro:folklore}.
\end{proof}
%\end{comment}

Similarly, we can prove Lemma~\ref{lem:a-part1-com}. The cases of $H$ being $A_4$ or $A_5$ follows from the fact that $H$ and $\overline{H}$ are isomorphic (see Proposition~\ref{pro:folklore}). 
\begin{lemma}
  \label{lem:a-part1-com}
  Let $H\in \{\overline{A_2}, A_4, A_5, \overline{A_7}, \overline{A_8}, \overline{A_9}, \overline{B_1}, \overline{B_2}, \overline{B_3}\}$. Then \HEC\ is incompressible, assuming \NOPH.
\end{lemma}

\subsection{Further tricky reductions}
There are graphs for which we can show that no completion/deletion enforcers, as defined in the previous section, exist (this can be checked by going through every pair $x,y$ of (non)adjacent vertices). For some of these graphs, we can find a different way of enforcing that certain edges are forbidden; typically, we introduce some vertices that are used globally by every enforcer gadget.  Furthermore, there are graphs $H$, where we were unable to obtain a reduction from \RHED\ (\textsc{Completion}), but could choose an induced subgraphs $H'\subseteq H$ and obtain a reduction from \RHDED\ (\textsc{Completion}), whose incompressibility was established earlier.
%\daniel{Can we organize this section that it starts with $H$ to $H$ reductions and then $H'$ to $H$ reductions?} \sandeep{done, please check.}
\begin{comment}
\begin{lemma}
  \label{lem:tricky}
  Assuming \NOPH, \HEE\ and \HED\ are incompressible, when $H\in\{\overline{A_6}, \overline{A_7}, \overline{A_8}, \overline{A_9}\}$ and \HEC\ is 
  incompressible when $H\in \{\overline{A_1}, \overline{A_6}\}$.
\end{lemma}
\end{comment}
%\begin{comment}
\begin{lemma}
  \label{lem:a7c}
  Let $H = \overline{A_7}$. Then \HED\ and \HEE\ are incompressible, assuming \NOPH.
\end{lemma}
%\begin{comment}
\begin{proof}
  By Lemma~\ref{lem:r-a-del}, \RHED\ is incompressible. We give a PPT from \RHED\ to \HED. Then the incompressibility of \HEE\ follows from the existence of $H$-free completion enforcer (Lemma~\ref{lem:ce}).
  
  Let $(G',k,R)$ be an instance of \RHED. We obtain a graph $G$ from $G'$ as follows. Introduce a set $W$ of $k$ independent vertices such that
  all of them are adjacent to all vertices in $G'$. For every forbidden edge $e=uv\in R$, 
  introduce three sets $X_e, Y_e, Z_e$ of $k$ vertices each such that $ux, vy, wx, wy\in E(G)$ for every $x\in X_e$, 
  for every $y\in Y_e$, and for every $w\in W$. Further, for every $i$ such that $1\leq i\leq k$, $x_iz_i, y_iz_i\in E(G)$, 
  where $x_i\in X_e, y_i\in Y_e$, and $z_i\in Z_e$ (assuming a labelling of vertices in sets $X_e, Y_e$, and $Z_e$). 
  Let $X = \bigcup_{e\in R}X_e$, $Y = \bigcup_{e\in R}Y_e$, and $Z = \bigcup_{e\in R}Z_e$. 
  We add edges to make sure that the set $C=X\cup Y$ forms a clique. Further, $Z$ forms an independent set. 
  This completes the construction and let the resultant graph be $G$ 
  (see Figure~\ref{fig:a8cred}). We claim that $(G',k,R)$ is a yes-instance of \RHED\ if and only if $(G,k)$ is a yes-instance of \HED. 
  
  Let $(G,k)$ be a yes-instance. Let $F$ be a solution. Since $G'$ is an induced subgraph of $G$, $G'-F$ is $H$-free. Let $F$ contains a forbidden edge $e=uv\in R$. Then there are at least $k$ edge disjoint $H$ due to the vertices in the sets $W, X_e, Y_e$ and $Z_e$. Since $|F|\leq k$, this cannot happen. Therefore $F$ does not contain any forbidden edge and hence $F$ is a solution for $(G',k,R)$.
  
  Let $(G',k, R)$ be a yes-instance with a solution $F'$. We claim that $F'$ is a solution for $(G,k)$. 
  For a contradiction, let $U$ induce an $H$ in $G-F'$.  Then $U$ must contain at least one vertex newly introduced in $G$. 
  %Since there is no pair of vertices with the same neighborhood in $H$, $|U\cap X_e|\leq 1$, $|U\cap Y_e|\leq 1$, and $|U\cap Z_e|\leq 1$ for each edge $e\in R$, 
  Since there is no pair of vertices with the same neighborhood in $H$, $|U\cap W|\leq 1$. If $U$ contains no vertex from $W$, and if $U$ contains at least one vertex from $V(G')$, then $U$ induces a graph with either a cut vertex or an induced $C_4$, which is a contradiction. If $U$ contains no vertex from $W$ and $V(G')$, then $U$ is a subgraph of a split graph where every vertex in the clique (formed by $X\cup Y$) is adjacent to exactly one vertex of degree two (a vertex in $Z$). Therefore, $U$ cannot induce $H$. Hence $|U\cap W|=1$. Let $U\cap W=\{w\}$.
  Let $e=uv$ be any forbidden edge in $G'$. 
  For every vertex $x\in X_e$, $ux$ is not the middle edge of any induced diamond (disregarding all vertices in $W$ except $w$) 
  in $G-F'$ and so is the case with every edge $vy$ for every $y\in Y_e$. 
  Therefore, neither $ux$ nor $vy$ can be part of the central triangle 
  (triangle formed by the degree-4 vertices) of the induced $H$. 
  
  Case 1: The vertex $w$ is a degree-4 vertex in the induced $H$. Let the other two vertices in the central triangle
  in the $H$ be $a,b$. Since an edge between $C$ and $G'$ cannot be an edge in the central triangle, either $a,b\in V(G')$ or $a,b\in C$. 
  %Then the other two vertices in the triangle formed by the degree-4 vertices in the $H$ must be either $u,v$ or $x,y$ for some forbidden edge $e=uv$, $x\in X_e$, and $y\in Y_e$. 
  Assume that $a,b\in V(G')$. Then every common neighbor, not in $W$, of $a$ and $b$ is adjacent to $w$. Hence $U$ cannot induce $H$ in $G-F$.
  Therefore, $a,b\in C$. The only possibility of having a common neighbor, not in $W$ and not adjacent to $w$, of $a$ and $b$ is when
  $a\in X_e$, $b\in Y_e$ for some forbidden edge $e = uv$. But, then the unique common neighbor ($u$), not in $W$ and nonadjacent to $b$, 
  of $a$ and $w$, and the unique
  common neighbor ($v$), not in $W$ and nonadjacent to $a$, of $b$ and $w$ are adjacent. Therefore, $U$ cannot induce $H$ in $G-F'$.
  
  %Let the degree-4 vertices in the $H$ are $w,u,$ and $v$. Since every common neighbor, other than those in $W$, of $u$ and $v$ is adjacent to $w$, $U$ does not induce an $H$. Therefore, assume that the degree-4 vertices in $H$ are $w,x,$ and $y$. Since $w$ is adjacent to every vertex in $C$, $U$ contains only $x$ and $y$ from the set $C$. Then, since the unique common neighbor ($u$) of $w$ and $x$ not in $C$ and the unique common neighbor ($v$) of $w$ and $y$ not in $C$ are adjacent in $G-F'$, $U$ cannot induce an $H$, which is a contradiction.
  
  Case 2: The vertex $w$ is a degree-2 vertex in the induced $H$. Let the other two vertices in the triangle containing $w$ in $H$ be $a,b$. 
  Clearly, $ab$ is a part of the central triangle of the induced $H$.
  Since an edge between $C$ and $G'$ cannot be an edge in the central triangle, either $a,b\in V(G')$ or $a,b\in C$.
  Assume that $a,b\in V(G')$. 
  Then every common neighbor, not in $W$, of $a$ and $b$ is adjacent to $w$. Hence $U$ cannot induce $H$ in $G-F'$.
  Therefore, $a,b\in C$. 
  The only possibility of having a common neighbor, not in $W$ and not adjacent to $w$, of $a$ and $b$ is when
  $a\in X_e$, $b\in Y_e$ for some forbidden edge $e = uv$.
  Here, such a common neighbor can only be a vertex $z$ from $Z_e$. But, there are no common neighbor of $a$ and $z$
  not adjacent to $w$. Therefore, $U$ cannot induce $H$ in $G-F'$.
  %The other two vertices part of the triangle in which $w$ is a part of must be either $u,v$ or $x,y$ for some forbidden edge $e=uv$, $x\in X_e$, and $y\in Y_e$.
  %Since every common neighbor, other than those in $W$, of $u$ and $v$ is adjacent to $w$ in $G-F'$, both $u$ and $v$ cannot be in $U$. Therefore, let $x$ and $y$ be the degree-4 neighbors of $w$ in $H$. Since $w$ is adjacent to every vertex in $C$, $U\cap X_e=\{x,y\}$. Since $x$ and $y$ have no other common neighbors with degree at least 4, (other than vertices in $W\cup C$), $U$ cannot induce an $H$, a contradiction. 
\end{proof}
%\end{comment}
%\begin{comment}
\begin{figure}
  \centering
  \begin{subfigure}[b]{.5\textwidth}
    \centering
    \input{figs/gadgets/a8cred}
    \caption{The gadget used to handle $\overline{A_7}$ in Lemma~\ref{lem:a8c}}
    \label{fig:a8cred}
  \end{subfigure}%
  \begin{subfigure}[b]{.5\textwidth}
    \centering
    \input{figs/gadgets/a10cred}
    \caption{The gadget used to handle $\overline{A_{9}}$ in Lemma~\ref{lem:a9c}}
    \label{fig:a10cred}
  \end{subfigure}%
  \caption{}
  \label{fig:a8ca10cred}
\end{figure}
We can handle $\overline{A_{9}}$ in a similar way.
%\end{comment}
\begin{lemma}
  \label{lem:a9c}
  Let $H = \overline{A_{9}}$. Then \HED\ and \HEE\ are incompressible, assuming \NOPH.
\end{lemma}
%\begin{comment}
\begin{proof}
  By Lemma~\ref{lem:r-a-del}, \RHED\ is incompressible. We give a PPT from \RHED\ to \HED. Then the incompressibility of \HEE\ follows from the existence of $H$-free completion enforcer (Lemma~\ref{lem:ce}).
  
  Let $(G',k,R)$ be an instance of \RHED. We obtain a graph $G$ from $G'$ as follows. Introduce a set $W$ of $k$ independent vertices such that
  all of them are adjacent to all vertices in $G'$. For every forbidden edge $e=uv\in R$, 
  introduce four sets $Q_e, X_e, Y_e, Z_e$ of $k$ vertices each such that $ux, vy, wq, wx, wy\in E(G)$ for every $x\in X_e$, for every $y\in Y_e$,
  for every $q\in Q_e$, and for every $w\in W$. Further, for every $i$ such that $1\leq i\leq k$, $x_iz_i, y_iz_i, x_iq_i, y_iq_i\in E(G)$, where $q_i\in Q_e, x_i\in X_e, y_i\in Y_e$, and $z_i\in Z_e$ (assuming a labelling of vertices in sets $Q_e, X_e, Y_e$, and $Z_e$). Let $Q = \bigcup_{e\in R}Q_e$, $X = \bigcup_{e\in R}X_e$, $Y = \bigcup_{e\in R}Y_e$, and $Z = \bigcup_{e\in R}Z_e$. The set $C=X\cup Y$ forms a clique and $Q\cup Z$ forms an independent set. This completes the construction and let the resultant graph be $G$ (see Figure~\ref{fig:a10cred}). We claim that $(G',k,R)$ is a yes-instance of \RHED\ if and only if $(G,k)$ is a yes-instance of \HED. 
  
  Let $(G,k)$ be a yes-instance. Let $F$ be a solution. Since $G'$ is an induced subgraph of $G$, $G'-F$ is $H$-free. Let $F$ contains a forbidden edge $e=uv\in R$. Then there are at least $k$ edge disjoint $H$ due to the vertices in the sets $W, Q_e, X_e, Y_e$ and $Z_e$. Since $|F|\leq k$, this cannot happen. Therefore $F$ does not contain any forbidden edge and hence $F$ is a solution for $(G',k,R)$.
  
  Let $(G',k, R)$ be a yes-instance with a solution $F'$. We claim that $F'$ is a solution for $(G,k)$. 
  For a contradiction, let $U$ induces an $H$ in $G-F'$.  Then $U$ must contain at least one vertex newly introduced in $G$. 
  %Since there is no pair of vertices with the same neighborhood in $H$, $|U\cap Q_e|\leq 1, |U\cap X_e|\leq 1$, $|U\cap Y_e|\leq 1$, and $|U\cap Z_e|\leq 1$ for each edge $e\in R$, 
  Since there is no pair of vertices with the same neighborhood in $H$, $|U\cap W|\leq 1$. 
  If $U$ contains no vertex from $W$, and if $U$ contains at least one vertex from $V(G')$, 
  then $U$ induces a graph with either a cut vertex or an induced $C_4$, which is a contradiction. 
  If $U$ contains no vertex from $W\cup V(G')$, then $U$ is a subgraph of a split graph where every pair of vertices in the clique 
  (formed by $X\cup Y$) has exactly two common neighbors in $Q\cup Z$ 
  (in this case, the two vertices have the same neighborhood in $Q\cup Z$) or has no common neighbors in $Q\cup Z$. 
  Then it can be verified that $U$ cannot induce $H$. 
  Hence $|U\cap W|=1$. Let $U\cap W=\{w\}$.
  Let $e=uv$ be any forbidden edge in $G'$. For every vertex $x\in X_e$, 
  $ux$ is not the middle edge of any diamond in $G-F'$, disregarding all vertices in $W$ except $w$; 
  and so is the case with every edge $vy$ for every $y\in Y_e$. 
  Therefore, neither $ux$ nor $vy$ can be part of any of the edges between the degree-5 vertices of the $H$. 
  %in the $K_4$ in the $H$. 
  Assume that $ux$ is an edge between a degree-5 vertex and a degree-3 vertex in the $H$.
  Every triangle containing $ux$ (disregarding vertices in $W$ except $w$) contains $w$ or a vertex in $C$.
  It implies that, an edge between $u$ and $C$ is an edge between two degree-5 vertices in the $H$, which is a case we
  already excluded.
  %There are exactly two triangles in which $ux$ is an edge, disregarding all vertices in $W$ except $w$ - a triangle containing $w$ 
  %and a triangle containing another vertex $x'\in X_{e'}$, where $e'$ is another forbidden edge incident to $u$.
  %Then $ux'$ must be an edge between the degree-5 vertices of the $H$, which is a contradiction.
  Therefore, none of the edges between $G'$ and $C$ can be an edge in the $K_4$ of the $H$.
  Further, a vertex in $Q$ cannot be a degree-5 vertex in the $H$ as every vertex in $Q$, disregarding the vertices in $W$ other than $w$, has adjacent to only three vertices.
  
  Case 1: The vertex $w$ is a degree-5 vertex in the induced $H$. 
  Let $a,b$ be the other two degree-5 vertices in the induced $H$.
  Since an edge between $G'$ and $C$ cannot be an edge between two degree-5 vertices, we obtain that either $a,b\in V(G')$ or $a,b\in C$ (recall that
  a vertex in $Q$ cannot be a degree-5 vertex).
  %Then the other two vertices in the triangle formed by the degree-5 vertices in the $H$ must be either $u,v$ or $x,y$ for some forbidden edge $e=uv$, $x\in X_e$, and $y\in Y_e$.
  Let $a,b\in V(G')$. Since every common neighbor, other than those in $W$, of $a$ and $b$ is adjacent to $w$, $U$ does not induce an $H$. 
  Therefore, $a,b \in C$.
  Since $a$ and $b$ has a common neighbor nonadjacent to $w$ in $H$, we obtain that $a=x_i\in X_e$, $b=y_i\in Y_e$ for some forbidden edge $e=uv$.
  %Let the degree-5 vertices in the $H$ are $w,u,$ and $v$. Since every common neighbor, other than those in $W$, of $u$ and $v$ is adjacent to $w$, $U$ does not induce an $H$. Therefore, assume that the degree-5 vertices in $H$ are $w,x,$ and $y$. 
  Since the unique common neighbor ($u$) of $w$ and $x_i$ which is not in $C$ and not adjacent to $y_i$, and the unique common neighbor ($v$) of $w$ and $y_i$, 
  which is not in $C$ and not adjacent to $x_i$, are adjacent in $G-F'$, $U$ cannot induce an $H$, which is a contradiction.
  
  Case 2: The vertex $w$ is a degree-2 vertex in the induced $H$. 
  Let the other two vertices in the triangle containing $w$ in $H$ be $a,b$. 
  Clearly, $ab$ is a part of the triangle formed by the degree-5 vertices in the induced $H$.
  Since an edge between $C$ and $G'$ cannot be an edge in that triangle, either $a,b\in V(G')$ or $a,b\in C$.
  Assume that $a,b\in V(G')$. 
  Then every common neighbor, not in $W$, of $a$ and $b$ is adjacent to $w$. Hence $U$ cannot induce $H$ in $G-F'$.
  Therefore, $a,b\in C$. But, then $a$ and $b$ do not have a common neighbor with degree at least 5 and non-adjacent to $w$.
  Therefore, $U$ cannot induce $H$ in $G-F'$.
  
  %The two degree-5 neighbors of $w$ in the $H$ must be either $u,v$ or $x,y$ for some forbidden edge $e=uv$, $x\in X_e$, and $y\in Y_e$. Since every common neighbor, other than those in $W$, of $u$ and $v$ is adjacent to $w$ in $G-F'$, both $u$ and $v$ cannot be in $U$. Therefore, let $x$ and $y$ be the degree-5 neighbors of $w$ in $H$. Since $x$ and $y$ have no other common neighbors with degree at least 5 and nonadjacent to $w$, (other than the vertices in $W$), $U$ cannot induce an $H$, a contradiction. 
  
  Case 3: The vertex $w$ is a degree-3 vertex in the induced $H$. 
  In this case, a vertex from $V(G')$ cannot be a degree-5 vertex in the induced $H$ as none of the vertices in $V(G')$
  has a neighbor, other than vertices in $W$ and nonadjacent to $w$.
  We have already obtained that a vertex from $Q$ cannot be a degree-5 vertex in the induced $H$.
  Therefore, all the three degree-5 vertices in the induced $H$ must be from $C$.
  %Then the three degree-5 neighbors of $w$ in the $H$ must be from $C$ 
  %(recall that a vertex in $Q$ cannot be a degree-5 vertex). 
  Then all the three degree-2 vertices in the $H$ must be from $Z$ as they must be nonadjacent with $w$ (and not belong to $W$). 
  This cannot happen as every vertex in $C$ has only one neighbor in $Z$.  
\end{proof}
%\end{comment}
Let us observe that $\overline{A_1}$ can be obtained from $\overline{A_6}$ by removing a degree-2 vertex. We can reduce \pname{Restricted $\overline{A_1}$-free Edge Deletion} to  \pname{$\overline{A_6}$-free Edge Deletion}, but we need the additional assumption of Corollary~\ref{cor:r-a7c} to make this reduction work.
\begin{lemma}
  \label{lem:a6c}
  Let $H$ be $\overline{A_6}$. Then \HED\ and \HEE\ are incompressible, assuming \NOPH.
\end{lemma}
%\begin{comment}
\begin{proof}
  We give a PPT from \RHDED\ to \HED, where $H'$ is $\overline{A_1}$. Then it follows that \HED\ is incompressible (Lemma~\ref{lem:r-a-del}) and \HEE\ is incompressible by Proposition~\ref{pro:enforcers}, and by the existence of $H$-free completion enforcer (Lemma~\ref{lem:ce}), assuming \NOPH. 
  
  By Corollary~\ref{cor:r-a7c}, \RHDED\ is incompressible (assuming \NOPH) even if at least one side-edge of a diamond 
  (an edge incident to the degree-2 vertex of a diamond) in every subgraph isomorphic to $H'$ in the input graph is forbidden. 
  Let $(G',k,R)$ be such an instance of \RHDED. 
  We construct a graph $G$ as follows. For every forbidden edge $e=uv\in R$, introduce three sets $X_e, Y_e, Z_e$ of $k+1$ 
  independent vertices each such that $u$ is adjacent to all vertices in $X_e\cup Y_e\cup Z_e$ and $v$ is adjacent to all vertices in $Y\cup Z$. 
  Further, for $1\leq i\leq k+1$, $x_i$ is adjacent to $y_i$ and $y_i$ is adjacent to $z_i$, where $x_i\in X_e, y_i\in Y_e, z_i\in Z_e$ 
  (assuming a labelling of the vertices in $X_e$, $Y_e$, and $Z_e$). 
  This completes the construction (see Figure~\ref{fig:a7cred}). Let the constructed graph be $G$.
  
  We claim that $(G',k,R)$ is a yes-instance of \RHDED\ if and only if $(G,k)$ is a yes-instance of \HED. 
  Let $(G,k)$ be a yes-instance of \HED. 
  Let $F$ be a solution of it. Assume that $F$ contains some forbidden edge $e=uv$ in $G'$. 
  Since $|F|\leq k$, there exists integers $i\neq j$ such that $u$ and $v$ along with $\{x_i,y_i,z_i\}$ and $z_j$ induce an $H$ in $G-F$, which is a contradiction. 
  Therefore, $F$ does not contain any forbidden edge in $G'$. For a contradiction, let $G'-F$ contains an $H$ induced by $U$. 
  Then by the assumption on $G'$, at least one side-edge of a diamond  in the $H$ is a forbidden edge, say $e=uv$. Since $|F|\leq k$, 
  there exists at least one integer $i$ such that $U$ along with $z_i$ induces an $H$ in $G-F$, which is a contradiction. 
  For the other direction, let $(G',k,R)$ be a yes-instance and let $F'$ be a solution of it. 
  For a contradiction, assume that $G-F'$ contains an $H$ induced by a set $U$ of vertices. 
  It is straight-forward to verify that none of the vertices in $X_e\cup Y_e\cup Z_e$ can be part of an induced $C_4$ in $G-F'$. 
  Therefore, all the vertices in both the induced $C_4$ in the $H$ must be in $G'$. Then $G'-F'$ has an induced $H'$, which is a contradiction.    
\end{proof}

\begin{figure}
  \centering
  \begin{subfigure}[b]{.5\textwidth}
    \centering
    \input{figs/gadgets/a7cred}
    \caption{The gadget used to handle $\overline{A_6}$ in Lemma~\ref{lem:a6c}}
    \label{fig:a7cred}
  \end{subfigure}%
  \begin{subfigure}[b]{.5\textwidth}
    \centering
    \input{figs/gadgets/a9cred}
    \caption{The gadget used to handle $\overline{A_8}$ in Lemma~\ref{lem:a8c}}
    \label{fig:a9cred}
  \end{subfigure}%
  \caption{}
  \label{fig:a7ca9cred}
\end{figure}
%\end{comment}
Graph $\overline{A_8}$ is handled in a similar way, by noting that $C_4$ can be obtained by removing two degree-2 vertices. For the reduction, we need to observe that an additional assumption can be made in the incompressibility proof for \pname{Restricted $C_4$-free Edge Deletion} given in \cite{CaiC15incompressibility}.
\begin{observation}%[See Theorem 5.2 in \cite{CaiC15incompressibility}]
  \label{obs:cai-c4-d}
  Let $H$ be $C_4$. Then \RHED\ is incompressible (assuming \NOPH) even if the input graph does not contain any subgraph (not necessarily induced) $C_4$ such that all its edges are allowed. 
\end{observation}

\begin{lemma}
  \label{lem:a8c}
  Let $H$ be $\overline{A_8}$.  Then \HED\ and \HEE\ are incompressible, assuming \NOPH.
\end{lemma}
%\begin{comment}
\begin{proof}
  We give a PPT from \RHDED\ to \HED\ where $H'$ is $C_4$. %Then we give a PPT from \RHED\ to \HED.
  Then the completion enforcer given by Lemma~\ref{lem:ce} implies the incompressibility for \HEE\ (by Proposition~\ref{pro:enforcers}). 
  
  By Observation~\ref{obs:cai-c4-d}, \RHDED\ is incompressible even if the input graph does not contain a $C_4$ 
  (not necessarily induced) having only allowed edges. 
  Let $(G',k, R)$ be an instance of \RHDED\ such that every subgraph $C_4$ in $G'$ has a forbidden edge. 
  For every forbidden edge $e=uv$ in $G'$, introduce three sets $X_e,Y_e,Z_e$ of $k+2$ independent 
  vertices each such that $u$ is adjacent to every vertex in $X_e$, and $v$ is adjacent to every vertex 
  in $Y_e\cup Z_e$. Further, for $1\leq i\leq k+2$, $x_iy_i$ and $x_iz_i$ are edges in the graph, for $x_i\in X_e$, $y_i\in Y_e$, and $z_i\in Z_e$ (assuming a labelling of vertices in $X_e$, $Y_e$, and $Z_e$). This completes the construction (see Figure~\ref{fig:a9cred}). Let the resultant graph be $G$. 
  
  We claim that $(G',k,R)$ is a yes-instance of \RHDED\ if and only if $(G,k,R)$ is a yes-instance of \HED. Let $(G,k)$ be a yes-instance of \HED. Let $F$ be a solution of it.
  If $F$ contains a forbidden edge $e=uv$ in $G'$, then there exists integers $i\neq j$ such that 
  $\{u,v,x_i,x_j,y_i,z_i\}$ ($x_i,x_j\in X_e, y_i\in Y_e, z_i\in Z_e$) induces an $H$ in $G-F$, which is a contradiction.  Therefore, $F$ can contain none of the forbidden edges in $G'$. Now let $G'-F$ contains a $C_4$ induced by a set $U$. Then, at least one of the edge in the $C_4$ must be a forbidden edge, say $e=uv$. Since $|F|\leq k$, there exists at least two vertices $x_i, x_j$ ($i\neq j$) in $X_e$ such that the $U$ along with $x_i$ and $x_j$ induces an $H$ in $G-F$, which is a contradiction. For the other direction, let $(G',k,R)$ be a yes-instance of \RHDED. Let $F'$ be a solution of it. We claim that $G-F'$ is $H$-free. For a contradiction, let there be an induced $H$ in $G-F'$. It is straight-forward to verify that all the vertices of the induced $C_4$ in the $H$ must be in $G'$. Therefore, $G'-F'$ has an induced $C_4$, which is a contradiction. 
\end{proof}
%\end{comment}

As $C_4$ can be obtained from $\overline{A_1}$ by removing a degree-3 vertex, we can reduce \pname{Restricted $C_4$-free Edge Completion} to \pname{$\overline{A_1}$-free Edge Completion} using the following observation on the proof of incompressibility of \textsc{$C_4$-free Edge Completion} in \cite{CaiC15incompressibility}.

\begin{observation}
  \label{obs:a1c-com}
  Let $H$ be $C_4$. Then \HEC\ is incompressible (assuming \NOPH) for inputs $(G,R,k)$ even if the following conditions are satisfied:
  \begin{enumerate}[(i)]
      \item For every forbidden nonedge $xy$, $x$ and $y$ have only at most two common neighbors in the graph obtained by adding all allowed nonedges to $G$;
      \item Let $S$ be a subset of the set of all allowed edges in $G$. If $G+S$ has an induced $C_4$, then the following conditions are satisfied:
      \begin{itemize}
          \item Let $e,e'$ be the two edges of an induced $P_3$ in the $C_4$. Then at least one of them is in $G$.
          \item If only at most one edge of the $C_4$ is an allowed nonedge in $G$,
          then one of the two nonedges in the $C_4$ is forbidden in $G$.
          \item If $G+S$ has an induced $C_4$ where two nonadjacent edges in the $C_4$ are allowed nonedges in $G$, 
          then there is an induced $C_4$ in $G+S$ where only at most one edge in the $C_4$ is an allowed nonedge in $G$. 
      \end{itemize} 
      %$|S|\leq k$ and $G+S$ does not have an induced $C_4$ where only at most one edge of the $C_4$ belongs to $S$. Then $(G,R,k)$ is a yes-instance of \HEC.
  \end{enumerate}
%  Let $G$ be a nonedge restricted graph, $R$ be the set all allowed nonedges in $G$, and $k$ be an integer.
  %Then, if $(G,R,k)$ is a yes-instance of \RHEC\ if and only if 
%  Let $S\subseteq R$ be such that $|S|\leq k$ and $G+S$ does not have an induced $C_4$ where only at most one edge of the $C_4$ belongs to $S$. Then $(G,R,k)$ is a yes-instance of \textsc{$C_4$-free Edge Completion}.
\end{observation}

\begin{lemma}
  \label{lem:a1c-com}
  Let $H$ be $\overline{A_1}$. Then \HEC\ is incompressible, assuming \NOPH.
\end{lemma}
%\begin{comment}
\begin{proof}
  We will prove that \RHEC\ is incompressible (assuming \NOPH), then the statement follows from the existence of completion enforcer (see Lemma~\ref{lem:ce} and Proposition~\ref{pro:enforcers}). We give a PPT from \RHDEC\ where $H'$ is a $C_4$.
  
  Let $(G',R',k)$ be an instance of \RHDEC\ where $(G',R')$ satisfies the properties given in Observation~\ref{obs:a1c-com}. 
  Initialize $G$ to be $G'$ and $R$ to be $R'$. 
  For every $xy\in R'$ such that $x$ and $y$ are the end vertices of an induced $P_3$ in $G'$, 
  introduce a vertex $v$ in $G$ adjacent to $x,y,z$, where $z$ is a middle vertex of an induced $P_3$ in $G'$, 
  where $x$ and $y$ are the end vertices of the $P_3$. 
  We note that only one vertex $v$ is introduced for a forbidden nonedge $xy$ (where $x$ and $y$ are the end vertices of an induced $P_3$), 
  and $v$ is made adjacent only to $x,y$, and only one
  common neighbor $z$ of $x$ and $y$ (even if $x$ and $y$ have another common neighbor).
  Add all nonedges incident to $v$ to $R$. Let $I$ be the set of all newly introduced vertices. 
  Add all nonedges among vertices in $I$ to $R$. 
  We claim that $(G',R',k)$ is a yes-instance of \RHDEC\ if and only if $(G,R,k)$ is a yes-instance of \RHEC.  
  
  Let $(G',R',k)$ be a yes-instance of \RHDEC. Let $F'$ be a solution of it. 
  For a contradiction, assume that $G+F'$ has an $H$ induced by $U$. Clearly, $U$ contains at least one vertex, say $v$, in $I$. 
  Let $x,y,z$ be the neighbors of $v$, where $z$ is the middle vertex of the $P_3$ induced by $\{x,y,z\}$ in $G'$. 
  Since the neighborhood of $v$ forms an induced $P_3$ in $G'$, $v$ cannot be a degree-3 vertex adjacent to the degree-2 vertex in the $H$. 
  Assume that $v$ is a degree-2 vertex in the $H$. 
  Then the nonedge in the diamond in the $H$ must be $xy$. 
  Since there are only at most two common neigbors of $x,y$ in $G'+F'$ (see condition (i) in Observation~\ref{obs:a1c-com}), one of the degree-3 vertex nonadjacent to $v$ in the $H$ must be from $I$ (recall that one of the common neighbors of $x$ and $y$ is adjacent to $v$). 
  This is a contradiction, as there is only one vertex in $I$ adjacent to both $x$ and $y$. 
  Now, assume that $v$ is a degree-3 vertex nonadjacent to a degree-2 vertex in $H$. 
  Since there is no other vertex in $I$ (other than $v$) adjacent to both $x$ and $y$, the remaining vertices in the $H$ must be from the copy of $G'$ in $G$. 
  Therefore, $G'+F'$ has an induced $C_4$, a contradiction.
  
  For the other direction, let $(G,R,k)$ be a yes-instance and let $F$ be a solution. 
  For a contradiction, assume that $G'+F$ has a $C_4$ induced by $U$. 
  If the $C_4$ contains only at most one allowed nonedge in $G'$,
  then by condition (ii) of Observation~\ref{obs:a1c-com}, one of the nonedge $xy$ in the $C_4$ is forbidden in $G'$.
  By condition (i), $x$ and $y$ do not have any other common neighbors other than the other two vertices in the $C_4$.
  Then there is an induced $P_3$ formed by three vertices of the $C_4$ such that a vetex in $I$ is adjacent to all vertices in the $P_3$. 
  Hence there is an induced $H$ in $G+F$, a contradiction. 
  If two edges in the $C_4$ are allowed nonedges in $G'$, then they must be nonadjacent edges of the $C_4$ (condition (ii)).
  Then by condition (ii), 
  there is an induced $C_4$ in $G'+F$ where only at most one edge of the $C_4$ is allowed. Then the above arguments give a contradiction.
\end{proof}
$\overline{A_6}$ can be handled in a similar way.
%\end{comment}
\begin{lemma}
  \label{lem:a6c-com}
  Let $H$ be $\overline{A_6}$. Then \HEC\ is incompressible, assuming \NOPH.
\end{lemma}
%\begin{comment}
\begin{proof}
  We will prove that \RHEC\ is incompressible (assuming \NOPH), then the statement follows from the existence of completion enforcer (see Lemma~\ref{lem:ce} and Proposition~\ref{pro:enforcers}). We give a PPT from \RHDEC\ where $H'$ is a $C_4$.
  
  Let $(G',R',k)$ be an instance of \RHDEC\ where $(G',R')$ satisfies the properties given in Observation~\ref{obs:a1c-com}. 
  Initialize $G$ to be $G'$ and $R$ to be $R'$. For every $xy\in R'$ such that $x$ and $y$ are the end vertices of an induced $P_3$ in $G'$, 
  introduce two adjacent vertices $u,v$ in $G$ such that $v$ is adjacent to $x,y,z$, where $z$ is the middle vertex of an induced $P_3$ in $G'$, where $x$ and $y$ are the end vertices of the $P_3$. 
  We note that only one pair of vertices $u,v$ is introduced for a forbidden nonedge $xy$ (where $x$ and $y$ are the end vertices of an induced $P_3$), 
  and $v$ is made adjacent only to $x,y$, and only one
  common neighbor $z$ of $x$ and $y$ (even if $x$ and $y$ have another common neighbor).
  Further, $u$ is adjacent to $y$. Add all nonedges incident to $u$ and $v$ to $R$. Let $I$ be the set of all newly introduced vertices. 
  Add all nonedges among vertices in $I$ to $R$. We claim that $(G',R',k)$ is a yes-instance of \RHDEC\ if and only if $(G,R,k)$ is a yes-instance of \RHEC.  
  
  Let $(G',R',k)$ be a yes-instance of \RHDEC. Let $F'$ be a solution of it. 
  For a contradiction, assume that $G+F'$ has an $H$ induced by $U$. 
  Every vertex in $H$, other than the degree-2 vertex whose neighborhood induces a $K_2$, is part of an induced $C_4$ in $H$.
  Therefore, at least one such vertex in the induced $H$ must be from $I$.
  Clearly, a degree-2 vertex in $I$ can act as only a degree-2 vertex in the $H$ whose neighborhood induces a $K_2$.
  Therefore, a degree-4 vertex $v$ in $I$ must act as a vertex part of an induced $C_4$ in the $H$. 
  %Let $U$ contains a vertex, say $v$ having degree 4, from $I$. 
  Let $x,y,z$ be the neighbors of $v$ in $G'$, where $z$ is the middle vertex of the $P_3$ induced by $\{x,y,z\}$ in $G'$. 
  Since the neighborhood of $v$ in $G'$ forms an induced $P_3$ in $G'$, 
  $v$ cannot be a degree-3 or degree-4 vertex adjacent to the degree-2 vertex (whose neighborhood induces a $2K_1$) 
  in the $H$. 
  Assume that $v$ is a degree-2 vertex, whose neighborhood induces a $2K_1$, in the $H$. 
  Then the nonedge in the diamond in the $H$ formed by deleting the degree-2 vertices must be $xy$. 
  Since there are only at most two common neighbors of $x,y$ in $G'+F'$ (see condition (i) in Observation~\ref{obs:a1c-com}), 
  one of the degree-3 or degree-4 vertex nonadjacent to $v$ in the $H$ must be from $I$ 
  (recall that one of the common neighbors of $x$ and $y$ is adjacent to $v$). 
  This is a contradiction, as there is only one vertex in $I$ adjacent to both $x$ and $y$.
  Now, assume that $v$ is a degree-3 or degree-4 vertex nonadjacent to a degree-2 vertex (whose neighborhood induces a $2K_1$) in $H$. 
  Since there is no other vertex in $I$ (other than $v$) adjacent to both $x$ and $y$, $G'+F'$ has an induced $C_4$, a contradiction.
  
  For the other direction, let $(G,R,k)$ be a yes-instance and let $F$ be a solution. 
  For a contradiction, assume that $G'+F$ has a $C_4$ induced by $U$. 
  If the $C_4$ contains only at most one allowed nonedge in $G'$,
  then by condition (ii) of Observation~\ref{obs:a1c-com}, one of the nonedge $xy$ in the $C_4$ is forbidden in $G'$.
  By condition (i), $x$ and $y$ do not have any other common neighbors other than the other two vertices in the $C_4$.
  Then there is an induced $P_3$ formed by three vertices of the $C_4$ such that a vetex $v$ in $I$ is adjacent to all vertices in the $P_3$ and 
  a vertex $u\in I$ is adjacent to $v$ and $y$ (one of the end-vertices of the forbidden edge $xy$). 
  Hence there is an induced $H$ in $G+F$, a contradiction. 
  If two edges in the $C_4$ are allowed nonedges in $G'$, then they must be nonadjacent edges of the $C_4$ (condition (ii)).
  Then by condition (ii), 
  there is an induced $C_4$ in $G'+F$ where only at most one edge of the $C_4$ is allowed. Then the above arguments give a contradiction.
\end{proof}
%\end{comment}
Now, Theorem~\ref{thm:ab}(\ref{thm:ab:editing}) follows from Lemma~\ref{lem:a-part1-del}, 
\ref{lem:a6c}, \ref{lem:a7c}, \ref{lem:a8c}, \ref{lem:a9c}, 
%\ref{lem:tricky},
and Proposition~\ref{pro:folklore}. 
Theorem~\ref{thm:ab}(\ref{thm:ab:deletion}) follows from Lemma~\ref{lem:a-part1-del}, 
\ref{lem:a6c}, \ref{lem:a7c}, \ref{lem:a8c}, \ref{lem:a9c}, 
%\ref{lem:tricky},
\ref{lem:a-part1-com}, \ref{lem:a1c-com}, \ref{lem:a6c-com}, and Proposition~\ref{pro:folklore}. 
Theorem~\ref{thm:ab}(\ref{thm:ab:completion}) follows from Theorem~\ref{thm:ab}(\ref{thm:ab:deletion}) and Proposition~\ref{pro:folklore}. 
Theorem~\ref{thm:main-editing} follows from Lemma~\ref{lem:editing}, \ref{lem:smaller-gang}, Theorem~\ref{thm:ab}(\ref{thm:ab:editing}), and Proposition~\ref{pro:birdeye}. Similarly,  Theorem~\ref{thm:main-deletion} follows from Lemma~\ref{lem:deletion}, \ref{lem:smaller-gang}, Theorem~\ref{thm:ab}(\ref{thm:ab:deletion}), and Proposition~\ref{pro:birdeye}.%, and Corollary~\ref{cor:com-ab}.
%Lemma~\ref{lem:a-part1-del}, \ref{lem:a6c}, \ref{lem:a7c}, \ref{lem:a8c}, and \ref{lem:a9c} imply the main result of the section, which is given below.

%\begin{theorem}
%  \label{thm:a}
%  Let $H\in \mathcal{A}=\{A_1,A_2,A_3,A_4,A_5,A_6,A_7,A_8,A_{9}\}$. Then \HED\ and \HEE\ are incompressible, assuming \NOPH.
%\end{theorem}

%Now, the main result of the paper (Theorem~\ref{thm:main}) follows from Lemma~\ref{lem:to}, Theorem~\ref{thm:a}, and Proposition~\ref{pro:trivial-kernel}. 

\section{Concluding Remarks}
We obtained a set $\mathcal{H}^E$ of \nog\ 5-vertex graphs such that proving the incompressibility of \HEE\ for every $H\in \mathcal{H}^E$ will lead to a complete dichotomy of the incompressibility for \HEE\ for graphs $H$ with at least five vertices. We obtained similar sets $\mathcal{H}^D$ and $\mathcal{H}^C$ ($=\overline{\mathcal{H}^D}$) of \nogd\ graphs each for \HED\ and \HEC\ respectively. 
Thus we have the following future problems.

\begin{itemize}
    \item Prove incompressibility or obtain polynomial kernel for \HEE\ for every graph $H\in \mathcal{H}^E$. 
    \item Prove incompressibility or obtain polynomial kernel for \HED\ for every graph $H\in \mathcal{H}^D$. 
\end{itemize}

As remarked in the introduction, these sets $\mathcal{H}^E$ and  $\mathcal{H}^D$ give the frontier where the possibility of existence of polynomial kernels is the highest. For some graph $H$ in these sets, if the problem admits polynomial kernel, then one needs to include $H$ in $\mathcal{Y}$ and has to analyze the few extra cases arising out of it to obtain a possibly larger \baseset\ of graphs.

There is a curious case still unresolved when $H$ has at most four vertices---the claw. It is known that \pname{Claw-free Edge Deletion} admits a polynomial kernel when the input graphs does not contain a clique of size $t$, for any fixed positive integer $t$ \cite{DBLP:journals/algorithmica/AravindSS17}. It is also known that \pname{$\{\text{claw}, \text{diamond}\}$-free Edge Deletion} admits a polynomial kernel \cite{CyganPPLW17}.

\begin{itemize}
    \item Does claw-free edge modification problems admit polynomial kernels? 
\end{itemize}

All these efforts can be seen as steps toward two larger goals: for finite sets $\mathcal{H}$ of graphs
\begin{itemize}
    \item Obtain a dichotomy on polynomial-time solvable and NP-hard cases for $\mathcal{H}$-free edge modification problems.
    \item Obtain a dichotomy on the incompressibility of $\mathcal{H}$-free edge modification problems.
\end{itemize}

As a next step towards these larger goals one may look at the case when $\mathcal{H}$ contains exactly two graphs. We hope that the reductions we introduced in this paper can be of help to obtain various hardness results in this and related settings.
%\FloatBarrier
%\input{tables/gang-table}
%\input{tables/shortlist}
%\input{tables/summary}
%\input{tables/gadgets}

%\bibliographystyle{plain}
%\FloatBarrier
%\bibliography{towards}

%\section*{References}

%\bibliography{towards}

\begin{thebibliography}{10}
\expandafter\ifx\csname url\endcsname\relax
  \def\url#1{\texttt{#1}}\fi
\expandafter\ifx\csname urlprefix\endcsname\relax\def\urlprefix{URL }\fi
\expandafter\ifx\csname href\endcsname\relax
  \def\href#1#2{#2} \def\path#1{#1}\fi

\bibitem{DBLP:journals/jcss/LewisY80}
J.~M. Lewis, M.~Yannakakis,
  \href{https://doi.org/10.1016/0022-0000(80)90060-4}{The node-deletion problem
  for hereditary properties is {NP}-{C}omplete}, J. Comput. Syst. Sci. 20~(2)
  (1980) 219--230.
\newblock \href {http://dx.doi.org/10.1016/0022-0000(80)90060-4}
  {\path{doi:10.1016/0022-0000(80)90060-4}}.
\newline\urlprefix\url{https://doi.org/10.1016/0022-0000(80)90060-4}

\bibitem{DBLP:journals/siamcomp/Yannakakis81}
M.~Yannakakis, \href{https://doi.org/10.1137/0210021}{Edge-deletion problems},
  {SIAM} J. Comput. 10~(2) (1981) 297--309.
\newblock \href {http://dx.doi.org/10.1137/0210021}
  {\path{doi:10.1137/0210021}}.
\newline\urlprefix\url{https://doi.org/10.1137/0210021}

\bibitem{DBLP:journals/siamcomp/Yannakakis81a}
M.~Yannakakis, \href{https://doi.org/10.1137/0210022}{Node-deletion problems on
  bipartite graphs}, {SIAM} J. Comput. 10~(2) (1981) 310--327.
\newblock \href {http://dx.doi.org/10.1137/0210022}
  {\path{doi:10.1137/0210022}}.
\newline\urlprefix\url{https://doi.org/10.1137/0210022}

\bibitem{DBLP:journals/ipl/Cai96}
L.~Cai, \href{https://doi.org/10.1016/0020-0190(96)00050-6}{Fixed-parameter
  tractability of graph modification problems for hereditary properties}, Inf.
  Process. Lett. 58~(4) (1996) 171--176.
\newblock \href {http://dx.doi.org/10.1016/0020-0190(96)00050-6}
  {\path{doi:10.1016/0020-0190(96)00050-6}}.
\newline\urlprefix\url{https://doi.org/10.1016/0020-0190(96)00050-6}

\bibitem{DBLP:journals/algorithmica/CaoM16}
Y.~Cao, D.~Marx, \href{https://doi.org/10.1007/s00453-015-0014-x}{Chordal
  editing is fixed-parameter tractable}, Algorithmica 75~(1) (2016) 118--137.
\newblock \href {http://dx.doi.org/10.1007/s00453-015-0014-x}
  {\path{doi:10.1007/s00453-015-0014-x}}.
\newline\urlprefix\url{https://doi.org/10.1007/s00453-015-0014-x}

\bibitem{DBLP:journals/jcss/GuoGHNW06}
J.~Guo, J.~Gramm, F.~H{\"{u}}ffner, R.~Niedermeier, S.~Wernicke,
  \href{https://doi.org/10.1016/j.jcss.2006.02.001}{Compression-based
  fixed-parameter algorithms for feedback vertex set and edge bipartization},
  J. Comput. Syst. Sci. 72~(8) (2006) 1386--1396.
\newblock \href {http://dx.doi.org/10.1016/j.jcss.2006.02.001}
  {\path{doi:10.1016/j.jcss.2006.02.001}}.
\newline\urlprefix\url{https://doi.org/10.1016/j.jcss.2006.02.001}

\bibitem{DBLP:conf/stoc/KawarabayashiR07}
K.~Kawarabayashi, B.~A. Reed,
  \href{https://doi.org/10.1145/1250790.1250848}{Computing crossing number in
  linear time}, in: Proc. {STOC} 2007, {ACM}, 2007, pp. 382--390.
\newblock \href {http://dx.doi.org/10.1145/1250790.1250848}
  {\path{doi:10.1145/1250790.1250848}}.
\newline\urlprefix\url{https://doi.org/10.1145/1250790.1250848}

\bibitem{AravindSS17}
N.~R. Aravind, R.~B. Sandeep, N.~Sivadasan,
  \href{https://doi.org/10.1137/16M1055797}{Dichotomy results on the hardness
  of {H}-free edge modification problems}, {SIAM} J. Discrete Math. 31~(1)
  (2017) 542--561.
\newblock \href {http://dx.doi.org/10.1137/16M1055797}
  {\path{doi:10.1137/16M1055797}}.
\newline\urlprefix\url{https://doi.org/10.1137/16M1055797}

\bibitem{fomin_lokshtanov_saurabh_zehavi_2019}
F.~V. Fomin, D.~Lokshtanov, S.~Saurabh, M.~Zehavi, Kernelization: Theory of
  Parameterized Preprocessing, Cambridge University Press, 2019.
\newblock \href {http://dx.doi.org/10.1017/9781107415157}
  {\path{doi:10.1017/9781107415157}}.

\bibitem{cai2012polynomial}
Y.~Cai, Polynomial kernelisation of {H}-free edge modification problems, Mphil
  thesis, Department of Computer Science and Engineering, The Chinese
  University of Hong Kong, Hong Kong SAR, China (2012).

\bibitem{CRSY18polynomial}
Y.~Cao, A.~Rai, R.~B. Sandeep, J.~Ye,
  \href{https://doi.org/10.4230/LIPIcs.ESA.2018.10}{A polynomial kernel for
  diamond-free editing}, in: Proc. {ESA} 2018, 2018, pp. 10:1--10:13.
\newblock \href {http://dx.doi.org/10.4230/LIPIcs.ESA.2018.10}
  {\path{doi:10.4230/LIPIcs.ESA.2018.10}}.
\newline\urlprefix\url{https://doi.org/10.4230/LIPIcs.ESA.2018.10}

\bibitem{CaoC12}
Y.~Cao, J.~Chen, \href{https://doi.org/10.1007/s00453-011-9595-1}{Cluster
  editing: Kernelization based on edge cuts}, Algorithmica 64~(1) (2012)
  152--169.
\newblock \href {http://dx.doi.org/10.1007/s00453-011-9595-1}
  {\path{doi:10.1007/s00453-011-9595-1}}.
\newline\urlprefix\url{https://doi.org/10.1007/s00453-011-9595-1}

\bibitem{GrammGHN03}
J.~Gramm, J.~Guo, F.~H{\"{u}}ffner, R.~Niedermeier, Graph-modeled data
  clustering: Fixed-parameter algorithms for clique generation, in: Proc {CIAC}
  2003, 2003, pp. 108--119.
\newblock \href {http://dx.doi.org/10.1007/3-540-44849-7\_17}
  {\path{doi:10.1007/3-540-44849-7\_17}}.

\bibitem{GuillemotHPP13}
S.~Guillemot, F.~Havet, C.~Paul, A.~Perez,
  \href{https://doi.org/10.1007/s00453-012-9619-5}{On the (non-)existence of
  polynomial kernels for {$P_\ell$}-free edge modification problems},
  Algorithmica 65~(4) (2013) 900--926.
\newblock \href {http://dx.doi.org/10.1007/s00453-012-9619-5}
  {\path{doi:10.1007/s00453-012-9619-5}}.
\newline\urlprefix\url{https://doi.org/10.1007/s00453-012-9619-5}

\bibitem{CaoKY20}
Y.~Cao, Y.~Ke, H.~Yuan,
  \href{https://doi.org/10.1007/978-3-030-59267-7\_4}{Polynomial kernels for
  paw-free edge modification problems}, in: Theory and Applications of Models
  of Computation, 16th International Conference, {TAMC} 2020, Vol. 12337 of
  Lecture Notes in Computer Science, Springer, 2020, pp. 37--49.
\newblock \href {http://dx.doi.org/10.1007/978-3-030-59267-7\_4}
  {\path{doi:10.1007/978-3-030-59267-7\_4}}.
\newline\urlprefix\url{https://doi.org/10.1007/978-3-030-59267-7\_4}

\bibitem{EibenLS15}
E.~Eiben, W.~Lochet, S.~Saurabh,
  \href{https://doi.org/10.4230/LIPIcs.IPEC.2020.10}{A polynomial kernel for
  paw-free editing}, in: 15th International Symposium on Parameterized and
  Exact Computation, {IPEC} 2020, Vol. 180 of LIPIcs, Schloss Dagstuhl -
  Leibniz-Zentrum f{\"{u}}r Informatik, 2020, pp. 10:1--10:15.
\newblock \href {http://dx.doi.org/10.4230/LIPIcs.IPEC.2020.10}
  {\path{doi:10.4230/LIPIcs.IPEC.2020.10}}.
\newline\urlprefix\url{https://doi.org/10.4230/LIPIcs.IPEC.2020.10}

\bibitem{DBLP:journals/jct/ChudnovskyS07a}
M.~Chudnovsky, P.~D. Seymour,
  \href{https://doi.org/10.1016/j.jctb.2007.02.002}{Claw-free graphs. i.
  orientable prismatic graphs}, J. Comb. Theory, Ser. {B} 97~(6) (2007)
  867--903.
\newblock \href {http://dx.doi.org/10.1016/j.jctb.2007.02.002}
  {\path{doi:10.1016/j.jctb.2007.02.002}}.
\newline\urlprefix\url{https://doi.org/10.1016/j.jctb.2007.02.002}

\bibitem{DBLP:journals/jct/ChudnovskyS08a}
M.~Chudnovsky, P.~D. Seymour,
  \href{https://doi.org/10.1016/j.jctb.2007.06.006}{Claw-free graphs. {II.}
  non-orientable prismatic graphs}, J. Comb. Theory, Ser. {B} 98~(2) (2008)
  249--290.
\newblock \href {http://dx.doi.org/10.1016/j.jctb.2007.06.006}
  {\path{doi:10.1016/j.jctb.2007.06.006}}.
\newline\urlprefix\url{https://doi.org/10.1016/j.jctb.2007.06.006}

\bibitem{DBLP:journals/jct/ChudnovskyS08b}
M.~Chudnovsky, P.~D. Seymour,
  \href{https://doi.org/10.1016/j.jctb.2008.03.001}{Claw-free graphs. {III.}
  circular interval graphs}, J. Comb. Theory, Ser. {B} 98~(4) (2008) 812--834.
\newblock \href {http://dx.doi.org/10.1016/j.jctb.2008.03.001}
  {\path{doi:10.1016/j.jctb.2008.03.001}}.
\newline\urlprefix\url{https://doi.org/10.1016/j.jctb.2008.03.001}

\bibitem{DBLP:journals/jct/ChudnovskyS08c}
M.~Chudnovsky, P.~D. Seymour,
  \href{https://doi.org/10.1016/j.jctb.2007.06.007}{Claw-free graphs. {IV.}
  decomposition theorem}, J. Comb. Theory, Ser. {B} 98~(5) (2008) 839--938.
\newblock \href {http://dx.doi.org/10.1016/j.jctb.2007.06.007}
  {\path{doi:10.1016/j.jctb.2007.06.007}}.
\newline\urlprefix\url{https://doi.org/10.1016/j.jctb.2007.06.007}

\bibitem{DBLP:journals/jct/ChudnovskyS08e}
M.~Chudnovsky, P.~D. Seymour,
  \href{https://doi.org/10.1016/j.jctb.2008.03.002}{Claw-free graphs. v. global
  structure}, J. Comb. Theory, Ser. {B} 98~(6) (2008) 1373--1410.
\newblock \href {http://dx.doi.org/10.1016/j.jctb.2008.03.002}
  {\path{doi:10.1016/j.jctb.2008.03.002}}.
\newline\urlprefix\url{https://doi.org/10.1016/j.jctb.2008.03.002}

\bibitem{DBLP:journals/jct/ChudnovskyS10}
M.~Chudnovsky, P.~D. Seymour,
  \href{https://doi.org/10.1016/j.jctb.2010.04.005}{Claw-free graphs {VI.}
  colouring}, J. Comb. Theory, Ser. {B} 100~(6) (2010) 560--572.
\newblock \href {http://dx.doi.org/10.1016/j.jctb.2010.04.005}
  {\path{doi:10.1016/j.jctb.2010.04.005}}.
\newline\urlprefix\url{https://doi.org/10.1016/j.jctb.2010.04.005}

\bibitem{DBLP:journals/jct/ChudnovskyS12a}
M.~Chudnovsky, P.~D. Seymour,
  \href{https://doi.org/10.1016/j.jctb.2012.07.005}{Claw-free graphs. {VII.}
  quasi-line graphs}, J. Comb. Theory, Ser. {B} 102~(6) (2012) 1267--1294.
\newblock \href {http://dx.doi.org/10.1016/j.jctb.2012.07.005}
  {\path{doi:10.1016/j.jctb.2012.07.005}}.
\newline\urlprefix\url{https://doi.org/10.1016/j.jctb.2012.07.005}

\bibitem{CaiC15incompressibility}
L.~Cai, Y.~Cai, Incompressibility of {H}-free edge modification problems,
  Algorithmica 71~(3) (2015) 731--757.
\newblock \href {http://dx.doi.org/10.1007/s00453-014-9937-x}
  {\path{doi:10.1007/s00453-014-9937-x}}.

\bibitem{crespelle2020survey}
C.~Crespelle, P.~G. Drange, F.~V. Fomin, P.~A. Golovach, A survey of
  parameterized algorithms and the complexity of edge modification, arXiv\href
  {http://arxiv.org/abs/2001.06867} {\path{arXiv:2001.06867}}.

\bibitem{CyganPPLW17}
M.~Cygan, M.~Pilipczuk, M.~Pilipczuk, E.~J. van Leeuwen, M.~Wrochna,
  \href{https://doi.org/10.1007/s00224-016-9689-x}{Polynomial kernelization for
  removing induced claws and diamonds}, Theory Comput. Syst. 60~(4) (2017)
  615--636.
\newblock \href {http://dx.doi.org/10.1007/s00224-016-9689-x}
  {\path{doi:10.1007/s00224-016-9689-x}}.
\newline\urlprefix\url{https://doi.org/10.1007/s00224-016-9689-x}

\bibitem{DBLP:journals/combinatorica/HammerS81}
P.~L. Hammer, B.~Simeone, \href{https://doi.org/10.1007/BF02579333}{The
  splittance of a graph}, Comb. 1~(3) (1981) 275--284.
\newblock \href {http://dx.doi.org/10.1007/BF02579333}
  {\path{doi:10.1007/BF02579333}}.
\newline\urlprefix\url{https://doi.org/10.1007/BF02579333}

\bibitem{DBLP:journals/dam/NatanzonSS01}
A.~Natanzon, R.~Shamir, R.~Sharan,
  \href{https://doi.org/10.1016/S0166-218X(00)00391-7}{Complexity
  classification of some edge modification problems}, Discret. Appl. Math.
  113~(1) (2001) 109--128.
\newblock \href {http://dx.doi.org/10.1016/S0166-218X(00)00391-7}
  {\path{doi:10.1016/S0166-218X(00)00391-7}}.
\newline\urlprefix\url{https://doi.org/10.1016/S0166-218X(00)00391-7}

\bibitem{mckayGraphs}
B.~McKay, Graphs, https://users.cecs.anu.edu.au/~bdm/data/graphs.html (accessed
  June 11, 2017).

\bibitem{book:pa}
M.~Cygan, F.~V. Fomin, L.~Kowalik, D.~Lokshtanov, D.~Marx, M.~Pilipczuk,
  M.~Pilipczuk, S.~Saurabh,
  \href{https://doi.org/10.1007/978-3-319-21275-3}{Parameterized Algorithms},
  Springer, 2015.
\newblock \href {http://dx.doi.org/10.1007/978-3-319-21275-3}
  {\path{doi:10.1007/978-3-319-21275-3}}.
\newline\urlprefix\url{https://doi.org/10.1007/978-3-319-21275-3}

\bibitem{DBLP:journals/disopt/KratschW13}
S.~Kratsch, M.~Wahlstr{\"{o}}m,
  \href{https://doi.org/10.1016/j.disopt.2013.02.001}{Two edge modification
  problems without polynomial kernels}, Discret. Optim. 10~(3) (2013) 193--199.
\newblock \href {http://dx.doi.org/10.1016/j.disopt.2013.02.001}
  {\path{doi:10.1016/j.disopt.2013.02.001}}.
\newline\urlprefix\url{https://doi.org/10.1016/j.disopt.2013.02.001}

\bibitem{AravindSS16}
N.~R. Aravind, R.~B. Sandeep, N.~Sivadasan,
  \href{https://doi.org/10.1007/978-3-662-49529-2\_7}{Parameterized lower
  bounds and dichotomy results for the {NP}-completeness of {H}-free edge
  modification problems}, in: Proc. {LATIN} 2016, 2016, pp. 82--95.
\newblock \href {http://dx.doi.org/10.1007/978-3-662-49529-2\_7}
  {\path{doi:10.1007/978-3-662-49529-2\_7}}.
\newline\urlprefix\url{https://doi.org/10.1007/978-3-662-49529-2\_7}

\bibitem{DBLP:journals/algorithmica/AravindSS17}
N.~R. Aravind, R.~B. Sandeep, N.~Sivadasan,
  \href{https://doi.org/10.1007/s00453-016-0215-y}{On polynomial kernelization
  of $\mathcal{H}$-free edge deletion}, Algorithmica 79~(3) (2017) 654--666.
\newblock \href {http://dx.doi.org/10.1007/s00453-016-0215-y}
  {\path{doi:10.1007/s00453-016-0215-y}}.
\newline\urlprefix\url{https://doi.org/10.1007/s00453-016-0215-y}

\end{thebibliography}

\end{document}